\documentclass[runningheads]{llncs}

\usepackage[appendix=append,bibliography=common]{apxproof}

\renewcommand{\appendixsectionformat}[2]{Proofs and Additional Material for \S#1\ (#2)}

\usepackage{amsmath}

\usepackage{hyperref}

\usepackage{color}
\renewcommand\UrlFont{\color{blue}\rmfamily}
\urlstyle{rm}

\usepackage[T1]{fontenc}

\usepackage[utf8]{inputenc}
\usepackage[linesnumbered, ruled]{algorithm2e} %
\usepackage{subcaption}
\usepackage{graphicx}
\usepackage{orcidlink}
\usepackage{amssymb}
\usepackage{cleveref}

\usepackage{tikz}
\usetikzlibrary{calc,matrix,arrows,shapes,automata,backgrounds,petri,decorations.pathreplacing,babel}

\usetikzlibrary{arrows}
\usetikzlibrary{positioning}
\usetikzlibrary{automata}
\usetikzlibrary{shapes.geometric}
\tikzstyle{anode}=[rectangle, draw]
\tikzstyle{enode}=[diamond, shape aspect=2, draw]
\tikzset{
  invisible/.style={opacity=0},
  alt/.code args={<#1>#2#3}{%
    \alt<#1>{\pgfkeysalso{#2}}{\pgfkeysalso{#3}} 
  },
  visible on/.style={alt={#1{}{invisible}}},
}
\usetikzlibrary{calc,shapes.callouts,shapes.arrows}
\usepackage{tikz-cd}
\usetikzlibrary{decorations.pathmorphing}

\usepackage{pgfplots}
\usepackage{mathtools}
\usepackage{nicefrac}
\usepackage{ifthen}

\allowdisplaybreaks

\newtheoremrep{theorem}{Theorem}[section]
\newtheoremrep{maintheorem}[theorem]{Main Theorem}
\newtheoremrep{corollary}[theorem]{Corollary}
\newtheoremrep{lemma}[theorem]{Lemma}
\newtheoremrep{proposition}[theorem]{Proposition}
\newtheoremrep{definition}[theorem]{Definition}

\newtheorem{assumption}{Assumption}

\makeatletter
\def\@axp@newtheorem#1#2#3#4{%
\ifx\relax#4\relax
\ifx\relax#2\relax
\spnewtheorem{#1}{#3}{\itshape}{\normalfont}%
\else
\spnewtheorem{#1}[#2]{#3}{\itshape}{\normalfont}%
\fi
\else
\spnewtheorem{#1}{#3}[#4]{\itshape}{\normalfont}%
\fi
}
\makeatother

\newtheoremrep{remark}[theorem]{Remark}
\newtheoremrep{example}[theorem]{Example}

\crefname{assumption}{assumption}{assumptions}
\Crefname{assumption}{Assumption}{Assumptions}
\crefname{maintheorem}{theorem}{theorems}
\Crefname{maintheorem}{Theorem}{Theorems}

\newcommand{\R}{\mathbb{R}} %
\newcommand{\N}{\mathbb{N}}
\newcommand{\Z}{\mathbb{Z}}
\newcommand{\Q}{\mathbb{Q}}
\newcommand{\C}{\mathbb{C}}
\newcommand{\Rp}{\R_{\geq0}}
\newcommand{\fix}[1]{\mathrm{Fix}(#1)}
\newcommand{\id}{\mathrm{id}}
\newcommand{\E}{\mathbb{E}}
\renewcommand{\P}{\mathbb{P}}

\newcommand{\like}{witness-bounded}
\newcommand{\witness}{witness}

\DeclareMathOperator{\pow}{\mathcal{P}} %

\newcommand{\app}{\mathrm{a}}
\renewcommand{\epsilon}{\varepsilon}

\newcommand{\p}[2]{\ensuremath{({#1})^{#2}}}

\newcommand{\minf}[1]{\textstyle{\min_{#1}}}
\newcommand{\maxf}[1]{\textstyle{\max_{#1}}}

\newcommand{\rank}[1][]{\ensuremath{\mathit{r}\ifthenelse { \equal {#1} {} }{}{({#1})}}}

\newcommand{\constf}[1]{\ensuremath{\overline{#1}}}

\newcommand{\merge}{r}

\newcommand{\Tr}[4][]{\ensuremath{T{#1}({#2},{#3},{#4})}}
\newcommand{\MTr}[3][]{\ensuremath{T{#1}({#2},{#3})}}

\newcommand{\opts}[1]{\ensuremath{v^*_{#1}}}
\newcommand{\optsa}[1]{\ensuremath{q^*_{#1}}}

\newcommand{\reindex}[1]{\ensuremath{{#1}^\bullet}}

\newcommand\restr[2]{{
  \left.\kern-\nulldelimiterspace
  #1
  \vphantom{\big|}
  \right|_{#2}
  }}

\definecolor{dmagenta}{rgb}{0.81,0,0.81}
\definecolor{dcyan}{rgb}{0,0.6,0.6}
\definecolor{dgreen}{rgb}{0,0.6,0}

\newcommand{\dgreen}{\color{dgreen}}
\newcommand{\blue}{\color{blue}}
\newcommand{\red}{\color{red}}
\newcommand{\dmagenta}{\color{dmagenta}}
\newcommand{\dcyan}{\color{dcyan}}

\usepackage{csquotes}

\newcommand{\cut}[1]{{\color{dgreen}#1}}
\newcommand{\uncut}[1]{{\color{blue}#1}} 

\begin{document}

\title{Approximating Fixpoints\\
  of Approximated Functions}

\author{Paolo Baldan\inst{1}\orcidlink{0000-0001-9357-5599}
  \and
  Sebastian Gurke\inst{2}\orcidlink{0009-0008-4343-1384}
  \and
  Barbara K\"onig\inst{2}\orcidlink{0000-0002-4193-2889}
  \and
  Tommaso Padoan\inst{3}\orcidlink{0000-0001-7814-1485}
  \and
  Florian Wittbold\inst{2}\orcidlink{0000-0001-8307-503X}
\institute{Dipartimento di Matematica ``Tullio Levi-Civita'', \\Universit\`a di Padova\\\email{baldan@math.unipd.it}
  \and
  Universit\"at Duisburg-Essen\\
  \email{\{sebastian.gurke,barbara\_koenig,florian.wittbold\}@uni-due.de}
  \and
  Universit\`a di Trieste\\
  \email{tommaso.padoan@units.it}}
}
\authorrunning{P. Baldan, S. Gurke, B. K\"onig, T. Padoan, and F. Wittbold}

\maketitle              %

\begin{abstract}
  Fixpoints are ubiquitous in computer science and when dealing with
  quantitative semantics and verification one often considers least
  fixpoints of (higher-dimensional) functions over the non-negative
  reals.
  We show how to approximate the least fixpoint of such functions,
  focusing on the case in which they are not known precisely, but
  represented by a sequence of approximating functions that converge
  to them.
  We concentrate on monotone and non-expansive functions, for which
  uniqueness of fixpoints is not guaranteed and standard fixpoint
  iteration schemes might get stuck at a fixpoint that is not the
  least.
  Our main contribution is the identification of an iteration scheme,
  a variation of Mann iteration with a dampening factor, which, under
  suitable conditions, is shown to guarantee convergence to the least
  fixpoint of the function of interest.
  We then argue that these results are relevant in the context of
  model-based reinforcement learning for Markov decision processes,
  showing how the proposed iteration scheme instantiates and allows us
  to derive convergence to the optimal expected return.
  More generally, we show that our results can be used to iterate to
  the least fixpoint almost surely for systems where the function of
  interest can be approximated with given probabilistic error bounds,
  as it happens for probabilistic systems, such as simple stochastic games,
  which can be explored via sampling.

  \keywords{Fixpoints, Approximation, Mann iteration, MDPs}
\end{abstract}

\section{Introduction}
\label{se:intro}

Fixpoints are fundamental in computer science as they
arise as a natural way of providing a meaning to inductive and
recursive definitions. When dealing with systems or programming
languages embodying quantitative aspects, such as probability, time, or
cost, we are often led to consider fixpoints of functions
associating states with real values.
Think, e.g., of the semantics of
iterative constructs of a probabilistic language, behavioural metrics
for probabilistic systems and model checking of quantitative logics.
Fixpoints also often
occur in optimization methods where one 
typically determines solutions
to a problem
that are
stable under suitable transformations. For instance, in
probabilistic systems such as finite
Markov decision processes (MDPs)~\cite{Bell:MDP}, the
expected payoff or the likelihood of satisfying a property can be
characterized as a (least) fixpoint, from which one can then derive
optimal policies.

The aim of this paper is to develop a theory of fixpoints for
functions that are not completely known but can be approximated.
Estimating fixpoints of unknown functions is a common task in computer
science for several reasons. For instance, model checking quantitative logics on probabilistic systems or verifying probabilistic programs with nested loops typically leads to the computation of nested fixpoints, which introduces subtle difficulties and requires a solid theory of approximation: since fixpoints might be only approximated, in the computation of an outer fixpoint, one has to resort to an approximation of an inner fixpoint. 

Moreover, in the aforementioned optimization setting, a standard framework is the one in which the parameters of a dynamic system of interest (e.g. probabilities, rewards, costs) are partially unknown  and can only be estimated by experimenting with the system. This is exactly what happens in reinforcement learning,
a branch of machine learning that is intended to provide methods for
learning optimal policies for an agent in order to maximize rewards
in an unknown dynamic environment.
Correctness guarantees for such
learning methods are necessary for verifying programs operating in uncertain environments (e.g., controllers in hybrid systems).

We exemplify our motivations by
focusing on
MDPs and (a variant of) 
the Dyna-Q
algorithm~\cite{s:dyna-integrated-architecture,s:planning-incremental-dyn-prog,kaelbling1996reinforcement}.
Consider the MDP $M$ in \Cref{fig:mdp-intro} with state set
$S = \{s_1,s_2,s_3,s_4\}$.  In each state $s\in S$, the agent can
choose an action $a$ from an action set $A = \{b,c\}$.  Performing an action $a\in A$ in state $s$ yields a successor state $s'$ with
a probability $\Tr[]{s}{a}{s'}$ and results in reward $R(s,a,s')$ (in
the picture, arrows are labelled ${\blue \Tr[]{s}{a}{s'}}\mid {\dgreen
  R(s,a,s')}$; if a
reward is $0$, the vertical line and the number 0 are omitted). Rewards can be discounted by a factor $\gamma < 1$ so
that rewards obtained in the future are valued less than immediate
rewards.

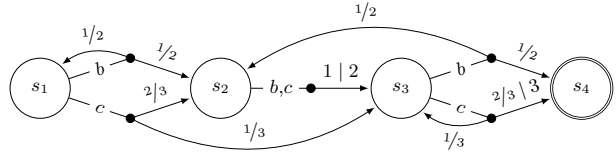
\begin{figure}[t]
  \centering
\clearpage{}%
\scalebox{0.8}{
\begin{tikzpicture}[every node/.style={draw}]
  \tikzstyle{action}=[>=latex, nodes={font=\footnotesize,sloped,draw=none},fill,draw,black]
  \path[shape=circle, minimum size=1cm]
    (0,0) node(s1)[]{$s_1$}
    (3,0) node(s2)[]{$s_2$}
    (6,0) node(s3)[]{$s_3$}
    (9,0) node(s4)[accepting]{$s_4$};

  \path[action] (1.5,0.5) circle (2pt)
    edge node[fill=white]{$b$}  (s1)
    edge [post] node[above]{$\blue\nicefrac{1}{2}$} (s2)
    edge [post, bend right] node[above]{$\blue\nicefrac{1}{2}$} (s1);
  \path[action] (1.5,-0.5) circle (2pt)
    edge node[fill=white]{$c$} (s1)
    edge [post] node[above]{$\blue \nicefrac{2}{3}$} (s2)
    edge [bend right, post] node[above]{$\blue \nicefrac{1}{3}$} (s3);
  \path[action] (4.5,0) circle (2pt)
    edge node[fill=white]{$b$,$c$} (s2)
    edge [post] node[above]{${\blue 1} \mid {\dgreen 2}$} (s3);
  \path[action] (7.5,0.5) circle (2pt)
    edge node[fill=white]{$b$} (s3)
    edge [post] node[above]{$\blue\nicefrac{1}{2}$} (s4)
    edge [post, bend right] node[above]{$\blue\nicefrac{1}{2}$} (s2);
  \path[action] (7.5,-0.5) circle (2pt)
    edge node[fill=white]{$c$} (s3)
    edge [post] node[above]{${\blue \nicefrac{2}{3}} \mid {\dgreen 3}$} (s4)
    edge [bend left, post] node[below]{$\blue \nicefrac{1}{3}$} (s3);
\end{tikzpicture}
}

\clearpage{}%

  \caption{A Markov decision process}
  \label{fig:mdp-intro}
\end{figure}

A typical aim is to determine a policy for the agent which maximises
the expected return.
The expected return (also called payoff or total reward) $q(s,a)$ at state $s$ taking action $a$ can be naturally expressed by means of a
recursive equation, the so-called Bellman equation
\begin{equation}
  \label{eq:bellman-ex}
  q(s,a) = \sum_{s'\in S}\Tr[]{s}{a}{s'}(R(s,a,s')+\gamma\cdot \max_{a'\in A}q(s',a'))
\end{equation}
stating that $q(s,a)$ coincides with the (discounted) expected returns
based on the successor states.  If we denote by
$g_M\colon \R^{S \times A} \to \R^{S \times A}$ the operator
associated with the MDP, defined by
$g_M(q)(s,a) = \sum_{s'\in
  S}\Tr[]{s}{a}{s'}(R(s,a,s')+\gamma\cdot\max_{a'\in A}q(s',a'))$, the
equation reduces to $q = g_M(q)$ and the maximal expected reward is
the least fixpoint of $g_M$ (unique when $\gamma < 1$ and thus $g_M$
contractive).

Various techniques have been devised for determining such
fixpoints. However, in reinforcement learning, the MDP describes the
interaction with an environment that is unknown or only partially
known. In particular, the probabilities $\Tr[]{s}{a}{s'}$ 
of arriving in $s'$ after choosing action $a$ in state $s$
can only be
estimated, e.g., by sampling.
In fact, a natural approach consists in
interacting with the model and record how
many times one arrives in each state $s'$, after choosing action $a$
in $s$.  If we denote by $N(s,a,s')$ this number, then $\Tr[]{s}{a}{s'}$
can be estimated as $N(s,a,s') / \sum_{s''} N(s,a,s'')$.  Clearly, as
we proceed and the number of interactions increases, we can expect to
obtain better and better approximations of the MDP.
Various reinforcement learning algorithms, called
\emph{model-based}, compute (optimal)
policies by estimating the probabilities in such a way.
For instance,
the Dyna-Q algorithm starts from arbitrary values. At each step,
the model is sampled and updated, and then also the values $q(s,a)$ are updated by taking
a weighted sum of the previously computed
value and the ($\gamma$-discounted) expectation of the
return according to 
equation~\eqref{eq:bellman-ex}
(either for
selected pairs or for all pairs, the variant that we consider here)
according to the following schema:
\begin{equation}
  \small
  \label{eq:dyna}
  q(s,a) := (1-\alpha)\cdot q(s,a)+\alpha\cdot \sum_{s'\in S}\Tr[]{s}{a}{s'}(R(s,a,s')+\gamma\cdot\max_{a'\in A}q(s',a'))
\end{equation}
which, using
function $g_M$,
becomes
$q := (1-\alpha) \cdot q + \alpha \cdot g_M(q)$. The weighted sum makes
the update more conservative, avoiding large oscillations.  The
parameter $\alpha$ can be chosen in order to fine-tune this behaviour:
a value of $\alpha$ closer to $0$ gives more relevance to the past
knowledge, while a value of $\alpha$ closer to
$1$ puts more emphasis on what has been learned up to the last step.
As the number of samplings increases, better approximations of the MDP are obtained, leading to better $q$-values.
At each step, the (current) best policy is obtained by choosing the
action with the highest $q$-value.

Several other model-based
reinforcement learning
algorithms use
variations of this scheme, where the updates (determining the value
vector and the optimal policy) and the exploration of the model are
interleaved \cite{ma:prioritized-sweeping,kaelbling1996reinforcement}.
Model-free
versions, such as Q-learning~\cite{WD:QL} or SARSA~\cite{r:sarsa},
do not explicitly build a model of the MDP.

The schema~\eqref{eq:dyna} above resembles iteration
algorithms for determining the fixpoint of the function
$g_M\colon \R^{S\times A}\to \R^{S\times A}$ associated to the MDP.
When $\alpha=1$, we obtain Kleene iteration which converges to the
least fixpoint for monotone functions over a complete lattice,
starting the iteration from the least element.  In general, it
corresponds to a Mann
iteration~\cite{b:iterative-approximation-fixed-points} which, under
suitable hypotheses, converges to a fixpoint of a continuous function
starting from any initial state.
The twist here is that the function $f$ is not fixed, but can only be
approximated, since the probabilities $\Tr[]{s}{a}{s'}$ (and possibly also the
rewards $R(s,a,s')$) change and will be updated during the iteration.

\paragraph{Aim of the paper.}
Our aim is to develop a fixpoint theory of approximated functions of
which the scenario above is a special case.  There is a large body of
work guaranteeing the existence of fixpoints for certain classes of
functions and providing methods for computing them.  This includes,
for instance, Banach's fixpoint theorem for contractive functions over
complete metric spaces~\cite{Banach1922}, Knaster-Tarski theorem for
monotone functions over complete lattices~\cite{t:lattice-fixed-point}
that is frequently employed in computer science, and Kleene
iteration~\cite{CC:CCVTFP}.

There is much less work on how to compute (least) fixpoints for a function
which is not known exactly, but
can only be obtained by a sequence of subsequently better
approximations.
As we will see, developing such a theory is relatively simple when the
functions of interest are contractions (or power contractions) whose
(repeated) application decreases the distance of two vectors by a factor
$\gamma<1$, by relying on Banach's fixpoint theorem.
This is, for instance, true for reinforcement learning in the discounted
case (cf. the proof of correctness of Q-learning \cite{WD:QL} based on
stochastic approximation theory \cite{bt:neuro-dynamic-programming}).
However, in this paper, we are also and, actually, mostly interested in
the non-discounted case ($\gamma=1$) where the functions are just non-expansive
(the distance of two vectors after function application is
bounded by their original distance) and the aim is to determine their least fixpoints.
We remark that working on the non-discounted case is sometimes the only
appropriate choice, e.g., when the reward represents the likelihood of
eventually satisfying a given property.

We will
concentrate on functions that are monotone and
non-expansive with respect to the supremum norm.  Besides policy
computation for MDPs, a number of other applications consider
value vectors which can be characterized as (least)
fixpoints of functions of this kind, e.g., computing bisimilarity
metrics~\cite{bblm:on-the-fly-exact-journal}, solving (simple)
stochastic
games~\cite{c:complexity-stochastic-game,kkkw:value-iteration-ssg}, or
model-checking quantitative logics on probabilistic systems
\cite{hk:quantitative-analysis-mc,ms:lukasiewicz-mu}.
Other than for
contractive functions, there is no guarantee of uniqueness of the
fixpoint, making it more difficult to approximate the least fixpoint
by methods different from Kleene iteration.  And, as we will see, also
Kleene iteration fails if the function under consideration can be only
approximated.

Hence, the problem statement can be formally expressed as follows:

\smallskip

\fbox{\parbox{0.9\textwidth}{\textsc{Approximation of least fixpoint:}
    Given a sequence of (higher-dimensional) non-expansive and
    monotone functions $f_n\colon \Rp^d\to \Rp^d$, $n\in\N$, that
    converges pointwise to $f\colon \Rp^d\to \Rp^d$, compute a
    sequence $x_0,x_1,x_2,\dots \in \Rp^d$ that converges to the least
    fixpoint of $f$ (if it exists).}}

\smallskip

The main contribution in this paper is the definition of an iteration
scheme, referred to as \emph{dampened Mann iteration scheme} for
(higher-dimensional) functions on the non-negative reals that
converges to the least fixpoint from every starting point.  We
identify conditions which ensure convergence, first for known
functions and then for functions which can only be approximated, as
required in the problem statement, allowing us to handle the MDP
scenario outlined before.

\paragraph*{Least fixpoints of simple functions are not easy to compute.}
As a warm up, we consider some simple examples showing that computing
the least fixpoint of an approximated function is non-trivial.  Let
$f\colon \Rp \to \Rp$ be a non-expansive and monotone function over the
non-negative reals.  A non-expansive function
is also continuous.  Moreover, if it has a fixpoint, it has a least
fixpoint $\mu f$ which can be computed by Kleene iteration, that is,
the sequence $0,f(0),f^2(0),\dots$ converges to $\mu f$.  Now, assume
that $f$ is not known exactly, but can only be approximated via a
sequence of functions $f_n$ converging to $f$.  One might
think that Kleene iteration can be easily adapted to deal with this
situation.  We provide some examples showing that, instead, non-trivial problems arise.

\begin{example}[Non-continuity of the least fixpoint operator]
  \label{it:simple-1}
  Take 
  $f,f_n\colon [0,1] \to [0,1]$, $n\in\N$, with
  $f_n(x) = \frac{1}{n} + (1-\frac{1}{n})\cdot x$, $f(x) = x$, as
  depicted in \Cref{fig:motivation-f}.

  \begin{figure}[t]
    \centering
    \begin{subfigure}[b]{.45\textwidth}
      \centering
      \begin{tikzpicture}[scale=2]
        \draw[->, semithick] (-0.05,0) -- (1.1,0) node[right] {$x$};
        \foreach \x in {0,0.5,1} \draw (\x,0.02) -- +(0,-0.04)
        node[below] {\x};
        \draw[->, semithick] (0,-0.05) -- (0,1.1) node[above] {$f(x)$};
        \foreach \y in {0,0.5,1} \draw (-0.02,\y) -- +(0.04,0)
        node[left] {\y};
        \draw[color=dgreen, domain=0:1, thick, samples=50]
        plot (\x, {\x});
        \draw[color=blue, domain=0:1, thick, samples=50]
        plot (\x, {1});
        \draw[color=blue, domain=0:1, thick, samples=50]
        plot (\x, {0.5+0.5*\x});
        \draw[color=blue, domain=0:1, thick, samples=50]
        plot (\x, {0.33+0.66*\x});
        \draw[color=blue, domain=0:1, thick, samples=50]
        plot (\x, {0.25+0.75*\x});
      \end{tikzpicture}
      \caption{Sequence $(f_n)$ converging to $f$}
      \label{fig:motivation-f}
    \end{subfigure}
    \begin{subfigure}[b]{.45\textwidth}
      \centering
      \begin{tikzpicture}[scale=2]
        \draw[->, semithick] (-0.05,0) -- (1.1,0) node[right] {$x$};
        \foreach \x in {0,0.5,1} \draw (\x,0.02) -- +(0,-0.04)
        node[below] {\x};
        \draw[->, semithick] (0,-0.05) -- (0,1.1) node[above] {$g(x)$};
        \foreach \y in {0,0.5,1} \draw (-0.02,\y) -- +(0.04,0)
        node[left] {\y};
        \draw[color=dgreen, domain=0:1, thick, samples=50]
        plot (\x, {\x});
        \draw[color=blue, domain=0:1, thick, samples=50]
        plot (\x, {1});
        \draw[color=blue, domain=0:0.5, thick, samples=25]
        plot (\x, {0.5});
        \draw[color=blue, domain=0.5:1, thick, samples=25]
        plot (\x, {\x});
        \draw[color=blue, domain=0:0.33, thick, samples=50]
        plot (\x, {0.33});
        \draw[color=blue, domain=0.33:1, thick, samples=50]
        plot (\x, {\x});
      \end{tikzpicture}
      \caption{Sequence $(g_n)$ converging to $g$}
      \label{fig:motivation-g}
    \end{subfigure}
    \caption{}
  \end{figure}

  The least (and only) fixpoint of each approximation $f_n$ is $1$,
  while the least fixpoint of $f$ is $0$.  Hence,
  $\lim_{n\to \infty} \mu f_n = 1 \neq 0 = \mu f$, i.e., the least
  fixpoint operator is non-continuous.  In particular, trying to
  obtain an approximation of the least fixpoint of $f$ by naively
  performing a Kleene iteration on any fixed approximation $f_n$ is
  doomed to fail, as we will always converge to $1$ instead of $0$.
\end{example}

\begin{example}[Kleene over improving approximations fails]
  \label{it:simple-2}
  Take $g,g_n\colon [0,1]\to[0,1]$
  with $g_n(x) = \frac{1}{n}$ if $x\le \frac{1}{n}$, $g_n(x) = x$
  otherwise, and $g(x) = x$, as depicted in \Cref{fig:motivation-g}.
  In this case, $\lim_{n\to \infty} \mu g_n = 0 = \mu g$, hence the
  sequence of least fixpoints of the approximations will converge to
  the least fixpoint of $g$.
  But imagine that we want to reuse intermediate results once we have
  obtained a better approximating function.
  This is what is done in reinforcement learning where one alternates
  sampling and iteration of the estimated value function.
  In this scenario, this would mean that we have already obtained an
  approximation of $\mu g$, based on a Kleene iteration with some $g_n$.
  This will, however, over-estimate the least fixpoint of $g$ by
  $\nicefrac{1}{n}$
  and further iterating on the previously obtained approximation of $\mu g$  with ``better'' functions
  $g_m$ ($m>n$) will never decrease it.
\end{example}
\begin{example}[Approximations might not have fixpoints]
  \label{it:no-fixpoints}
  Let
  $h,h_n\colon \Rp\to \Rp$ where $h_n(x)=x+\nicefrac{1}{n}$ and
  $h(x)=x$.  Here, functions $h_n$ have no fixpoint while the
  least fixpoint of $h$ is $0$.  We want to be able to compute
  $\mu h$ even in such cases.
\end{example}

\paragraph*{Non-expansive functions and MDPs.}
Before identifying potential solutions to the problem illustrated so
far, we discuss how it instantiates to the case of MDPs when
functions are approximated by sampling, a possible and important
application scenario. 

First, estimation by sampling, as explained above, has the property
that whenever a transition probability is $0$ it will never be
over-estimated. (If there is no possibility of taking a
transition, we will never do so.) Based on this, we will argue in
the following that the sequence of (least) fixpoints of approximated
MDPs converges to the least fixpoint of the true MDP value
function.

Second, MDPs with discount factor ($\gamma<1$) induce contractive
value functions.  But in fact, another subclass of MDPs -- MDPs
without end-components -- has the property of inducing power
contractions.  An end-component~\cite{bk:principles-mc} is a subset
$S'\subseteq S$ of states of an MDP for which the agent has a policy
which stays within $S'$ with probability $1$, that is for every
$s\in S'$ there exists an action $a$ such that all $s'$ with
$\Tr[]{s}{a}{s'} > 0$ are contained in $S'$.
Alternatively, one can
characterize MDPs without end-components as those MDPs where each
policy induces a Markov chain in which every state can reach a final
state with non-zero probability.
An example for an MDP with an end
component is given in~\Cref{fig:mdp-end-component}.  In this case,
$S' = \{s_2,s_3\}$ forms an end-component for which the agent has a
policy, which consists in choosing always action $b$, that will never
leave~$S'$.

For MDPs without end-components, it can be shown that there exists a
constant $n$ (the maximal length of a shortest path to a final state)
such that the $n$-th power of the value function is contractive (such
functions are called power contractions).
We will see that this power contraction property of MDPs without end
components simplifies proving the convergence of the fixpoint
iteration.

\begin{figure}[b]
  \centering
\clearpage{}%
\scalebox{0.9}{
\begin{tikzpicture}[every node/.style={draw}]
  \tikzstyle{action}=[>=latex, nodes={font=\footnotesize,sloped,draw=none},fill,draw,black]
  \path[shape=circle, minimum size=1cm]
    (0,0) node(s1)[]{$s_1$}
    (3,0) node(s2)[]{$s_2$}
    (6,0) node(s3)[]{$s_3$}
    (9,0) node(s4)[accepting]{$s_4$};

  \path[action] (-1,0) circle (2pt)
    edge [bend left] (s1)
    edge [bend right, post] (s1);
  \path[action] (1.5,0) circle (2pt)
    edge node[fill=white]{$a$} (s2) 
    edge [post] node[above]{$1\mid 2$} (s1);
  \path[action] (4.5,0.5) circle (2pt)
    edge node[fill=white]{$b$} (s2)
    edge [post] node[above]{$1 \mid 0$} (s3);
  \path[action] (4.5,-0.5) circle (2pt)
    edge node[fill=white]{$b$} (s3)
    edge [post] node[below]{$1 \mid 0$} (s2);
  \path[action] (7.5,0) circle (2pt)
    edge node[fill=white]{$a$} (s3)
    edge [post] node[above]{$1 \mid 1$} (s4);
\end{tikzpicture}
} 

\clearpage{}%

  \caption{An MDP with an end-component}
  \label{fig:mdp-end-component}
\end{figure}
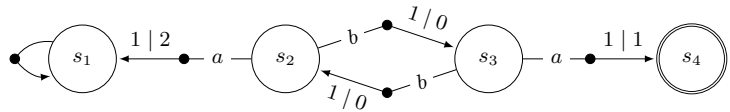

However, general MDPs possibly including end-components induce
non-ex\-pan\-sive fixpoint functions that are not contractive (unless the
fixpoint equation is equipped with a discount $\gamma<1$).
In particular, such MDPs have non-unique solutions.
In the MDP in~\Cref{fig:mdp-end-component}, one can assign any value
larger or equal to $2$ to $s_2,s_3$, obtaining a non-minimal fixpoint.
Intuitively, $s_2,s_3$ form a ``vicious cycle'' where $s_2,s_3$ convince
each other erroneously that they can get an expected return strictly larger
than $2$, because the other ``says so''.
In fact, the maximal payoff achievable from both $s_2,s_3$ is $2$ (from
$s_3$ choose $b$ and go to $s_2$ and from there choose $a$ to go to $s_1$
and collect the reward).
The presence of end-components is known to cause complications, see for
instance~\cite{bccfkkpu:mdps-learning,hm:interval-iteration-mdps}.

The issues with end-components are even more pronounced when we can only
approximate the parameters of an MDP.
In this case, we might over-approximate the reward $2$ in early stages
(assume that the reward arises from a probabilistic choice), such that
$s_2,s_3$ might think that they can, for instance, obtain reward $3$.
Then, in later stages, the approximation will get better and the promise
of reward more realistic, and the reward achieved by going to $s_1$ will
approach $2$.
However, since assigning $3$ to $s_2,s_3$ also results in a fixpoint,
future iterations will never decrease this over-estimated value.
This problem is very similar to the one that arises in
Example~\ref{it:simple-2} above.
Hence, we need to develop new techniques to deal with such a scenario.

\paragraph*{A fixpoint theory for approximated functions.}
Coming back to our general problem statement, before illustrating our
proposal, we observe that some simple solutions which could come to
mind have relevant drawbacks.
As a first option,
one could think of confining the attention to contractive or power
contractive functions, where the problem mentioned above do not occur.
This, however, would restrict applicability to probabilistic systems that
have some form of stopping condition which typically produce functions of
this kind (e.g.\ MDPs without end
components) or force the use of a discount factor.
However, as explained above, MDPs \emph{with} end
components cannot be handled in this way, since the fixpoint
functions induced by such MDPs are not power contractive, but only
non-expansive and introducing a discount factor can be inappropriate.

Another option would be to restart the (Kleene) fixpoint iteration
from scratch for each newly obtained approximation $f_n$, a
technique, called \emph{resetting} in this paper, which resembles certainty
equivalent methods in reinforcement
learning~\cite{kv:certainty-equivalent}.  This is possible, in
principle, in particular when the least fixpoints of the
approximations converge to the least fixpoint of the exact function.
If this is not the case, one has to pay attention not to iterate ``too
much'', as shown by Example~\ref{it:simple-1} above, where one would
converge to $1$ (instead of $0$).  Hence, this method requires to be
able to choose appropriate iteration indices $k_n$ such that the sequence
$(f_n^{k_n}(0))$ converges to $\mu f$.  Furthermore,
even when this approach is viable, it precludes
the reuse of approximations obtained in a previous step, making it
very inefficient and inadequate in scenarios where live value estimates are required.

We propose a solution that does neither: we allow non-expansive
functions and interleave sampling and fixpoint iteration, which allows
us to continue from previously computed approximations.  The central
idea is to combine the fixpoint iteration with a dampening factor that
can be seen as a form of discount factor that ``vanishes'' over time.
It is inspired by work in~\cite{KIM200551} which, however, does not
apply to our setting since it does not deal with approximations.
Furthermore, it puts stronger restrictions on the iteration parameters and the
space where the functions are defined, preventing its use for the
applications we have in mind.

To be more precise, given a function $f$ over $d$-tuples of positive
reals we propose what we call a \emph{dampened Mann iteration}:
\begin{equation}
  \label{eq:damp-mann-intro}
  x_{n+1} = (1 - \beta_n) \cdot  (\alpha_n\cdot x_n + (1-\alpha_n)\cdot f(x_n))
\end{equation}

The genesis of this recursion scheme can be explained as follows: As a first step the right-hand side of a simple Kleene-iteration is replaced, as in the case of
Dyna-Q, with a weighted sum $\alpha \cdot x_n + (1-\alpha) \cdot f(x_n)$ for some parameter $\alpha$. Next, we allow this parameter to be potentially different in every step, i.e. for a fixed sequence $(\alpha_n)$ in the unit interval we consider the sequence generated by the recursion $x_{n+1} := \alpha_n\cdot x_n + (1-\alpha_n)\cdot f(x_n)$. Recursion schemes of this form are called Mann iterations and can in some cases be used to improve convergence properties of the generated sequence to a larger class of functions $f$ than only contractive functions. Indeed, one of the more elementary results on Mann iterations states that if $f\colon [a, b] \to [a, b]$ is any continuous function on a bounded interval, then choosing a parameter sequence $(\alpha_n)$ with $\alpha_n \to 1$ and $\sum_n (1-\alpha_n) = \infty$ guarantees that the corresponding Mann iteration always converges to some fixed point of $f$ for any choice of initial value $x_0 \in [a,b]$ \cite{BORWEIN1991112}.

There are many results that guarantee the (weak) convergence of a Mann iteration in more general spaces under more restrictive assumptions on the function $f$ and/or other assumptions on the parameter sequences \cite{b:iterative-approximation-fixed-points,alma99371219572606441}. The existing results in the literature are, however, not directly applicable to the problem we want to solve since the assumptions on the surrounding space are often too restrictive and, more importantly, since we specifically want to find the \emph{least fixed point} of a function. Note that a Mann iteration that starts on some fixed point $x_0$ of $f$ which is not the least will get stuck there forever. For the notion of a least fixed point to be well-defined we, on the other hand, will -- unlike most literature on Mann iterations we are aware of -- always %
assume that the map $f$ is monotone with respect to the pointwise order.

To be able to always guarantee the convergence to the least fixed point and also to deal with approximations later on, we add -- as a last step -- the dampening factor $(1-\beta_n)$ to the iteration scheme.

Depending on the setting, we will require conditions
on the coefficients $\alpha_n$, they must either converge to $0$
(Mann-Kleene scheme) or to a value strictly less than $1$ (relaxed
Mann-Kleene scheme).
This is combined with a sequence $(1-\beta_n)$ of dampening factors
such that $\lim_{n\to\infty} \beta_n = 0$ and $\sum_n \beta_n = \infty$.
The first condition ensures that we dampen less and less during the
iteration so that we
converge to a Mann iteration, while the second condition
guarantees that the dampening factors always have enough ``power''
left to decrease an over-approximation to the least fixpoint.

We will show that, for a fixed non-expansive and monotone function $f$,
this form of iteration converges to the least fixpoint of $f$ starting from
every point in the domain.
We will also identify sufficient conditions so that the scheme works
when in \eqref{eq:damp-mann-intro} function $f$ is replaced by $f_n$,
a sequence of approximations
converging to $f$.
The resulting theory will allow to treat the case of MDPs, possibly
non-discounted and with end-components.

For more general probabilistic systems which can be explored by
sampling, we will show how to iterate to the least fixpoint
by making sure that the functions are sampled ``fast enough'': using
the law of large numbers and Bernstein's inequality, one can
estimate the probability that the functions $f_n$ are close enough to
$f$ after some sampling.  Then, based on the Borel-Cantelli lemma, we
can give an algorithm which guarantees convergence to the least
fixpoint almost surely.

In summary, the main contributions of our work are:
\begin{itemize}
  
\item The definition of an iteration scheme, referred to as dampened
  Mann iteration, which converges to the least fixpoint of a given
  non-expansive and monotone function $f$ from any starting
  point (\Cref{se:mann}).

\item The identification of various sufficient conditions for applying
  such iteration scheme in a setting where the function $f$ is unknown and can only be
  approximated via a sequence of functions $f_n$ converging
  to it (\Cref{se:approximation}).
  
\item The instantiation of dampened Mann iteration to the
  computation of the maximal expected return for MDPs where the model is
  unknown and explored via sampling (\Cref{se:mdp}).

\item The instantiation of dampened Mann iteration to systems where
  the function of interest can be approximated with given
  probabilistic error bounds (this includes various probabilistic
  systems which can be explored via sampling, such as simple stochastic games) (\Cref{se:mann-error-algo}).

\end{itemize}

We view this work as an important step towards extending fixpoint
theory to deal with scenarios arising from machine learning and
data-driven applications.

An appendix contains full proofs of the results and some additional material.

\section{Preliminaries and Notation}
\label{se:preliminaries}

We introduce some basic notions and notation on finite-dimensional
real spaces and Markov decision processes~\cite{Bell:MDP}, that will
be our main application.

\subsection{Finite-Dimensional Real Spaces}

We denote by $\Rp$ the set of non-negative reals, by $\N_0$ and $\N$
the sets of naturals with and without $0$, respectively.
For sets $X,Y$, we will denote by $Y^X$ the set of functions
$f\colon X\to Y$.  If $X$ is finite, we will identify $Y^X$ with the
set of vectors $Y^{|X|}$.  For $x\in\R^d,$ we will write
$\p{x}{i}\in\R$ for its $i$-th component and, by abuse of notation,
we will sometimes identify $x\in\R$ with the vector
$(x,\dots,x)\in\R^d$.

We equip $\R^d$ with the \emph{supremum norm}
defined as $\|x\|=\sup\{|\p{x}{i}|\mid i\in\{1,\ldots,d\}\}$ for
$x\in\R^d$ and extended to functions $f\colon A\to B$, where $A,B\subseteq\R^d$ by
$\|f\|=\sup_{x\in A}\|f(x)\|\in[0,\infty]$.  Given $x\in\R^d$, we
denote by $|x|$ the vector in $\Rp^d$ defined by
$\p{|x|}{i} = |\p{x}{i}|$.  Note that $\|x\|=\||x|\|$.

For $x,y\in\R^d,$ we write $x\leq y$ for the pointwise (partial)
order, i.e., $\p{x}{i}\leq\p{y}{i}$ for all $i\in\{1,\ldots,d\}$.  A
function $f\in Y^X$ with $X,Y\subseteq\R^d$ is  \emph{monotone}
if, for all $x,x'\in X$, we have that $x\leq x'$ implies
$f(x)\leq f(x')$. A \emph{fixpoint} of $f\colon X \to X$ is $x \in X$ such
that $f(x) = x$. The \emph{set of fixpoints} of a function $f$ is denoted by $\fix{f}$ and, in case it exists, the \emph{least fixpoint} of $f$
is denoted by $\mu f$.

A function $f\in Y^X$ with $X,Y\subseteq\R^d$ is
$L$-Lipschitz for a constant $L\in\Rp$ if, for all $x,x'\in X$,
$\|f(x) - f(x')\| \leq L \|x-x'\|$.  The function is
\emph{non-expansive} if it is $1$-Lipschitz.  It is 
\emph{contractive} if it is $q$-Lipschitz for some $q < 1$, and it is
 a \emph{power contraction} if there is some $n\in\N$ such that
$f^n$, the $n$-fold composition of $f$, is a contraction. If
$f\colon X \to X$ is a contraction (and, more generally, power
contraction) with $X \subseteq \R^d$ closed (hence
complete), by Banach's theorem, it has a unique fixpoint given by
$\lim_{n \to \infty} f^n(x_0)$ where $x_0 \in X$ is arbitrary.

\subsection{Markov Decision Processes}
\label{se:MDP-basics}

We will focus on Markov decision processes~\cite{Bell:MDP} in a
non-discounted non-negative reward setting.

\begin{definition}[Markov decision process]
  A \emph{Markov decision process}\linebreak \emph{(MDP)} is a tuple
  $M=(S,A,T,R)$ where $S$ and $A$ are the finite sets of states and
  actions. Moreover $T\colon S \times A \times S \to [0,1]$ provides
  the probability $\Tr[]{s}{a}{s'}$ of transitioning from state
  $s\in S$ to state $s'\in S$ when action $a\in A$ is chosen, in a way
  that $\sum_{s' \in S} \Tr[]{s}{a}{s'} \in \{0,1\}$. When
  $\sum_{s' \in S} \Tr[]{s}{a}{s'} = 1$ we say that action $a$ is
  \emph{enabled} in $s$ and we write $A(s)$ for the set of actions
  enabled in $s$.
  Finally, $R\colon S\times A\times S\rightarrow\Rp$ is a non-negative
  step-wise reward function.  We let
  $F=F(M) \coloneqq \{s\in S\mid A(s)=\emptyset\}$ and call it the set
  of \emph{final states}.  The Markov decision process is  a
  Markov chain if, for all $s \in S$, we have $|A(s)| \leq 1$.
\end{definition}

The idea is that $M$ describes an interactive system that, when in
state $s$, transitions to another state $s'$ with probability
$\Tr[]{s}{a}{s'}$ based on an action $a \in A(s)$ chosen by an external agent.
Possible strategies of the agent can be described by (positional) policies
$\pi\colon S\setminus F\to A$ with $\pi(s)\in A(s)$ for any
non-final state $s\in S\setminus F$.  We denote the set of all such
policies by $\Pi(M)$.

For MDPs, one is interested in finding a policy
that optimizes the expected return.
It should be noted that
positional policies (as defined above) are sufficient for optimal
behaviour in \emph{finite} MDPs where the goal is to maximize the
non-discounted total reward~\cite{p:mdp-disc-stoch-dyn-prog}.
The essential step 
in solving this optimization problem
often lies in finding the (least) fixpoint of a so-called Bellman
operator. In reinforcement learning, one often considers the state-action-value
operator, discussed in the introduction,
whose
fixpoint gives the expected optimal return $q^*(s,a)$ at state $s$ when taking
action $a$. Alternatively, one can consider the state-value operator
$f_M$ whose fixpoint gives the expected optimal return $v^*(s)$ at state
$s$.

\begin{definition}[Bellman operator]
  \label{de:bellman}
  Given an MDP $M = (S, A, T, R)$,  the (\emph{state-value}) Bellman operator
  $f_M\colon \Rp^S\rightarrow\Rp^S$ is defined, for $v \in  \Rp^S$, by
  \begin{equation}\label{eq:state-value-operator}
    f_M(v)(s)=\max_{a\in A(s)}\sum_{s'\in S}\Tr[]{s}{a}{s'}(R(s,a,s')+v(s'))
  \end{equation}
  and the \emph{state-action-value} Bellman operator
  $g_M\colon \Rp^{S\times A}\rightarrow\Rp^{S\times A}$ is defined, for $q\in\Rp^{S \times A}$,  by\footnote{We assume $\max\emptyset\coloneqq 0$.}
  \begin{equation}\label{eq:state-action-value-operator}
    g_M(q)(s,a)=\sum_{s'\in S}\Tr[]{s}{a}{s'}(R(s,a,s')+\max_{a'\in A}q(s',a')).
  \end{equation}
\end{definition}

In contrast to the usually treated \emph{discounted} case, the
operator $f_M$ is not necessarily contractive. Hence, it might have
several fixpoints or might not have any fixpoint in $\Rp^S$,
intuitively because the expected return might sum up to infinity
(thus, to ensure the existence of a fixpoint, one should work in
  $[0,\infty]^S$ or $[0,\infty]^{S\times A}$, respectively.)
Hereafter, we will work with so-called \emph{MDPs with
  finite value}, for which a fixpoint exists. In
\Cref{ss:general-mdps}, they will be characterized as those MDPs where
positive rewards are given only outside end-components.
By a later result (\Cref{le:ex-fix}),
existence of a fixpoint for the Bellman operator implies the existence of a least fixpoint.
For an MDP with finite value $M = (S,A,T,R)$, we let
\begin{center}
  $\opts{M} = \mu f_M$
 \qquad and \qquad $\optsa{M} = \mu g_M$.
\end{center}
It can be easily seen that
\begin{align*}
  \opts{M}(s) &= \max_{a \in A(s)} \optsa{M}(s,a) \\
  \optsa{M}(s,a) &= \sum_{s'\in S}\Tr[]{s}{a}{s'}(R(s,a,s')+\opts{M}(s')).
\end{align*}

For a fixed policy $\pi\in\Pi(M)$, the MDP can be interpreted as a
Markov chain by removing all actions not prescribed by the policy.
We denote this Markov chain by $M^\pi$, the corresponding
state-value iteration functions by $f^\pi_M=f_{M^\pi}$, and its least
fixpoints by $v_M^\pi=\mu f_M^\pi$.
  Then
   , one can see that
   \[
     f_M(v)(s)=\max_{\pi\in\Pi(M)} f^\pi_M(v)(s)
   \]
  and, by existence of memoryless optimal policies,
   \[
     \opts{M}(s)=\max_{\pi\in\Pi(M)}v_M^\pi(s).
   \]

\section{Dampened Mann Iteration
  for Known Functions}
\label{se:mann}

We define an iteration scheme for approximating, with arbitrary
precision, the least fixpoint of monotone and non-expansive functions
over the non-negative reals. It is a variation of Mann
iteration~\cite{b:iterative-approximation-fixed-points} suitably
modified in order to be ``robust'' with respect to perturbations in
the computation with the introduction of a dampening
factor~\cite{KIM200551,hz:modified-mann-iteration}.  In this section,
we will focus on the case in which the function of interest is known
and identify conditions which ensure convergence to the least
fixpoint. The case in which the function can only be approximated will
be discussed in the next section.

\subsection{Dampened Mann Iteration Scheme}

We start by clarifying the class of functions of interest in the paper.

\begin{assumption}
  \label{as:setting}
  Given a \emph{closed} and \emph{convex} set $X\subseteq\Rp^d$ with $0\in X$, where
  $d \in \N_0$ is a fixed arity, let
   $f\colon X\to X$ be \emph{monotone} and \emph{non-expansive}
  with
  $\fix{f}\neq\emptyset$.
\end{assumption}

Note that, under the above assumption,
 since $X$ is closed,
if $f$ is a (power) contraction, by Banach's theorem it has a unique
fixpoint, i.e., $\fix{f}=\{\mu f\}$.  Otherwise, if $f$ is just
non-expansive, the condition $\fix{f}\neq\emptyset$ is non-trivial
(e.g., the function $x\mapsto x+1$ over $\Rp$ is monotone and
non-expansive, but it has no fixpoints).  However, when
$\fix{f}\neq\emptyset$, we can show that $f$ also admits a
least fixpoint $\mu f$.

\begin{lemmarep}[Existence of least fixpoints]
  \label{le:ex-fix}
  Let $X\subseteq\Rp^d$ be a closed set with $0\in X$ and let
  $f\colon X \to X$ be a monotone function with
  $\fix{f} \neq \emptyset$.  Then $f$ has a least fixpoint $\mu f$. Moreover, if
  $f$ is continuous then $\mu f = \sup_{n \in \N} f^n(0)$.
\end{lemmarep}

\begin{proof}
  Let $Y = \{ x \in X \mid \forall z \in \fix{f}: x \leq z \}$,
  i.e., $Y$ is the set of elements of $X$ below all fixpoints of $f$.

  Observe that $Y$ is a pointed dcpo.
  In fact, it has a bottom element $0 \in Y$.
  Moreover, given any directed set $D \subseteq Y$, its
  supremum $d = \sup D$, obtained as the pointwise supremum, is in
  $Y$.
  In fact, for all components $i = \{ 1, \ldots, d \}$, we can
  consider a sequence $(x_{i,n})$ such that
  $\sup_n \p{x_{i,n}}{i} = \p{d}{i}$.
  Using the fact that $D$ is directed we can obtain a sequence
  $(y_n)$ in $D$ such that $x_{n,i} \leq y_n$ for all
  $i \in \{1,\ldots, d\}$.
  Then $\sup_n y_n = d$ and thus $d \in X$ by closedness.
  Moreover, since each $y_n \in Y$, for all $z \in \fix{f}$, we
  have $y_n \leq z$ and thus $d = \sup_n y_n \leq z$.
  Therefore $d \in Y$, as desired.

  Additionally observe that for all $x \in Y$ we have $f(x) \in Y$.
  In fact, take $x \in Y$.
  For all $z \in \fix{f}$ it holds $x \leq z$ and thus, by monotonicity
  $f(x) \leq f(z) \leq z$.
  Hence $f(x) \in Y$.

  By the above observation, we can consider the restriction of $f$ to
  $Y$, denoted $\restr{f}{Y}\colon Y \to Y$.
  We have that $\restr{f}{Y}$ is a monotone function over a dcpo.
  By Pataraia's theorem~\cite{PAT97:CP,DP:ILFPT}, $\restr{f}{Y}$ has a
  least fixpoint, which is clearly a fixpoint of $f$ and it is the
  least by definition of $Y$.

  The last part follows immediately by the fact that whenever $f$ and
  thus $\restr{f}{Y}$ is continuous
  $\mu f = \mu \restr{f}{Y} = \sup_{n \in \N} \restr{f}{Y}^n(0) =
  \sup_{n \in \N} f^n(0)$.
  \qed
\end{proof}

Since our functions are non-expansive and thus continuous, by the lemma above, the least fixpoint can be obtained by iterating the function on $0$, according to what is often called \emph{Kleene iteration} (see, e.g.~\cite{CC:CCVTFP}).

However, as noted in the introduction, Kleene iteration is not ``robust'' with
respect to perturbations in the computation. For this reason, we
introduce a variation of a Mann iteration scheme. 

\begin{definition}[Mann scheme/iteration]
  \label{de:dampened-mann}
  A \emph{(dampened) Mann scheme} 
  $\mathcal{S}=((\alpha_n)_{n\in\N_0},(\beta_n)_{n\in\N_0})$ is a pair of parameter
  sequences in $[0,1)$ such that
  \begin{align}
    \label{eq:beta-condition-1}
    \lim_{n \to \infty} \beta_n &= 0, \\
    \label{eq:beta-condition-2}
    \sum_{n \in \mathbb{N}_0} \beta_n &= \infty 
    \quad \text{(or equivalently, $\prod_{n \in \mathbb{N}_0} (1 - \beta_n) = 0$)}.
  \end{align}
  A Mann-Scheme is called a \emph{Mann-Kleene scheme} whenever
  $\lim_{n\to\infty} \alpha_n = 0$ and a \emph{relaxed Mann-Kleene
    scheme} if $\lim_{n\to\infty} \alpha_n < 1$.
  
  Given a convex $X\subseteq\Rp^d$ with $0\in X$, a Mann scheme
  $\mathcal{S}$ defines a sequence $(T^\mathcal{S}_n)_{n\in\N_0}$ of
  operators $T^\mathcal{S}_n\colon X^X\times X \to X$ given by
  \[
    T_n^\mathcal{S}(f,x) = (1-\beta_n)\cdot\left(\alpha_n x
      +(1-\alpha_n)f(x)\right).
  \]
  Together with a sequence $(f_n)_{n\in\N_0}$
  of functions $f_n\colon X\to X$ and an initial point $x_0\in X$, it gives
  rise to a \emph{(dampened) Mann iteration}
  $\mathcal{F}=(\mathcal{S},(f_n),x_0)$, determining a sequence
  $(x_n^{\mathcal{F}})_{n\in\N_0}$ defined as
  \[
    x_0^{\mathcal{F}}=x_0, \qquad x_{n+1}^{\mathcal{F}}= T_n^{\mathcal{S}}(f_n,x_n^{\mathcal{F}}).
  \]
\end{definition}

Note that the iteration can start at any
point (not just $0$). This might look irrelevant in cases where the
function of interest is known exactly, but it will be extremely useful
when
it can only be approximated.

Intuitively, when trying to approximate the least fixpoint,
Condition~\eqref{eq:beta-condition-1} ensures that dampening
eventually reduces -- meaning that, in the long run, when we are close
to the least fixpoint of $f$, we stay close to it.  On the other
hand, Condition~\eqref{eq:beta-condition-2} guarantees that at any
stage there is still enough ``dampening power'' left to correct
possible over-approximations. In the exact case, when $f_n = f$ for all $n$, this is needed for
the iteration to be convergent for \emph{all} initial points.
In the approximated case, when $(f_n)$ converges to $f$, dampening will be
indispensable since over-approximations at some components can be introduced along the way.

  Conditions~\eqref{eq:beta-condition-1} and \eqref{eq:beta-condition-2}
are indeed necessary as shown by the examples below:

\begin{itemize}
  \item for $f\colon [0,1]\to[0,1]$, $x\mapsto1$ and $x_0\in X$ arbitrary, we have
    \[
      x_{n+1}^{\mathcal{F}}=(1-\beta_n)(\alpha_n x_n^{\mathcal{F}}+(1-\alpha_n) f(x_n^{\mathcal{F}})) = (1-\beta_n)(\alpha_n x_n^{\mathcal{F}}+1-\alpha_n)\leq(1-\beta_n).\]
    Hence, if $x_n^{\mathcal{F}}\rightarrow 1=\mu f$, then also $\beta_n\to 0$.
    
  \item for $f = \id\colon [0,1]\to[0,1]$, $x\mapsto x$ and $x_0>0$ arbitrary, we have
    \[
      x_{n+1}^{\mathcal{F}}=(1-\beta_n)(\alpha_n x_n^{\mathcal{F}}+(1-\alpha_n)x_n^{\mathcal{F}})=(1-\beta_n)x_n^{\mathcal{F}}=x_0\prod_{k=0}^n(1-\beta_k).\]
    Hence, if $x_n^{\mathcal{F}}\to 0=\mu f$, then necessarily also $\prod_{n\in\N_0}(1-\beta_n)=0$.
\end{itemize}

\subsection{Approximating the Fixpoint of Known Functions}

For an approximation scheme to be reasonable, it should at least
converge to the correct solution $\mu f$ when we use the exact
function $f$ at every step. Hence, we first focus on the case in which
$f_n=f$ for all $n\in\N_0$ and identify conditions which ensure
convergence.  By abuse of notation, we will sometimes identify $f$ with the
sequence $(f_n)$ with $f_n=f$ for all $n\in\N_0$.

\begin{definition}[Exact Mann scheme]
  \label{de:exact}
  Let $\mathcal{S}=((\alpha_n),(\beta_n))$ be a Mann scheme. We call it \emph{exact} if for all
   $f\colon X\to X$ as in \Cref{as:setting} and
  $x_0\in X$, the sequence
  $(x_n^{\mathcal{F}})$ generated by the iteration
  $\mathcal{F}=(\mathcal{S},f,x_0)$ converges to $\mu f$.
\end{definition}

One can prove that, 
for functions satisfying
\Cref{as:setting}, even if the domain $X$
might not be bounded, the sequence generated
by a Mann scheme is bounded.

\begin{lemmarep}
  \label{lem:bounded-exact}
  Under \Cref{as:setting}, for any Mann scheme
  $\mathcal{S}=((\alpha_n),(\beta_n))$ and $x_0\in X$, the sequence
  $(x_n^{\mathcal{F}})$ generated by the iteration
  $\mathcal{F}=(\mathcal{S},f,x_0)$ is bounded. In fact, for all
  $\bar{x}\in\fix{f}$ and $n\in\N$
  \[\|x_{n}^{\mathcal{F}}-\bar{x}\|\leq\max(\|\bar{x}\|,\|x_0-\bar{x}\|)\]  
\end{lemmarep}
\begin{proof}
  Let $\bar{x}\in\fix{f}$.
  Then, for all $n\in\N_0$, we have
  \begin{align*}
    \|x_{n+1}^\mathcal{F}-\bar{x}\|
    & \leq \beta_n\|\bar{x}\| + (1-\beta_n)
      \|\alpha_n(x_n^\mathcal{F}-\bar{x})+(1-\alpha_n) (f(x_n^\mathcal{F})-\bar{x})\|\\
    & \leq\beta_n\|\bar{x}\|+(1-\beta_n)\|x_n^\mathcal{F}-\bar{x}\|\\
    & \leq\max(\|\bar{x}\|,\|x_n^\mathcal{F}-\bar{x}\|).
  \end{align*}
  Inductively, we immediately conclude.
  \qed
\end{proof}
However,
as one might expect,
it is not generally the case that the
sequence converges to the least fixpoint of $f$ (or that it converges
at all) without imposing additional restrictions on the parameters.

We next prove that convergence of the iteration to the least fixed point is ensured by working with Mann-Kleene schemes, i.e., Mann schemes satisfying
\begin{equation}
  \label{eq:alpha-condition-1}
  \lim_{n\to\infty} \alpha_n = 0.
\end{equation}
Intuitively, Condition~\eqref{eq:alpha-condition-1} implies that the
iteration gets closer and closer to a Kleene iteration (in fact, when
$\alpha_n \to 0$, the operator $T^\mathcal{S}_n(f,\cdot)$ tends to
$f$).
At the end of this section we will show that convergence even
holds for relaxed Mann-Kleene schemes where the $\alpha_n$ converge to
a value strictly less than $1$.

Canonical choices of the parameters for obtaining a Mann-Kleene scheme
are $\beta_n = \nicefrac{1}{n}$ and either
$\alpha_n = \nicefrac{1}{n}$ as well or 
$\alpha_n = 0$ constantly
(note that we do not require that  $\sum \alpha_n = \infty$).
In the sequel we sometimes start at
$x_n$ with $n>0$ rather than $x_0$ to ensure that all parameters are
well-defined, particular when they are based on fractions such as
$\nicefrac{1}{n}$.

\begin{theoremrep}[Approximating the fixpoint of known functions]
  \label{th:fixed}
  Every Mann-Kleene scheme $\mathcal{S}$ is exact.
\end{theoremrep}
\begin{proof}
  let $\mathcal{S}$ be a Mann-Kleene scheme.  Let $f$ as in
  \Cref{as:setting} and let $x_0\in X$. Consider the iteration
  $\mathcal{F}=(\mathcal{S},f,x_0)$ and let $(x_n^{\mathcal{F}})$ the
  sequence generated.
  
Assume without loss of generality that $\alpha_n = 0$ for all $n$.
Define a map $s\colon \N \to \R$ by
\[s(n) := \|f^n(0) - \mu f\|\]
By Kleene's Theorem the map $s$ is non-increasing and
$\lim_{n\to\infty} s(n) = 0$.
We already know by \Cref{lem:bounded-exact} that the sequence $(x^{\mathcal{F}}_n)$
is norm bounded, that means we can choose $M$ such that
$\|\mu f\| + \|x^{\mathcal{F}}_n - \mu f\| \le M$
for all $n\in\N_0$.
Define for all $n\in\N$ a number $\delta_n$ as
\[\delta_n := \min\Big\{s(k) + M\sum_{i=m}^{m+k-1} \beta_i \mid n = m + k\Big\}\]
It is easy to see that $\delta_n \to 0$.
We claim that $x^{\mathcal{F}}_n \ge \mu f - \delta_n$. Fix a partition $n = m+k$ and note that $f^k(x^{\mathcal{F}}_m) \ge f^k(0)$ and therefore
\[x^{\mathcal{F}}_{m+k} \ge f^k(0) - M\sum_{i=m}^{m+k-1} \beta_i \ge \mu f - \Big( s(k) + M\sum_{i=m}^{m+k-1} \beta_i \Big)\]
For an upper bound define $\eta_n := M \cdot \prod_{i=0}^{n-1} (1-\beta_i)$. By assumption on the sequence $(\beta_n)$ we know that $\lim_{n\to\infty} \eta_n = 0$. Let $(y_n)$ be the dampened Mann Iteration with the same parameters but starting from $0$, then $y_n \le \mu f$ for all $n\in\N$ and by induction we see that
\[\|x^{\mathcal{F}}_n - y_n\| \le \|x_0 - y_0\| \cdot \prod_{i=0}^{n-1} (1 - \beta_i) \le M \cdot \prod_{i=0}^{n-1} (1-\beta_i)\]
In particular $x^{\mathcal{F}}_n \le \mu f + \eta_n$. Putting $\epsilon_n := \max\{\delta_n, \eta_n\}$ we obtain
\[\|x^{\mathcal{F}}_n - \mu f\| \le \epsilon_n\]
with $\lim_{n\to\infty} \epsilon_n = 0$. The case where $(\alpha_n)$ only converges to $0$ is analogous (see the proof of \Cref{lm:lower}).
  \qed
\end{proof}

Our proof of the theorem actually shows more.  Given a Mann-Kleene
scheme $\mathcal{S}$, let $(x_n^{\mathcal{F}})$ be the sequence
generated by the corresponding iteration $\mathcal{F}=(\mathcal{S},f,x)$ from an arbitrary $x\in X$.
If we can estimate the error between the usual Kleene iteration and
$\mu f$, say we have a computable function $s\colon \N_0 \to \Rp$ fulfilling
$\|f^n(0) - \mu f\| \le s(n)$ and $\lim_{n\to\infty} s(n) = 0$, then
we can also calculate error bounds for Mann-Kleene schemes. That means
that we have also a computable function $n \mapsto \epsilon_n$ such
that $\lim_{n\to\infty} \epsilon_n = 0$ and
$\|x^{\mathcal{F}}_n - \mu f\| \le \epsilon_n$.
 
Note that, although so far we have only considered the case in which
$f_n = f$ for every $n\in\N_0$, \Cref{th:fixed} already yields a
significant improvement to the usual Kleene approximation,
where one sets $x_{n+1} = f(x_n)$: The Mann-Kleene iteration with a
non-expansive function converges to the least fixpoint of $f$ for
\emph{every} starting point $x_0$ and not just for $x_0 = 0$.

With some additional effort one can prove convergence also for relaxed
Mann-Kleene schemes where Condition~\eqref{eq:alpha-condition-1}
$\alpha_n \to 0$ is relaxed to $\alpha_n\to\alpha < 1$.  This for
instance allows to set $\alpha_n=\alpha\in[0,1)$.

\begin{corollaryrep}[Convergence for relaxed Mann-Kleene]
  \label{co:RelaxedMannKleene}
  Every relaxed Mann-Kleene scheme is exact.
\end{corollaryrep}

\begin{proof}
  Let $\mathcal{S}$ be a relaxed Mann-Kleene scheme and let
  $x_0\in X$. Consider the sequence $(x_n^{\mathcal{F}})$ determined by
  the iteration $\mathcal{F}=(\mathcal{S},f,x_0)$.
  For showing that it converges to $\mu f$, we can use the upper bound
  given in the proof of \Cref{th:fixed} and, for the lower bound, we use
  the more general \Cref{lm:lower}.
  \qed
\end{proof}

While a general proof requires non-trivial arguments and can be found
in the appendix, when $\alpha_n$
converges to $\alpha$ from above, an elementary proof can be
given. Write $f^\alpha\coloneqq(1-\alpha)f+\alpha\,\id$ and
$\alpha_n'\coloneqq\frac{\alpha_n-\alpha}{1-\alpha}\in[0,1]$. Then we
have
\begin{align*}
  x_{n+1}&=(1-\beta_n)(\alpha_n x_n+(1-\alpha_n)f(x_n))\\
         &=(1-\beta_n)((\alpha_n-\frac{(1-\alpha_n)\alpha}{1-\alpha})x_n+\frac{1-\alpha_n}{1-\alpha}f^\alpha(x_n))\\
         &=(1-\beta_n)(\frac{\alpha_n-\alpha}{1-\alpha}x_n+\frac{1-\alpha_n}{1-\alpha}f^\alpha(x_n))\\
         &=(1-\beta_n)(\alpha_n'x_n+(1-\alpha_n')f^\alpha(x_n)).
\end{align*}
As $\alpha_n'\to0$ and $\fix{f^\alpha}=\fix{f}$, convergence
directly follows from the convergence under Condition~\eqref{eq:alpha-condition-1}.
Additional results for  more general parameter sequences can be found in \Cref{app:KnownFunctions}.

\begin{toappendix}
\label{app:KnownFunctions}

Towards a convergence result for more general parameter sequences, we introduce the important concept of asymptotic regularity.

\begin{definition}[Asymptotic $f$-regularity]
  \label{de:asym-reg}
  Let $f\colon X \to X$ be a function where $(X, d)$ is a metric
  space. A sequence $(x_n)$ in $X$ is
  \emph{asymptotically regular with respect to $f$}, or
  \emph{$f$-asymptotically regular} if
  $\lim_{n\to\infty} d(x_n, f(x_n)) = 0$.
\end{definition}

It is easy to see that \emph{$f$-asymptotic regularity} is indeed
necessary for the sequence $(x_n)$ to converge to a fixpoint. Simple
examples involving the harmonic series show that $f$-asymptotic
regularity is, in general, not sufficient to guarantee convergence
though. It turns out that in the cases we are interested in
$f$-asymptotic regularity is indeed sufficient to show convergence, it
is  even equivalent to convergence to the least fixpoint.

\begin{theorem}
  \label{th:f-asympt-muf}
  Under \Cref{as:setting}, given a Mann scheme $\mathcal{S}=((\alpha_n),(\beta_n))$ and an arbitrary $x\in X$,
  we consider the iteration $\mathcal{F}=(\mathcal{S},f,x)$.

  Then, if the sequence $(x_n^{\mathcal{F}})$ is asymptotically regular with respect to $f$,
  it converges to $\mu f$.
\end{theorem}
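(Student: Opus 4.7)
The plan is to split the convergence of $(x_n^{\mathcal{F}})$ to $\mu f$ into two complementary bounds on accumulation points, and then lift uniqueness of the accumulation point to full convergence via boundedness. Concretely, I would first show that every accumulation point $y^*$ of $(x_n^{\mathcal{F}})$ satisfies $y^* \leq \mu f$, using only the dampening structure of the Mann scheme together with monotonicity, non-expansiveness and non-negativity of $\mu f$ (so asymptotic regularity is not needed for this half). Then I would show that every accumulation point lies in $\fix{f}$, hence $y^* \geq \mu f$, by invoking $f$-asymptotic regularity. Together these force every accumulation point to equal $\mu f$, and since $(x_n^{\mathcal{F}})$ is bounded by \Cref{lem:bounded-exact} in the closed set $X$, Bolzano--Weierstrass upgrades this to full convergence.

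For the upper bound, I would work with the ``overshoot'' quantity $u_n \coloneqq \max\{0,\, \max_i \p{x_n^{\mathcal{F}} - \mu f}{i}\} \in \Rp$, identified with the constant vector $(u_n,\ldots,u_n)$, so that $x_n^{\mathcal{F}} \leq \mu f + u_n$ componentwise. Monotonicity of $f$ and non-expansiveness in the sup norm give $f(x_n^{\mathcal{F}}) \leq f(\mu f + u_n) \leq \mu f + u_n$, hence also $\alpha_n x_n^{\mathcal{F}} + (1-\alpha_n) f(x_n^{\mathcal{F}}) \leq \mu f + u_n$. Multiplying by $(1-\beta_n)$ and using $\mu f \geq 0$ then yields
\[
x_{n+1}^{\mathcal{F}} \leq (1-\beta_n)(\mu f + u_n) \leq \mu f + (1-\beta_n) u_n,
\]
so $u_{n+1} \leq (1-\beta_n) u_n$. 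Condition~\eqref{eq:beta-condition-2} (equivalently $\prod_n (1-\beta_n) = 0$) then forces $u_n \to 0$.

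For the lower bound, suppose a subsequence $x_{n_k}^{\mathcal{F}}$ converges to some $y^*$. Continuity of the non-expansive map $f$ gives $f(x_{n_k}^{\mathcal{F}}) \to f(y^*)$, and combining with $\|x_{n_k}^{\mathcal{F}} - f(x_{n_k}^{\mathcal{F}})\| \to 0$, which holds by $f$-asymptotic regularity, one obtains $y^* = f(y^*)$, i.e., $y^* \in \fix{f}$, and therefore $y^* \geq \mu f$. Combined with $y^* \leq \mu f$ from the upper bound, every accumulation point of $(x_n^{\mathcal{F}})$ equals $\mu f$, and since $(x_n^{\mathcal{F}})$ is bounded, the whole sequence converges to $\mu f$.

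The main obstacle is the upper bound: it is not a priori obvious that the dampening factor alone should tame overshoots of $\mu f$. The key insight is that the non-negativity of $\mu f$, combined with monotonicity and sup-norm non-expansiveness, lets the componentwise bound $x_n^{\mathcal{F}} \leq \mu f + u_n$ pass through both $f$ and the dampening step, producing a genuine contraction $u_{n+1} \leq (1-\beta_n) u_n$ without any help from asymptotic regularity. Once this estimate is secured, the role of asymptotic regularity reduces to the standard observation that it pushes subsequential limits into $\fix{f}$, and the rest is a routine compactness argument.
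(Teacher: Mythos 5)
Your argument is correct, but it reaches the conclusion by a genuinely different decomposition than the paper's proof. The paper first shows, exactly as you do, that asymptotic regularity forces every cluster point to lie in $\fix{f}$; it then proves that the sequence has a \emph{unique} cluster point by a trapping argument around an arbitrary cluster point $x$ (once $\|x_N-x\|\le\epsilon$, induction with $f(x_{N+n})\le f(x+\epsilon)\le f(x)+\epsilon$ keeps $x_{N+n}\le x+\epsilon$ forever), and finally identifies that unique cluster point with $\mu f$ by isolating the components where the distance to $\mu f$ is maximal and extracting $\|x_{n+1}-\mu f\|\le(1-\beta_n)\|x_n-\mu f\|$, which collapses because $\prod_n(1-\beta_n)=0$. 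You instead prove the upper half globally and unconditionally: the scalar overshoot satisfies $u_{n+1}\le(1-\beta_n)u_n$, hence $x_n^{\mathcal{F}}\le\mu f+u_0\prod_{k<n}(1-\beta_k)$, with no use of asymptotic regularity and not even of $\beta_n\to 0$; regularity is then used only to push subsequential limits into $\fix{f}$, hence above $\mu f$, and boundedness (\Cref{lem:bounded-exact}) finishes. This buys a cleaner separation of the hypotheses' roles and an explicit decay rate for over-approximation, in the same spirit as the $\eta_n$ bound in the proof of \Cref{th:fixed}, while avoiding the uniqueness-of-cluster-point step entirely; the paper's route is more involved but needs nothing you don't also use, since both arguments exploit non-negativity of $\mu f$ at the same spot (to absorb the factor $(1-\beta_n)$ multiplying $\mu f$). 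One small caveat: your step $f(x_n^{\mathcal{F}})\le f(\mu f+u_n)\le\mu f+u_n$ evaluates $f$ at $\mu f+u_n$, which need not belong to $X$ under \Cref{as:setting}; the paper's own proof makes the identical tacit move with $f(x+\epsilon)$, and on the domains of interest ($\Rp^d$, or any $X$ closed under componentwise maxima, where one can route through $x_n^{\mathcal{F}}\vee\mu f$) the inequality is unproblematic, but it deserves a remark.
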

\begin{proof}
  In order to simplify the notation let us write $(x_n)$ for $(x_n^{\mathcal{F}})$.

  By assumption, we immediately obtain that every cluster point of $(x_n)$ is a fixpoint of $f$.
  In fact, let $x$ be a cluster point and let $(x_{n_k})_k$ be a subsequence converging to $x$.
  By continuity of $f$, we have that $0=\lim_{k\to\infty}\|f(x_{n_k})-x_{n_k}\|=\|f(x)-x\|$.
  Hence $f(x)=x$ as desired.

  By \Cref{lem:bounded-exact}, $(x_n)$ is bounded whence Bolzano-Weierstraß gives the
  existence of a cluster point.
  We next show that the sequence even has a unique cluster point, which is thus, again using
  boundedness, the limit of the sequence.
  Assume for a contradiction that there are two distinct cluster points $x$ and $x'$ and take
  $1\le i\le d$ with $\p{x}{i}\neq\p{x'}{i}$.
  We can assume without loss of generality that $\p{x}{i}<\p{x'}{i}$ and take $\epsilon>0$ with
  $\epsilon<|\p{x}{i}-\p{x'}{i}|/2$.
  Since $x$ is a cluster point, there is $N\in\N$ such that $\|x_N-x\|\le\epsilon$.
  It is easy to see by induction, that $x_{N+n}\le x+\epsilon$ for all $n\in\N$ due to $f$ being
  monotone and non-expansive, and $x$ being a fixpoint of $f$:

  The base case $n=0$ holds by assumption.
  For the inductive case, assume that $x_{N+n} \le x + \epsilon$.
  Then observe that by using monotonicity, non-expansiveness and the inductive hypothesis
  \[f(x_{N+n})\leq f(x+\epsilon)\leq f(x)+\epsilon=x+\epsilon\]
  and thus we have
  \begin{align*}
    x_{N+n+1}
    & = (1-\beta_{N+n})(\alpha_{N+n} \, x_{N+n} + (1-\alpha_{N+n}) f(x_{N+n}) )\\
    & \leq \alpha_{N+n} \, x_{N+n} + (1-\alpha_{N+n}) f(x_{N+n})\\
    & \leq \alpha_{N+n} (x + \epsilon) + (1-\alpha_{N+n}) (x + \epsilon)\\
    & = x + \epsilon
  \end{align*}

  From the above observation, it follows that $x'$ cannot be a cluster point, since
  $\|x'-x_{N+n}\|\geq\p{x'}{i} - \p{x_{N+n}}{i} = \p{x'}{i} - \p{x}{i} + \p{x}{i} - \p{x_{N+n}}{i} \geq 2
  \epsilon - \epsilon = \epsilon$, a contradiction.

  \medskip

  It remains to show that the unique cluster point of the sequence is equal to $\mu f$.
  Assume $\bar{x}=\lim_{n\rightarrow\infty}x_n\neq\mu f$.
  As $\bar{x}$ is a fixpoint of $f$, we have $\mu f\leq\bar{x}$.
  Let $K\coloneqq\{k\in\{1,\dots,d\}\mid|\p{\mu f}{k}-\p{\bar{x}}{k}|=\|\mu f-\bar{x}\|\}$, i.e.,
  the set of all dimensions/indices where $\mu f$ and $\bar{x}$ have maximal distance.

  Furthermore, fix $\epsilon > 0$ such that
  \begin{equation}
    \label{eq:eps}
    \epsilon < (\|\mu f-\bar{x}\|-\max_{k\notin K}|\p{\mu f}{k} - \p{\bar{x}}{k}|)/2
  \end{equation}
  and let $N$ be such that
  $\|x_n-\bar{x}\|<\epsilon$ for all $n>N$.

  Then, for all $n > N$, if $k \notin K$ then
  \[
    |\p{x_n}{k}- \p{\mu f}{k} |
    \leq |\p{x_n}{k} - \p{\bar{x}}{k}| + |\p{\bar{x}}{k} - \p{\mu f}{k}|
    < \|\mu f-\bar{x}\|-\epsilon
  \]
  where the last passage uses the fact that
  $|\p{x_n}{k} - \p{\bar{x}}{k}| \leq \| x_n - \bar{x} \| \leq \epsilon$ and, by~\eqref{eq:eps},
  $|\p{\bar{x}}{k} - \p{\mu f}{k}|  \leq \| \mu f - \bar{x} \| - 2 \epsilon$.

  Instead, for $k\in K$ we have
  \[
    |\p{x_n}{k} -\p{\mu f}{k}|
    \geq |\p{\bar{x}}{k} - \p{\mu f}{k}| - |\p{x_n}{k} - \p{\bar{x}}{k}|
    >\|\mu f-\bar{x}\|-\epsilon
  \]

  Thus, in particular, for all $n>N$
  \[
    \|x_n-\mu f\|=\max_{k\in K}|\p{x_n}{k} - \p{\mu f}{k}|,
  \]
  i.e., the distance between $x_n$ and $\mu f$ is realised on components $k \in K$ (with $K$ being
  \emph{independent} of $n$).

  Let $k\in K$ and $n>N$ be arbitrary.
  Note that we have
  \[
    |\p{x_{n+1}}{k} - \p{\bar{x}}{k}|
    < \epsilon \quad \land \quad \p{\bar{x}}{k} - \p{\mu f}{k}=\|\bar{x}-\mu f\|>2\epsilon\]
  whence $\p{x_{n+1}}{k} > \p{\mu f}{k}$.
  But then
  \begin{align*}
    & |\p{x_{n+1}}{k} - \p{\mu f}{k}|\\
    & = \p{x_{n+1}}{k} - \p{\mu f}{k}\\
    & < (1-\beta_n)\left(\alpha_n \p{x_n}{k} + (1-\alpha_n) \p{f(x_n)}{k}- \p{\mu f}{k}\right)\\
    & = (1-\beta_n)\left(\alpha_n (\p{x_n}{k} - \p{\mu f}{k}) + (1-\alpha_n) (\p{f(x_n)}{k}- \p{\mu f}{k})\right)\\
    &\leq (1-\beta_n) \left(\alpha_n\|x_n-\mu f\|+(1-\alpha_n)\|f(x_n)-\mu f\|\right)\\
    & = (1-\beta_n) \left(\alpha_n\|x_n-\mu f\|+(1-\alpha_n)\|f(x_n)-f(\mu f)\|\right)\\
    & \leq (1-\beta_n) \left(\alpha_n\|x_n-\mu f\|+(1-\alpha_n)\|x_n-\mu f\|\right)\\
    &\leq (1-\beta_n)\|x_n-\mu f\|
  \end{align*}
  for all $n>N$.

  Therefore $\| x_{n+1} - \mu f\| \leq (1 - \beta_{n+1}) \|x_{n} - \mu f\|$ and thus for all $h \in \N$
  \[
    \| x_{n+h} - \mu f\| \leq \prod_{i=1}^h (1 - \beta_{n+i}) \|x_{n} - \mu f\| \leq \prod_{i=1}^h (1 - \beta_{n+i})
  \]
  yielding $x_n\rightarrow\mu f$ as $\prod_{n \in \N} (1 - \beta_{n}) = 0$.
  This contradicts
  $\lim_{n\rightarrow\infty}x_n=\bar{x}\neq\mu f$.
\end{proof}
\end{toappendix}

\Cref{th:fixed} and \Cref{co:RelaxedMannKleene} above instantiate to
our main application scenario, i.e., finding the optimal value
function of an MDP $M$ and imply that, given a (relaxed) Mann-Kleene
scheme $\mathcal{S}$, the iterations
$\mathcal{F}=(\mathcal{S},f_M,v_0)$ and
  $\mathcal{G}=(\mathcal{S},g_M,q_0)$ yields a converging sequence
$x_n^{\mathcal{F}}\to \opts{M}$ and
  $x_n^{\mathcal{G}}\to \optsa{M}$ for all initial values
$v_0,q_0$. In fact, the domain $X=\Rp$ is closed and the Bellman
operator $f_M$ is monotone, non-expansive and admit fixpoints
(since we work with MDPs with finite value).

\section{Approximated Fixpoints of Approximated Functions}
\label{se:approximation}

As discussed in the introduction, in many applications, it is not
possible to iterate with the target function as we did in the previous
section, since this function is not known exactly, but it can only be
approximated.
For example, the internal dynamics of an unknown system might be only
derivable through the interaction with the system. As a relevant
example, we outlined the \emph{model-based reinforcement learning}
approach where the agent derives an approximation of the system
through its experiences, meaning that it learns the dynamics of the
MDP $M$ that is assumed to underlie the system's behaviour, and
creates a sequence of successively more accurate approximated models
$M_n$.  It then learns the optimal value $\opts{M}$ or $\optsa{M}$ of
$M$ using the approximations $M_n$.  We will discuss this setting in
more detail in \Cref{se:mdp}.

In this section, we are, more generally, interested in the least
fixpoint of a function $f$ under the assumption that we can
get access only to a sequence $(f_n)$ of
approximations
converging to $f$.

\begin{assumption}
  \label{as:approximate-setting}
  Given a closed and convex set $X\subseteq\Rp^d$ with $0\in X$ where $d\in\N_0$ is a fixed arity,
  let $(f_n)$ be a sequence of monotone and non-expansive functions $f_n\colon X\to X$, pointwise
  converging to
   a function
  $f\colon X\to X$ with $\fix{f}\neq\emptyset$.
\end{assumption}

Note that, under \Cref{as:approximate-setting}, also $f$ is
guaranteed to be monotone and non-expansive.
Furthermore,  a standard result from analysis ensures that, given
a sequence $(f_n)$ of $L$-Lipschitz functions, $f_n:X\to X$, with
$X\subseteq\R^d$ compact, if the sequence converges to a function $f$,
then the convergence is uniform.
Therefore, if we consider the function sequence $(f_n)$ only
on a compact subset of $X$, we can assume w.l.o.g.\ that it converges
uniformly. Indeed, in the sequel we will often show that the sequences generated by Mann iterations are bounded and use this fact to restrict to a bounded and thus compact domain $X \subseteq \Rp^d$.

Now, given an iteration scheme $\mathcal{S}$ that is exact, i.e.,
which works when iterating the exact function, a naive idea to
construct a sequence of approximations $(x_n)$ of the least fixpoint
of the target function $f$ might be to perform \emph{resetting}, i.e., to
restart the iteration in each step and calculate $x_k$ as
$T_{n_k}^{\mathcal{S}}(f_k,T_{n_k-1}^{\mathcal{S}}(f_k,\dots
T_0^{\mathcal{S}}(f_k,x_0)\dots))$, namely approximate
$\mu f$ by iterating appropriately  many times with an approximation
$f_k$ sufficiently close to $f$.

However, as already
noted
in the introduction,
the least fixpoint operator $\mu$ is not continuous for non-expansive
functions, i.e., the sequence $\mu f_n$ might not converge
to $\mu f$ (see Example~\ref{it:simple-1}). %
Even worse, the approximations might not even have a fixpoint (see
Example~\ref{it:no-fixpoints}). Hence, simply
choosing an approximation $f_n$ sufficiently close to $f$, and
iterating with it, will not
provide any
guarantee to get close
to $\mu f$.

In case we are able to estimate the distance between the
target function $f$ and its approximations $f_n$, it is still possible
to ensure convergence by carefully choosing the amount $n_k$ of times
iterated with function $k$: Assuming a Kleene iteration starting from
the bottom element $0$ and choosing $n_k\in\N$ such that
$n_k=\lfloor\|f_k-f\|^{-\frac{1}{1+r}}\rfloor$ for some
(fixed) $r>0$, we get by non-expansiveness:
\begin{align*}
  \|f^{n_k}_k(0) - \mu f\| &\leq \|\mu f - f^{n_k}(0)\| + \|f^{n_k}(0) - f^{n_k}_k(0)\|\\
  &\le \|\mu f - f^{n_k}(0)\| + n_k \cdot \|f - f_k\|\\
  &\le \|\mu f - f^{n_k}(0)\| + \|f - f_k\|^{\frac{r}{1+r}} \to 0
\end{align*}
for $k\to\infty$ if $(f_n)$ is uniformly converging.  Intuitively,
performing iteration in a controlled way, we can ensure to get close
to $\mu f$.

\begin{example}
  Consider again Example~\ref{it:simple-1} discussed in the introduction,
  i.e., $f,f_n\colon [0,1]\to[0,1]$ where
  $f(x)=x,f_n(x)=(1-\nicefrac{1}{n})\cdot x+\nicefrac{1}{n},n\in\N$.  Here,
  we have $\mu f_n=1\neq 0=\mu f$. Choosing $x_n=f_n^n(0)$ results in
  the sequence
  \[x_n=1-(\frac{n-1}{n})^n\to 1-\frac{1}{e}.\]
  However, we have $\|f_n-f\|=\nicefrac{1}{n}$ whence, choosing $n_k=\lfloor\sqrt{k}\rfloor$ (for
  $r=1$), we indeed get
  \[x_n=1-(\frac{n-1}{n})^{\lfloor\sqrt{n}\rfloor}\to 0.\]
\end{example}

While the naive approach illustrated above works, it has some obvious
drawbacks, the most prominent being that we have to
restart the computation at each iteration step.  That means, to
compute the $k$-th approximation, we have to compute $n_k$ \emph{new}
iterates of the function $f_k$ without being able to reuse the result
of the previous step.  In addition to a worse performance, this makes
the iteration very unstable as it only relies on the current
approximation $f_n$, making it prone to measurement errors and
perturbations since progress from previous good approximations cannot be
reused. Furthermore, the approach could be simply not viable as it requires
to be able to estimate how close the approximations $f_n$ are to $f$ in order to determine suitable indices $n_k$. 

In the rest of this section we show that, under suitable assumptions,
dampened Mann schemes, which instead avoid restarting and properly
reuse previous iterates, work in the approximated setting given by
\Cref{as:approximate-setting}.  Here,
Condition~\eqref{eq:beta-condition-2} on the dampening parameters
$(\beta_n)$ is essential -- even when the iteration starts from $0$ (bottom value) -- as
the approximations $f_n$ constantly introduce errors, possibly causing
over-approximations which need to be \enquote{dampened}.

We first show that the least fixpoint $\mu f$ of the target function
serves as an asymptotic lower bound for all relaxed Mann-Kleene
schemes. %
\begin{lemmarep}
  \label{lm:lower}
  Under \Cref{as:approximate-setting}, given a relaxed Mann-Kleene scheme $\mathcal{S}$ and arbitrary $x_0\in X$,
  consider the iteration $\mathcal{F}=(\mathcal{S},(f_n),x_0)$.
  If $(x_n^{\mathcal{F}})$ is bounded, we have $\liminf_{n\to\infty}x_n^{\mathcal{F}}\geq\mu f$.
\end{lemmarep}

\begin{proof}
  For the sake of readability, we write $x_n\coloneqq x_n^{\mathcal{F}}$.
  Since $(x_n)$ is bounded, we can w.l.o.g.\ assume that $(f_n)$ is uniformly converging.

  We define $M\coloneqq\sup_{n\in\N_0}\|x_n-\mu f\|+\|\mu f\|$ ($<\infty$ by assumption).
  Furthermore, we define $\alpha_n'\coloneqq\frac{\alpha_n-\alpha}{1-\alpha}$ (with $\alpha=\lim_{n\to\infty}\alpha_n<1$)
  and $f_n'=(1-\alpha)f_n+\alpha$ so that
  \[x_{n+1}=(1-\beta_n)(\alpha_n x_n+(1-\alpha_n)f_n(x_n))=(1-\beta_n)(\alpha_n' x_n+(1-\alpha_n')f_n'(x_n)).\]
  Note that $\alpha_n'\in(-\infty,1],\alpha_n'\to 0,$ and all $f_n'$ are monotone, non-expansive,
  and converge to $f'=(1-\alpha)f+\alpha$.

  We write $s_n=\|(f')^n(0)-\mu f\|$ and, for $k,m\in\N_0$,
  \[t_m^k\coloneqq\sum_{i=m}^{m+k-1}(\prod_{\ell=i+1}^{m+k-1}|1-\alpha_\ell'|)(|1-\alpha_i'|\|f_i'-f'\|+M(2|\alpha_i'|+\beta_i)).\]
  For $n\in\N_0$, we then define
  \[\delta_n\coloneqq\min\{s_k+t^k_m\mid n=m+k\}.\]
  We note that, by Kleene's theorem and $\mu f=\mu f'$, we have $s_k\to 0$ and, for each $k\in\N_0$,
  also $t_m^k\to0$ since $\alpha_i'\to 0,\|f_i'-f'\|\to0,$ and $\beta_i\to 0$.
  Thus, it is clear that also $\delta_n\to0$.

  Now, for $n+1=m+k$, we have
  \begin{align*}
    &\|x_{n+1}-(f')^k(x_m)\|\\
    &\leq(1-\beta_n)\|\alpha_n'
    (x_n-(f')^k(x_m))+(1-\alpha_n')(f_n'(x_n)-(f')^k(x_m))\| \\
    & \qquad +\beta_n\|(f')^k(x_m)\|\\
    &\leq|1-\alpha_n'|\|f_n'(x_n)-(f')^k(x_m)\|+|\alpha_n'|\|x_n-(f')^k(x_m)\|+\beta_n M\\
    &\leq|1-\alpha_n'|(\|x_n-f'^{k-1}(x_m)\|+\|f_n'-f'\|)+M(2|\alpha_n'|+\beta_n)\\
    &\leq|1-\alpha_n'|\|x_n-f'^{k-1}(x_m)\|+|1-\alpha_n'|\|f_n'-f'\|+M(2|\alpha_n'|+\beta_n)\\
    &\leq\sum_{i=m}^{n}(\prod_{\ell=i+1}^n|1-\alpha_\ell|)(|1-\alpha_i'|\|f_i'-f'\|+M(2|\alpha_i'|+\beta_i))\\
    &=t_m^k
  \end{align*}
  In particular, since $f'^k(x_m)\geq f'^k(0)$ ($f'$ is monotone), we have
  \[x_{m+k}\geq f'^k(0)-t_m^k\geq\mu f-\delta_{m+k}.\]
  \qed
\end{proof}

From this, the following corollary immediately follows.

\begin{corollary}
  Under \Cref{as:approximate-setting}, given a relaxed Mann-Kleene
  scheme $\mathcal{S}$ and arbitrary $x_0\in X$, we consider the
  iteration $\mathcal{F}=(\mathcal{S},(f_n),x_0)$.  Then, we have
  $\lim_{n\to\infty}x_n^{\mathcal{F}}=\mu f$ if and only if
  $\limsup_{n\to\infty}x_n^{\mathcal{F}}\leq\mu f$.
\end{corollary}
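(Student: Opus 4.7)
The plan is to derive this as a direct consequence of Lemma~\ref{lm:lower}, since that lemma already supplies the asymptotic lower bound $\liminf_{n\to\infty} x_n^{\mathcal{F}} \geq \mu f$ whenever the iterates are bounded. The corollary is really just the missing half: if the sequence does not overshoot $\mu f$ in the limit, then by sandwiching it must converge to $\mu f$.

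First I would dispose of the easy direction. If $\lim_{n\to\infty} x_n^{\mathcal{F}} = \mu f$, then in particular $\limsup_{n\to\infty} x_n^{\mathcal{F}} = \mu f \leq \mu f$, so there is nothing to prove.

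For the nontrivial direction, assume $\limsup_{n\to\infty} x_n^{\mathcal{F}} \leq \mu f$ (componentwise). The key preliminary step is to verify that $(x_n^{\mathcal{F}})$ is bounded, so that Lemma~\ref{lm:lower} can be applied. A lower bound comes for free since $x_n^{\mathcal{F}} \in X \subseteq \Rp^d$, hence $x_n^{\mathcal{F}} \geq 0$. For an upper bound, fix any $\epsilon > 0$: by definition of $\limsup$, there exists $N$ such that $x_n^{\mathcal{F}} \leq \mu f + \epsilon$ for all $n \geq N$, so the tail of the sequence is bounded, and the finite initial segment $x_0^{\mathcal{F}},\dots,x_{N-1}^{\mathcal{F}}$ is trivially bounded as well.

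Having established boundedness, I would invoke Lemma~\ref{lm:lower} to conclude $\liminf_{n\to\infty} x_n^{\mathcal{F}} \geq \mu f$. Combining this with the standing hypothesis $\limsup_{n\to\infty} x_n^{\mathcal{F}} \leq \mu f$ and the elementary inequality $\liminf \leq \limsup$, all three quantities must coincide with $\mu f$, whence $\lim_{n\to\infty} x_n^{\mathcal{F}} = \mu f$. There is essentially no obstacle here; the entire content sits in Lemma~\ref{lm:lower}, and the corollary just packages it together with the trivial boundedness check.
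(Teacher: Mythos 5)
Your proposal is correct and matches the paper's argument: the corollary is stated as an immediate consequence of \Cref{lm:lower}, which is exactly how you derive it. Your explicit verification of boundedness (lower bound from $X\subseteq\Rp^d$, upper bound from the componentwise $\limsup$ hypothesis) is a small but legitimate detail that the paper leaves implicit, and it is handled correctly.
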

Moreover, restricting to the one-dimensional case, i.e., for functions $f_n : X \to X$ with $X \subseteq \Rp$, the condition $\mu f = \lim_{n\to\infty} \mu f_n$, ensures convergence for every (relaxed) Mann-Kleene scheme.

\begin{theoremrep}[Approximated Mann-Kleene - one dimension]
  \label{it:thm:mt-one-dim}
  Let $f, f_n : X \to X$ be as in \Cref{as:approximate-setting} with
  $d = 1$, let $\mathcal{S}$ be a (relaxed) Mann-Kleene scheme,
  $x_0\in X$, and consider the iteration
  $\mathcal{F}=(\mathcal{S},(f_n),x_0)$. If
  $\mu f = \lim_{n\to\infty} \mu f_n$ then $(x_n^{\mathcal{F}})$
  converges to $\mu f$.
\end{theoremrep}

\begin{proof}
  Assume that $(f_n)$ converges pointwise to $f$ with
  $\lim \mu f_n = \mu f$ in a one-dimensional problem. In particular,
  the sequence $(\mu f_n)$ is bounded by some constant $C$. Consider
  the monotone and non-expansive function
  $f_C\colon \R_{\ge0} \to \R_{\ge0}$ given by
  \[
    f_C(x) = \begin{cases}
      C & \text{if } x \le C \\
      x & \text{if } x \ge C
    \end{cases}
  \]
  Note that by monotonicity and non-expansiveness we have
  $f_n \le f_C$ for all $n$. Denote by $(y_n)$ the sequence obtained
  from the iteration $(\mathcal{S}, f_C, x)$. Then
  $x_n^\mathcal{F} \le y_n$ for all $n$ and since $(y_n)$ converges by
  exactness of $\mathcal{S}$, we can conclude that $(x_n^\mathcal{F})$
  is bounded. In fact, for any $C>\mu f$ the inequality $f_n \le f_C$
  does hold for all sufficiently large $n$ and therefore
  $\limsup_{n\to\infty} x_n^\mathcal{F} \le \mu f$.

  Since $(x_n^\mathcal{F})$ is bounded we may assume that $(f_n)$ converges to $f$ uniformly. This implies that $\liminf_{n\to\infty} x_n^\mathcal{F} \ge \mu f$ by \Cref{lm:lower}.
  \qed
\end{proof}

In the above result the hypothesis that $\lim_{n\to\infty}\mu f_n=\mu f$ is essential, as witnessed by the example below.

\begin{example}[Mann-Kleene iteration converging to the wrong value]
  \label{ex:running-example}
  Consider again \Cref{it:simple-1} from the introduction,
  with $f,f_n\colon [0,1]\to[0,1]$ where $f(x)=x$ and
  $f_n(x)=(1-\nicefrac{1}{n})\cdot x+\nicefrac{1}{n}$ for $n\in\N$. We
  choose the Mann-Kleene scheme $\alpha_n=0,\beta_n=\nicefrac{1}{n}$,
  and iterate starting with $x_1=0$. This generates a sequence
  $(x_n^{\mathcal{F}})$ such that
  $x_{n+1}^{\mathcal{F}}=\frac{n-1}{2n}$
  and thus
  $x_n^{\mathcal{F}}\to\nicefrac{1}{2}\neq 0=\mu f$.
\end{example}

\Cref{it:thm:mt-one-dim} does not generalise to the
multidimensional case.
As shown by the example below, there are Mann-Kleene schemes satisfying
$\lim_{n\to\infty}\mu f_n=\mu f$ where iterating with the
approximations does not even converge.

\begin{example}[Mann-Kleene iteration diverging]
  \label{ex:cex-flip-map}
  Let $f : [0,1]^2 \to [0,1]^2$ be the flip map given by
  $f(x, y) = (y, x)$
  and consider a sequence %
  of maps $f_n\colon [0,1]^2 \to [0,1]^2$ as follows:
  \begin{enumerate}
  \item If $n$ is even let $f_n = f$ and
  \item if $n$ is odd define $f_n$ by
    $f_n(x, y) = \big(y \ominus \epsilon_n, x \oplus \epsilon_n\big)$
    where $\epsilon_n\coloneqq \nicefrac{2}{n}$ and $\ominus,\oplus$
    stand for truncated subtraction/addition in the interval $[0,1]$.
  \end{enumerate}
  Then, the maps $f$ and $f_n$ for $n\in\N$ are monotone and non-expansive.
  Moreover, $f_n$ converges (uniformly) to $f$ and since $f_n(0, \epsilon_n) = (0, \epsilon_n)$
  we also have $\lim_{n\to\infty} \mu f_n = \mu f = (0,0)$.

  However, if we choose the Mann-Kleene scheme $\mathcal{S}=((\alpha_n),(\nicefrac{1}{n+1}))$, with $\alpha_n=0$ for all $n$,
  it is easy to check that the sequence $(x_n^{\mathcal{F}})$ generated by the iteration
  $\mathcal{F}=(\mathcal{S},(f_n),(0,1))$ (starting at index $1$) is given by
  \[x_n^{\mathcal{F}} =
    \begin{cases}
      (\frac{n-2}{n}, 0) & \text{if $n$ is odd}  \\
      (0, \frac{n-1}{n}) & \text{if $n$ is even}
    \end{cases}
  \] for $n>1$.  Thus, both $(1,0)$ and $(0,1)$ are
  cluster points of $(x_n^{\mathcal{F}})$.  In particular, the
  sequence $(x_n^{\mathcal{F}})$ is not even asymptotically regular, i.e.\ $\|x_n^{\mathcal{F}} - x_{n+1}^{\mathcal{F}}\|$ does not converge to $0$.
\end{example}

We provide a first positive result for the multidimensional case when
the limit function $f$ is a power contraction. %
Power contractions naturally arise in the study of
MDPs. Hence this result will play an important role in
\Cref{se:mdp}.

\begin{theoremrep}[Approximated Mann-Kleene - power contractions]
  \label{th:convergence-for-power-contractions}
  Let $f, f_n : X \to X$ be as in \Cref{as:approximate-setting} with
  $f$ a power contraction. Given a Mann-Kleene scheme $\mathcal{S}$
  and $x_0\in X$, consider the iteration
  $\mathcal{F} = (\mathcal{S},(f_n),x_0)$.
  
  Then, if $(x_n^{\mathcal{F}})$ is bounded, it converges to the
  unique fixpoint of $f$.
\end{theoremrep}

An inspection of the proof shows that the result above actually holds
even if the scheme $\mathcal{S}$ does not satisfy
Condition~\eqref{eq:beta-condition-2}.

\begin{toappendix}
For the proof we use the following result of Walter:

\begin{theorem}[\cite{b5b89b91-299b-36c3-88dd-77287491029b}]
  \label{th:conv-power}
  Let $(X, d)$ be a complete metric space, $f\colon X \to X$ be a
  Lipschitz continuous power contraction and $x^*$ be the unique
  fixpoint of $f$. Let $(y_n)$ be any sequence in $X$ and define
  $\epsilon_n := d(y_{n+1}, f(y_n))$, then
  \[
    \lim_{n\to\infty} y_n = x^* \iff \lim_{n\to\infty} \epsilon_n = 0
  \]
\end{theorem}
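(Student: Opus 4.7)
The plan is to prove the two directions of the equivalence separately. Fix a Lipschitz constant $L$ of $f$ and an $N \in \N$ such that $f^N$ is a $q$-contraction with $q < 1$, which exists since $f$ is a power contraction.

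The forward direction is straightforward: if $y_n \to x^*$, then by Lipschitz continuity $f(y_n) \to f(x^*) = x^*$, so the triangle inequality gives $\epsilon_n = d(y_{n+1}, f(y_n)) \le d(y_{n+1}, x^*) + d(x^*, f(y_n)) \to 0$.

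For the reverse direction, the main idea is to compare the shifted sequence $y_{n+k}$ with the pure Kleene iterate $f^k(y_n)$. A straightforward induction on $k$, at each step splitting
\[ d(y_{n+k}, f^k(y_n)) \le d(y_{n+k}, f(y_{n+k-1})) + L \cdot d(y_{n+k-1}, f^{k-1}(y_n)), \]
yields the estimate
\[ d(y_{n+k}, f^k(y_n)) \le \sum_{i=0}^{k-1} L^i \, \epsilon_{n+k-1-i}. \]
Specialising to $k = N$ and combining with $d(f^N(y_n), x^*) \le q \cdot d(y_n, x^*)$ gives, writing $a_n := d(y_n, x^*)$ and $b_n := \sum_{i=0}^{N-1} L^i \epsilon_{n+N-1-i}$,
\[ a_{n+N} \le q \cdot a_n + b_n, \]
where $b_n \to 0$ because $\epsilon_n \to 0$ and the sum has a fixed finite number of terms with fixed coefficients.

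To conclude that $a_n \to 0$, I would decompose along the $N$ residue classes modulo $N$: for each fixed $r \in \{0, \dots, N-1\}$, the subsequence $c_k := a_{kN+r}$ satisfies the one-step recurrence $c_{k+1} \le q \, c_k + b_{kN+r}$. Unrolling gives $c_k \le q^k c_0 + \sum_{i=0}^{k-1} q^{k-1-i} b_{iN+r}$, and a standard lemma (split the sum at some $K$ past which $b_n < (1-q)\eta/2$ for prescribed $\eta > 0$, bounding the tail by a geometric series and observing the head is killed by the factor $q^{k-K}$) shows the right-hand side tends to $0$. Hence $a_n \to 0$ on each residue class and therefore overall, i.e., $y_n \to x^*$. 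The main obstacle is the interplay between $L$, which may be $\ge 1$ and so makes single applications of $f$ potentially expansive, and the contractivity factor $q < 1$, which only manifests after $N$ steps; this forces the argument to be carried out in blocks of length $N$ via the residue-class decomposition rather than as a clean one-step contraction on $a_n$.
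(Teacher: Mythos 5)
Your proof is correct. Note that the paper itself does not prove this statement: it is quoted as Walter's theorem and used as a black box, so there is no in-paper argument to compare against. Your argument supplies a valid self-contained proof, and it follows the standard route for such results: the forward direction is immediate from Lipschitz continuity, and for the converse you correctly establish the perturbation bound $d(y_{n+k}, f^k(y_n)) \le \sum_{i=0}^{k-1} L^i \epsilon_{n+k-1-i}$ by induction, specialise to $k=N$ to get $a_{n+N} \le q\,a_n + b_n$ with $b_n \to 0$, and then handle the fact that contractivity only kicks in after $N$ steps by splitting into the $N$ residue classes and applying the usual lemma on recurrences $c_{k+1} \le q c_k + \beta_k$ with $\beta_k \to 0$. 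The only cosmetic remark is that the step $d(f^N(y_n), x^*) \le q\, d(y_n,x^*)$ silently uses $f^N(x^*) = x^*$, which holds because any fixpoint of $f$ is a fixpoint of $f^N$; since the theorem statement already posits $x^*$ as the unique fixpoint of $f$, this is harmless.
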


\begin{proof}[Proof of \Cref{th:convergence-for-power-contractions}]
Since $\alpha_n \to 0$ and $\beta_n \to 0$, the sequence $(T_n^{\mathcal{F}}(f_n,\cdot))$ converges
pointwise to $f$. Since by assumption $(x_n^\mathcal{F})$ lives on a bounded domain, we can assume without loss of generality that the convergence is uniform.
Hence (instantiating \Cref{th:conv-power} with $y_n = x_n^{\mathcal{F}}$) we get
\[\epsilon_n = \|x_{n+1}^{\mathcal{F}} - f(x_n^{\mathcal{F}})\| = \|T_n(f_n,x_n^{\mathcal{F}}) - f(x_n^{\mathcal{F}})\| \le \|T_n(f_n,\cdot) - f\| \to 0\]
Therefore $(x_n^{\mathcal{F}})$ converges to the fixpoint of $f$.
  \qed
\end{proof}

At the cost of restricting to proper contractions, we can prove a version of this result for other choices of parameter sequences (in particular, ones with $\alpha_n\to1$).

\begin{theoremrep}
  Under \Cref{as:approximate-setting} with a \emph{contraction} $f$, given a Mann scheme
  $\mathcal{S}$ fulfilling the additional conditions $\lim_{n\to\infty} \alpha_n = 1$,  $\sum_{n\in\N} (1-\alpha_n) = \infty$ and $\lim_{n\to\infty} \beta_n / (1-\alpha_n) = 0$ (but not necessarily Condition~\eqref{eq:beta-condition-2}), and $x_0\in X$, consider
the iteration $\mathcal{F}=(\mathcal{S},(f_n),x_0)$.

Then, if the sequence $(x_n^{\mathcal{F}})$ is bounded, it converges to the fixpoint of $f$.
\end{theoremrep}
\begin{proof}
  Let $q<1$ be a contraction factor for $f$.
  Since the sequence $(x_n^{\mathcal{F}})$ is bounded by hypothesis, we can assume that $f_n\to f$
  uniformly.
  By \Cref{th:conv-power}, it suffices to show that
  $\|x_{n+1}^{\mathcal{F}}-f(x_n^{\mathcal{F}})\|\to0$.
  We will show the stronger statement that the sequence $(s_n)$ with
  \[s_n\coloneqq\|x_{n+1}^{\mathcal{F}} - f(x_n^{\mathcal{F}})\| + q \|x_n^\mathcal{F} - x_{n+1}^\mathcal{F}\|\]
  converges to zero.

  To this end, note that
  \begin{align*}
  &\|x_{n+1}^{\mathcal{F}} - f(x_n^{\mathcal{F}})\|\\
  &=\|(1-\beta_n)(\alpha_n x_n^{\mathcal{F}} + (1-\alpha_n)f_n(x_n^\mathcal{F})) - f(x_n^{\mathcal{F}})\|\\
  &\le (1-\beta_n)\alpha_n\|f(x_n^\mathcal{F}) - x_n^\mathcal{F}\| + \beta_n\|f(x_n^\mathcal{F})\| + (1-\alpha_n)\|f - f_n\|
  \end{align*}
  and
  \begin{align*}
  &\|x_n^\mathcal{F} - x_{n+1}^\mathcal{F}\| \\
  &=\|(1-\beta_n)(\alpha_n x_n^{\mathcal{F}} + (1-\alpha_n)f_n(x_n^\mathcal{F})) - x_n^{\mathcal{F}}\|\\
  &\le (1-\beta_n)(1-\alpha_n) \|f(x_n^\mathcal{F}) - x_n^\mathcal{F}\| + \beta_n\|x_n^\mathcal{F}\| + (1-\alpha_n)\|f - f_n\|
  \end{align*}
  Therefore, we can choose a constant $C$ such that
  \begin{align*}
  s_n &= \|x_{n+1}^{\mathcal{F}} - f(x_n^{\mathcal{F}})\| + q \|x_n^\mathcal{F} - x_{n+1}^\mathcal{F}\|\\
  &\le (1-(1-q)(1-\alpha_n))\|f(x_n^\mathcal{F}) - x_n^\mathcal{F}\| +
  \beta_n C \\
  &\qquad + (1-\alpha_n)\|f - f_n\| \\
  &= (1-(1-q)(1-\alpha_n))\|f(x_n^\mathcal{F}) - x_n^\mathcal{F}\| \\
  &\qquad + (1-\alpha_n) (\beta_n/(1-\alpha_n)) C + (1-\alpha_n)\|f -
  f_n\| \\
  &\le (1-(1-q)(1-\alpha_n)) \big( \|f(x_{n-1}^\mathcal{F}) -
  x_n^\mathcal{F}\|  \\
  &\qquad + \|f(x_n^\mathcal{F}) - f(x_{n-1}^\mathcal{F})\|
  \big) + (1-\alpha_n) (\beta_n/(1-\alpha_n)) C + (1-\alpha_n)\|f - f_n\|\\
  &\le (1-(1-q)(1-\alpha_n)) \big( \|f(x_{n-1}^\mathcal{F}) -
  x_n^\mathcal{F}\| + q\|x_n^\mathcal{F} - x_{n-1}^\mathcal{F}\| \big)
  \\
  &\qquad + (1-\alpha_n) (\beta_n/(1-\alpha_n)) C + (1-\alpha_n)\|f - f_n\|\\
  &= (1-(1-q)(1-\alpha_n)) s_{n-1} + (1-\alpha_n) ((\beta_n/(1-\alpha_n)) C +\|f - f_n\|)
  \end{align*}
It now follows from Lemma 4 in \cite{KIM200551} that $s_n \to 0$.
  \qed
\end{proof}
\end{toappendix}

Next we come to one of our main results for general non-expansive
functions. It provides sufficient conditions on the approximation
sequence $(f_n)$, ensuring that an exact Mann scheme converges to the
least fixpoint when iterating with approximations.
Note that it applies, in particular, to (relaxed) Mann-Kleene
schemes which are known to be exact by \Cref{th:fixed} and
\Cref{co:RelaxedMannKleene}.

\begin{maintheoremrep}[Approximated Mann iteration]
  \label{thm:approximate-convergence}
  Let $f, f_n : X \to X$ be as in \Cref{as:approximate-setting}, let $\mathcal{S}$ be an exact
  Mann scheme, $x_0\in X$, and consider the iteration
  $\mathcal{F}=(\mathcal{S},(f_n),x_0)$. Then $(x_n^{\mathcal{F}})$ converges to $\mu f$ if either
  \begin{enumerate}
  \item \label{it:thm:mt-mon}the sequence $(f_n)$ is monotone and $\mu f = \lim_{n\to\infty} \mu f_n$ or
  \item \label{it:thm:mt-norm}the sequence $(f_n)$ converges normally to
    $f$, i.e.
    $\sum_{n\in\N_0} \|f - f_n\| < \infty$.
  \end{enumerate}  
\end{maintheoremrep}

\begin{proof}
Throughout the proof we denote by $(x_n)$ the sequence obtained from the iteration $(\mathcal{S}, f, x_0)$. By exactness of $\mathcal{S}$ we know that $(x_n)$ converges to $\mu f$.

  \begin{enumerate}
  \item Note that in the case where $(f_n)$ is increasing, that
    means $f_1 \le f_2 \dots \le f$, the condition
    $\mu f = \lim \mu f_n$ is satisfied automatically, by the
    Scott-continuity of the fixpoint operator $\mu$.

    Assume without loss of generality that $f_1 \ge f_2 \ge \dots \ge f$, then also
    $\mu f_1 \ge \mu f_2 \ge \dots \ge \mu f$. For a fixed $m$
    consider the sequence $(y_n)$ that is inductively defined by
    $y_0 = x_m^{\mathcal{F}}$ and
    \[y_{n+1} = \big( \alpha_n y_n + (1-\alpha_n) f_m(y_n) \big) \cdot
      (1 - \beta_n)\]

    It is easy to see by induction that $y_n \ge x_{m+n}^\app$ for all
    $n\in\N$. Also, by exactness of $\mathcal{S}$, the sequence $(y_n)$ converges to
    $\mu f_m$ and therefore every cluster point $p$ of $(x_n^{\mathcal{F}})$
    satisfies $p \le \mu f_m$.
    Since $m\in\N$ is arbitrary we obtain that
    $p \le \inf_m \mu f_m = \mu f$. Similarly, since
    $f \le f_n$ for all $n \in \N$, we have $x_n \le x_n^{\mathcal{F}}$ for all $n\in\N$.
    Since the sequence $(x_n)$ converges
    to $\mu f$ every cluster point $p$ of $(x_n^{\mathcal{F}})$ is
    above $\mu f$. It follows that $\mu f$ is the unique cluster point
    of the sequence $(x_n^{\mathcal{F}})$. The case where
    $f_1 \le f_2 \le \dots \le f$ is analogous.

    \item Assume that $(f_n)$ converges normally to $f$.
     In order to conclude that $(x_n^{\mathcal{F}})$
    converges to the same limit as $(x_n)$ we show that
    $\lim_{n \to \infty} \|x_n - x_n^{\mathcal{F}}\|=0$.
    To this aim, by induction we can show that
    \begin{equation}
      \label{eq:normal}
      \|x_n - x_n^{\mathcal{F}}\| \le \sum_{i=0}^{n-1} \prod_{j=i}^{n-1} (1 - \beta_j) \cdot (1-\alpha_i) \cdot \|f_i - f\|
    \end{equation}

    In fact, first note
    \[
      x_{n+1} - x_{n+1}^{\mathcal{F}} = \big( \alpha_n (x_n -x_n^{\mathcal{F}}) +
      (1-\alpha_n) (f(x_n) - f_n(x_n^{\mathcal{F}})\big) \cdot (1 - \beta_n)
    \]
    and thus
    \begin{align*}
      & \|x_{n+1} - x_{n+1}^{\mathcal{F}}\|\\
      & \quad \leq (\alpha_n \|x_n -x_n^{\mathcal{F}}\| + (1-\alpha_n) \|f(x_n) - f_n(x_n^{\mathcal{F}})\|) \cdot (1 - \beta_n)\\
      & \quad \leq (\alpha_n \|x_n -x_n^{\mathcal{F}}\| +
      (1-\alpha_n) (\|f(x_n) - f(x_n^{\mathcal{F}})\| \\
      & \qquad + \|f(x_n^{\mathcal{F}}) - f_n(x_n^{\mathcal{F}})\|))
      \cdot (1 - \beta_n) \\
      &  \quad \leq  (\alpha_n \|x_n -x_n^{\mathcal{F}}\| + (1-\alpha_n) (\|x_n - x_n^{\mathcal{F}}\| + \|f - f_n\|)) \cdot (1 - \beta_n)\\
      &  \quad = (\|x_n -x_n^{\mathcal{F}}\| + (1-\alpha_n) \|f - f_n\|)) \cdot (1 - \beta_n)\\
      &  \quad \leq (\sum_{i=0}^{n-1} \prod_{j=i}^{n-1} (1 - \beta_j) \cdot (1-\alpha_i)\|f_i - f\|+ (1-\alpha_n) \|f - f_n\|)) \cdot (1 - \beta_n)\\
      &  \quad =  (\sum_{i=0}^{n-1} \prod_{j=i}^{n} (1 - \beta_j) \cdot (1-\alpha_i)\|f_i - f\|) + (1-\alpha_n) (1 - \beta_n) \|f - f_n\|\\
      & \quad = \sum_{i=0}^{n} \prod_{j=i}^{n} (1 - \beta_j) \cdot
      (1-\alpha_i)\|f_i - f\|
    \end{align*}
    as desired.

    Now observe that the right hand side of \eqref{eq:normal}
    converges to $0$. Let $\epsilon > 0$ and choose $N\in\N$ such
    that $\sum_{i=N}^\infty \|f_i - f\| \le \epsilon/2$. Now choose
    $M\in\N$ such that
    $\prod_{j=N}^M (1 - \beta_j) \le \epsilon/(2N\max_{n\in\N_0}\|f_n-f\|)$. Then for all
    $n \ge M$ we have
    \begin{align*}
      & \sum_{i=0}^n \prod_{j=i}^n (1 - \beta_j) \cdot \|f_i - f\| \\
      & \quad \le \sum_{i=0}^N \prod_{j=i}^n (1 - \beta_j) \cdot \|f_i - f\| + \sum_{i=N+1}^\infty \prod_{j=i}^n (1 - \beta_j) \cdot \|f_i - f\|\\
      & \quad \le N\max_{n\in\N_0}\|f_n-f\| \cdot \prod_{j=N}^n (1 - \beta_j) + \sum_{i=N+1}^\infty \|f_i - f\|\\
      &\quad \le \epsilon/2 + \epsilon/2 = \epsilon
    \end{align*}

\qedhere

  \end{enumerate}
  \qed
\end{proof}

It is remarkable that under the assumption of normal convergence of
$(f_n)$ to $f$, the sequence $(x_n^\mathcal{F})$ converges to the
correct solution even if $\lim_{n\to\infty} \mu f_n \neq \mu f$.
Combining with estimates obtained in \Cref{th:fixed} one can even
calculate error bounds $\epsilon_n$ (converging to $0$) such that
$\|x_n^\mathcal{F} - \mu f\| \le \epsilon_n$ for all $n\in\N_0$,
provided that we can calculate such estimates for
$\|f^n(0) - \mu f\|$.

\begin{example}
  \label{ex:running-continue}
  Consider again the sequence of functions $(f_n)$ from
  \Cref{ex:running-example}.  We observed that the Mann-Kleene
  iteration does not converge to the least fixpoint in this
  case. Indeed, $(f_n)$ does not converge normally as
  $\|f_n-f\|=\nicefrac{1}{n}$.

  However, let us consider a sped up sequence of approximations as to
  guarantee normal convergence, e.g., put $g_n = f_{n^2}$ and thus
  $\|g_n-f\|=\nicefrac{1}{n^2}$.  If $(x_n^{\mathcal{G}})$ is the
  sequence generated from $(g_n)$ by
  \Cref{thm:approximate-convergence} we get
  $\lim_{n\to\infty} x_n^{\mathcal{G}} = \mu f =0$.
\end{example}

Normal convergence
intuitively expresses the fact that $(f_n)$ converges sufficiently
fast to $f$.  In the context of a sampling-based approximation such as
model-based reinforcement learning, this can be obtained by taking a
sufficiently large number of samples. As discussed in
\Cref{se:mann-error-algo}, this number can be bounded using
quantitative versions of the law of large numbers (Bernstein's inequality).

\section{Value Functions of MDPs and Probabilistic Systems}
\label{se:mdp}

\subsection{MDP Sampling}
\label{sec:mdp-sampling}

We now come to our main application: approximating the
optimal value functions of MDPs in a model-based reinforcement
learning setting.  We show that dampened Mann iteration on the Bellman
operators of sampled MDPs always almost surely converges to the
optimal value function for any Mann-Kleene scheme, even though
  the requirements on the corresponding function sequence derived
  above are not necessarily met.

For an MDP $M =(S,A,T,R)$ we assume that the states, actions, and reward function are
known to the agent, and that the transition function is learned through sampling in the
following sense:
For every state $s\in S$ and every action $a\in A(s)$, there is a family of independent discrete
random variables $(X^{s,a}_n)_{n\in\N}$ with range $S$ defined on some probability space
$(\Omega, \mathcal{A}, \P)$ whose distributions are given by $\P[X^{s,a}_n = s'] = \Tr[]{s}{a}{s'}$.

We let $T_n$ be the maximum likelihood approximation to $T$ after performing $n$ random experiments
on each state-action pair, that means
\[
  \Tr[_n]{s}{a}{s'} \coloneqq \frac{|\{i \in \{1,\dots,n\} \mid X^{s, a}_i = s'\}|}{n}.
\]
Note that, in particular, we never over-estimate a probability $0$, i.e.
\begin{equation}\label{eq:sampling-property}
  \Tr[]{s}{a}{s'}=0\Rightarrow \Tr[_n]{s}{a}{s'}=0\qquad\text{for all }s,s'\in S,a\in A(s).
\end{equation}

The $n$-th approximation of $M$ is  given by
$M_n = (S, A, T_n, R)$ with $f_n$ and $g_n$ being the
(state and state-action)
value iteration functions of $M_n$.  By the
law of large numbers, we then infer
that $\P\big[ \lim_{n\to\infty} \Tr[_n]{s}{ a}{s'} = \Tr[]{s}{ a}{s'} \big] = 1$ for all $s,s'\in S,a\in A(s)$.
Hence the sequences $(f_n)$ and $(g_n)$
converge to true iteration functions $f_M$ and $g_M$ of $M$ almost
surely.

\begin{assumption}
  \label{as:mdp-setting}
  Let $M=(S,A,T,R)$ be an MDP with finite value, let $(M_n)$ be given
  by a sampling process as described above with $T_n\to T$.  We denote by
  $f_n\colon \Rp^S\to\Rp^S$ and $f\colon \Rp^S\to\Rp^S$ as well as
  $g_n\colon \Rp^{S\times A}\to\Rp^{S\times A}$ and
  $g\colon \Rp^{S\times A}\to\Rp^{S\times A}$ the
  Bellman operators of $M_n$ and $M$, respectively
  (see~\eqref{eq:bellman-ex}).
\end{assumption}

\begin{toappendix}
  \label{app:mdp}
  Since we are we are working in the \emph{non-discounted} setting, in
  order to ensure well-definedness of the least fixpoint of the
  Bellman iteration functions $f$ and $g$ of a MDP, which
  correspond to the optimal state- or state-action-values of $M$,
  respectively, we need to restrict the reward
  functions by asking that when staying inside a MEC the reward is
  zero.

  \begin{definition}[Proper reward]
    \label{de:proper-reward}
    Let $M = (S,A,T)$ be an MDP and let $R$ be a corresponding reward
    function. We say that $R$ is a \emph{proper reward} for $M$ if for all
    $(S',A')\in\mathrm{MEC}(M)$ and $s,s'\in S'$, $a\in A'(s)$ it holds
    $R(s,a,s')=0$.
  \end{definition}
\end{toappendix}

Note that \Cref{as:mdp-setting} implies \Cref{as:approximate-setting}.
From now on we will concentrate mostly on the state-value Bellman operator
$f_M$ since the result for the state-action-value operator will be derivable
as a corollary.
We can show that iterations based on a Mann scheme using MDP functions stay bounded. %

\begin{lemmarep}
  \label{le:boundedness}
  Under \Cref{as:mdp-setting}, given a Mann scheme
  $\mathcal{S}$ and an initial point $v\in\Rp^S$,
  consider the iteration
  $\mathcal{F}=(\mathcal{S},(f_n),v)$.
  Then, the sequence $(x_n^{\mathcal{F}})$
  is bounded.
\end{lemmarep}
\begin{proof}
Let us consider the iteration $\mathcal{F}\sqcup\mathrm{id} = ((f_n \sqcup \mathrm{id}), (\alpha_n), (\beta_n), v)$ instead.

For all states $s, t$ and actions $a$ with $0 < \Tr[]{s}{ a}{t } < 1$ fix numbers
\[0 < \underline{T}^{\app}(s, t) < \Tr[]{s}{a}{t} < \overline{T}^{\app}(s, t) < 1\]
Note that eventually
\begin{equation}\label{Corollary:SampledMDPs-Eq:Bounds}
\underline{T}^{a}(s, t) < \Tr[_n]{s}{ a}{t} < \overline{T}^{a}(s, t) \text{ for all such } s,t,a
\end{equation}
Now for a state $s$ of $M$ and each action $a\in A(s)$ let $\R_a(s) := \{t_1,\dots,t_k\}$ be all the
states that can be reached from $s$ via $a$ with positive probability.
For all $t_i\in\R_a(s)$ we add a new action $b = b(s, a, t_i)$ to $M$ with $\R_b(s) = \R_a(s)$ and
\[\Tr[]{s}{b}{t_j} = \begin{cases} \underline{T}^{a}(s, t_j) & \text{if } j \neq i \\ 1 - \sum_{\gamma\neq i} \underline{T}^{a}(s, t_\gamma) &\text{if } j=i \end{cases}\]
Let $N$ be the MDP which arises from $M$ by adding all these new actions and setting all positive
rewards to some large $R$.
Note that $N$ still has the same MECs as $M$, therefore its value function $g$ has a fixpoint
and so does $g \sqcup \mathrm{id}$.
In particular the Mann iteration with $g \sqcup \mathrm{id}$ in every step stays bounded from every
starting point.
Thus, is remains to prove that almost surely we eventually have $f_n \sqcup \mathrm{id} \le g \sqcup \mathrm{id}$.

It suffices to show that this is implied by \eqref{Corollary:SampledMDPs-Eq:Bounds}.
Fix $v\in\Rp^d$ and $s\in S$, clearly $\mathrm{id}(v)(s) \le (g \sqcup \mathrm{id})(v)(s)$. It remains to bound $f_n(v)(s)$, let $a\in A(s)$ be the action where $f_n(v)(s)$ is maximized, then
\[
f_n(v)(s) \le R + \sum_{t\in S} \Tr[_n]{s}{ a}{t} \cdot v(t) = R + \sum_{t\in\R_a(s)} \Tr[_n]{s}{ a}{t } \cdot v(t)
\]
Let $\R_a(s) = \{t_1,\dots,t_k\}$ with $v(t_1) \le \dots \le v(t_k)$, then successively shifting probability mass to the maximal value, we see that
\begin{align*}
f_n(v)(s) &\le \sum_{i=1}^{k-1} R + \underline{T}^{\app}(s, t_i) \cdot v(t_i) + \Big( 1 - \sum_{i=1}^{k-1} \underline{T}^{\app}(s, t_i) \Big) \cdot v(t_k)\\
&= R + \sum_{t\in S} \Tr[]{s}{b}{t} \cdot v(t) \le g(v)(s) \le (g \sqcup \mathrm{id})(v)(s)
\end{align*}
with $b = b(s, a, t_k)$.
  \qed
\end{proof}
 
Still, \Cref{thm:approximate-convergence} does not apply directly since approximations are obtained by a generic sampling process and thus convergence is neither monotone nor normal, in general. Still, we can conclude using a combination of the results about convergence in the ``exact case'' (\Cref{th:fixed}) and for power contractions (\Cref{th:convergence-for-power-contractions}).
In order to state the result more abstractly and concisely, we give a name to Mann schemes satisfying the two mentioned properties.

\begin{definition}[Regular Mann scheme]
  \label{de:regular-scheme}
  A Mann scheme $\mathcal{S}$ is \emph{regular} if
  \begin{enumerate}
  \item it is exact and
  \item in the approximated setting (\Cref{as:approximate-setting}),
    the sequence $(x_n^{\mathcal{F}})$ generated by the iteration
    $\mathcal{F}=(\mathcal{S},(f_n),x_0)$ converges to $\mu f$ for all
    $x_0\in X$, provided that it is bounded and $f$ is a power
    contraction.
\end{enumerate}
\end{definition}

By \Cref{th:fixed} and \Cref{th:convergence-for-power-contractions},
all Mann-Kleene schemes are regular, while it is currently unknown
whether all relaxed Mann-Kleene schemes are regular.

\subsection{Simple MDPs}
\label{ss:simple}

In general, the Bellman operators of MDPs are not contractions. 
We show that if we restrict to a suitable subclass
MDPs, they are power contractions.
To single out such class of MDPs, we recall the notion of (maximal)
end-components
\cite{bk:principles-mc,da:verification-probabilistic-thesis}.

\begin{definition}[Maximal end-components, simple MDPs]
  \label{de:MEC}
  Let an MDP $M=(S,A,T,R)$ be given. A tuple $(S',A')$ identifying a
  subprocess
  $M'=(S',A',T',R')$ with $\Tr[']{s}{a}{s'}=\Tr[]{s}{a}{s'}$ for all
  $s,s'\in S',a\in A'(s)$
  and $R'=\restr{R}{S'\times A'\times S'}$ is
  called an \emph{end-component} if the following conditions hold:
  \begin{itemize}
  \item $F(M')=\emptyset$,
  \item for all $s\in S',s'\in S\backslash S',a\in A'(s)$, we have $\Tr[]{s}{a}{s'}=0$
   (the process is \emph{closed}),
  \item for all $s,s'\in S'$, there exists a path in $M'$
    from $s$ to $s'$.\footnote{
      Given two states $s, t \in S$, a path
      from $s$ to $t$ is a
      sequence $s_0=s,s_1,\dots,s_k=t\in S$ and $a_1,\dots,a_k\in A$
      with $a_j\in A(s_{j-1})$ and $\Tr[]{s_{j-1}}{a_j}{s_j}>0$ for all
      $1\leq j\leq k$.}
  \end{itemize}
  An end-component $(S',A')$ is called a \emph{maximal end-component} (MEC) iff
  there exists no larger end-component (with respect to (pointwise)
  inclusion). We will write $\mathrm{MEC}(M)$ for the set of all
  maximal end-components of the MDP $M$.

  We  call an MDP $M$ \emph{simple} if $\mathrm{MEC}(M)=\emptyset$.
\end{definition}

Intuitively, an end-component is a strongly connected subset $S'$ of
non-final states such that each state has an action giving probability zero of leaving $S'$.

We now show convergence of regular Mann schemes for simple MDPs.
We proceed by introducing a class of functions which we will refer to as
\like~functions, and proving convergence of a regular Mann scheme for
functions in this class. Then the result for simple MDPs follows by
observing that the corresponding Bellman operators are in the class.

\begin{definition}[Witness-bounded~functions]
  \label{de:MDP-like}
  A monotone function
  $u$ over $\Rp^S$ is
  called a \emph{\witness} if
  \begin{enumerate}
  \item
    \label{de:MDP-like:bound}
    for all $v, v' \in \Rp^S$ it holds
    $|u(v) - u(v')| \leq u(|v-v'|)$;

  \item
    \label{de:MDP-like:power}
    there are $k \in \N$, $0\leq c < 1$ such that for all
    $v \in \Rp^S$ we have
    $\| u^k(v) \| \leq c \| v \|$.

  \end{enumerate}

  The function
  $h\colon \Rp^S\rightarrow\Rp^S$ is
  called \emph{\like} if there is a
  \witness~$u\colon
  \Rp^S\rightarrow\Rp^S$ such that
  $|h(v) - h(v')| \leq u(|v-v'|)$ for all
  $v, v' \in \Rp^S$.
\end{definition}

Note that a \witness~$u$ is \like, as one can take $u$ itself as its
\witness. Also, we can show that \like~functions are power contractions and they admit
a least fixpoint.

\begin{propositionrep}[Fixpoints of \like~functions]
  \label{pr:MDP-fixpoint}
  We assume a \like~function $h\colon \Rp^S\rightarrow\Rp^S$ with
  \witness~$u\colon \Rp^S\rightarrow\Rp^S$, $k \in \N$, and $0\leq c < 1$
  such that $\| u^k(v) \| \leq c \| v \|$. Then
  \begin{enumerate}
  \item
    \label{pr:MDP-fixpoint:power}
    $h$ is a power contraction;

  \item
    \label{pr:MDP-fixpoint:fix}
    $h$ admits a least fixpoint $\mu h$ such that $\| \mu h \| \leq \frac{1}{1-c} \|h^k(0)\|$.
  \end{enumerate}
\end{propositionrep}

\begin{proof}
  \eqref{pr:MDP-fixpoint:power} We show that $h^k$ is a contraction
  with factor $c$.  For all $v, v' \in \Rp^S$,
  repeatedly applying property \eqref{de:MDP-like:bound} of \Cref{de:MDP-like} and using monotonicity
  of $u$ we obtain
  \begin{center}
    $| h^k(v) - h^k(v')| \leq u(|h^{k-1}(v)-h^{k-1}(v')|) \leq  \ldots
     \leq u^k(|v-v'|)$
   \end{center}
   hence
   \begin{center}
     $\|h^k(v) - h^k(v')\| \leq \|u^k(|v-v'|)\| \leq c \|\ |v-v'| \ \| = c \|v-v'\|$.
   \end{center}
   where the second inequality uses \eqref{de:MDP-like:power} of
   \Cref{de:MDP-like}.  \bigskip

   \eqref{pr:MDP-fixpoint:fix} First observe that, by the previous
   point, for all $i \in \N$ it holds
   \begin{equation}
     \label{eq:step}
    \|h^{k(i+1)}(0) - h^{ki}(0) \| \leq c^i \|h^k(0) - h^0(0)\| = c^i \|h^k(0)\|
  \end{equation}
  where we write  $0$ to denote the constant function $0$.

  This means that if we iterate $h$ over $0$, we obtain
  \begin{align*}
    \|h^{k(i+1)}(0)\|
    & = \| \sum_{j=0}^i ( h^{k(j+1)}(0) - h^{kj}(0)) \| \\
    & \leq  \sum_{j=0}^i \| h^{k(j+1)}(0) - h^{kj}(0) \| & \mbox{[by \eqref{eq:step}]}\\
    & \leq \sum_{j=0}^i c^j \|h^k(0)\| & \mbox{[since $c<1$ and $\sum_{j=0}^\infty c^j = \frac{1}{1-c}$]}\\
    & \leq \frac{1}{1-c}  \|h^k(0)\|
  \end{align*}

  Since all iterates are below a fixed bound
  $b = \frac{1}{1-c} \|h^k(0)\|$, we can restrict $h$ to the lattice
  $[0,b]^S$. By Knaster-Tarski the restriction has a least fixpoint
  $\mu h$ which is also the least fixpoint of the original
  function. Hence $h$ has a
  fixpoint $\mu h$ and $\| \mu h \| \leq \frac{1}{1-c} \|h^k(0)\|$.
  \qed
\end{proof}

\begin{toappendix}
We need a preliminary technical lemma.

\begin{lemmarep}[Ranking MDPs]
  \label{le:rank}
  Let $M=(S,A,T,R)$ be a simple MDP.
  Then there is a function $\rank\colon S\to\N_0$ such that for all $s \in S$,
  it holds $\rank[s] < |S|$, $\rank[s] =0$ implies $s \in F(M)$ and
  for all $a \in A(s)$ there exists $s' \in S$ with $\Tr[]{s}{a}{s'} >0$ and
  $\rank[s'] < \rank[s]$.
\end{lemmarep}

\begin{proof}
  The function $\rank$ can be defined inductively as follows.
  \begin{itemize}

  \item $\rank[s] = 0$ for $s \in F(M)$;
  \item Assume that the states with $\rank[s] \leq k$ have been
    already identified and let $S_k$ be the set of such
    states. Consider the set of states
    \begin{center}
      $S' = \{ s \in S \setminus S_k \mid \forall a \in A(s). \exists
      s' \in S_k.\ \Tr[]{s}{a}{s'} >0 \}$.
    \end{center}
    We have $S' \neq \emptyset$, otherwise, if for all
    $s \in S \setminus S_k$ there is $a \in A(s)$ such that for all
    $s'$ with $\Tr[]{s'}{a}{s'' } > 0$ it holds
    $s'' \in S \setminus S_k$, then $S \setminus S_k$ would contain an
    end-component, violating the hypotheses.
 
    Then let $\rank[s] = k+1$ for all all $s \in S'$.
  \end{itemize}

  Function $\rank$ satisfies the desired properties by
  construction. In particular, since at each step $S' \neq \emptyset$,
  there will be at most $|S|$ iterations, hence $\rank[s] \leq |S|$.
  \qed
\end{proof}
\end{toappendix}

The Bellman operators of a simple MDP are \like, which will shed more
intuition on the concept. First, such functions are power contractions
(cf.~\Cref{de:MDP-like}\eqref{de:MDP-like:power}) and second, observe
that expectation and maximum are the central operators of the Bellman
equations and they satisfy
\Cref{de:MDP-like}\eqref{de:MDP-like:bound}), unlike, for instance,
the minimum. In fact the witness $u$ for an MDP function is based on
an MDP with the same parameters, but setting all rewards to $0$.

The fact that the Bellman operators of simple MDPs are power
contractions is probably folklore (see e.g. \cite{EZ62:OPSDS}). Here
we will prove a more general statement.

\begin{propositionrep}[Simple MDP functions are \like]
  \label{pr:MDP-mdp-like}
  Let $M=(S,A,T,R)$ be a simple MDP.  Then, its Bellman operator $f_M$
  is \like.
\end{propositionrep}

\begin{proof}
  Let us prove that $f_M$ is \like
  with \witness~$u\colon \Rp^S \to \Rp^S$ defined by
  \begin{center}
    $u(v)(s) = \max_{a \in A(s)} \sum_{s'\in S}\Tr[]{s}{a}{s'} v(s')$.
  \end{center}
  In fact, for
  all $v, v' \in \Rp^S$ we have,
  for all $s \in S$:
  \begin{align*}
    &|f(v)(s) - f(v')(s)| =\\
    & \quad = | \max_{a \in A(s)} \sum_{s'\in S}\Tr[]{s}{a}{s'}(R(s,a,s')+v(s')) -\\
    & \qquad \max_{a \in A(s)} \sum_{s'\in S}\Tr[]{s}{a}{s'}(R(s,a,s')+v'(s'))|\\
    & \quad \leq \max_{a \in A(s)} |\sum_{s'\in S}\Tr[]{s}{a}{s'}(R(s,a,s')+v(s')) -\\
    & \qquad \qquad \quad \sum_{s'\in S}\Tr[]{s}{a}{s'}(R(s,a,s')+v'(s'))|\\
    & \quad \leq |\max_{a \in A(s)} \sum_{s'\in S} \Tr[]{s}{a}{s'} (v(s') - v'(s'))\\
    & \quad \leq \max_{a \in A(s)} \sum_{s'\in S} \Tr[]{s}{a}{s'} |v(s') - v'(s')|\\
    & \quad = u(|v-v'|)(s)
  \end{align*}
  Thus, $|f(v) - f(v')| \leq u(|v-v'|)$ as desired.

  The function $u$ is a \witness, in fact:

  \begin{enumerate}

  \item for all $v, v' \in \Rp^S$ we have
    $|f(v) - f(v')| \leq u(|v-v'|)$.
    The proof is analogous to the one above.

  \item In order to show that $u$ is a power contraction, we consider
    the function $\rank[s]$ from \Cref{le:rank} and prove, by
    induction that for all $s \in S$ it holds that
    $u^{\rank[s]+1}(v)(s) \leq c_s \|v\|$ for a suitable constant
    $c_s$, with $0 \leq c_s < 1$.

    \begin{itemize}
    \item ($\rank[s]=0$) In this case $s \in F(M)$ and thus
      $u^{0+1}(v)(s) = u(v)(s)=0 \leq 0 \cdot \|v\|$, as desired.

    \item ($\rank[s] >0$) First observe that, given any $a
     \in A(s)$, by \Cref{le:rank}, there is is a state $s_a' \in S$ such that
      $\Tr[]{s}{a}{s_a' } >0$ and $\rank[s'] < \rank[s]$. Hence
      \begin{align*}
        & u^{\rank[s]}(v)(s) = &\\
        & = \max_{a \in A(s)} \sum_{s''\in S}\MTr[]{s}{s''} u^{\rank[s]-1}(v)(s'')\\
        & = \max_{a \in A(s)}  \MTr[]{s}{s_a' } u^{\rank[s]-1}(v)(s'_a) + \sum_{s'' \neq s'_a} \MTr[]{s}{s''} u^{\rank[s]-1}(v)(s'')\\
        & \leq \max_{a \in A(s)}  \MTr[]{s}{s_a' } c_{s'_a} \|v\|+  \sum_{s'' \neq s'_a} \MTr[]{s}{s''} \|v\|\\
        & = \max_{a \in A(s)}  (\MTr[]{s}{s'_a } c_{s'_a} +  \sum_{s'' \neq s'_a} \MTr[]{s}{s''}) \|v\|\\
        & = c_{s} \cdot  \|v\|
      \end{align*}
      where
      $c_s = \max_{a \in A(s)} (\MTr[]{s}{s'_a } c_{s'_a} + \sum_{s''
        \neq s'_a} \MTr[]{s}{s''})<1$.

      Given the above, we conclude that
      $u^{|S|}(v)(s) \leq c_s \cdot \|v\|$ for all $s \in S$ and thus
      $\|u^{|S|}(v)\| \leq c \cdot \|v\|$ where
      $c = \max_{s\in S} c_s < 1$, as desired.
    \end{itemize}

  \end{enumerate}
  \qed
\end{proof}

In particular, we can use \Cref{th:convergence-for-power-contractions} to get convergence
of  Mann iteration.

\begin{theorem}[Convergence for simple MDPs]
  Let $M, M_n$ be MDPs as in \Cref{as:mdp-setting}. Given a regular
  Mann scheme $\mathcal{S}$ and an initial point $v\in\Rp^S$,
  consider the iteration
  $\mathcal{F}=(\mathcal{S},(f_n),v)$.
  If $M$ is simple, then $x_n^{\mathcal{F}}\to \opts{M}$.
\end{theorem}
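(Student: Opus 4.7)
The plan is to assemble the convergence from three pieces already available: (i) boundedness of the iterates from \Cref{le:boundedness}, (ii) the fact that the Bellman operator of a simple MDP is a power contraction, obtained by combining \Cref{pr:MDP-mdp-like} and \Cref{pr:MDP-fixpoint}, and (iii) the very definition of a regular Mann scheme.

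First I would fix a realisation of the sampling process from the almost-sure event on which $T_n \to T$ pointwise (guaranteed by the strong law of large numbers, as recalled in \Cref{sec:mdp-sampling}). On this event, the sampled Bellman operators $f_n$ converge pointwise to $f_M$, and since each $f_n$ and $f_M$ is monotone and non-expansive with $\fix{f_M}\neq\emptyset$ (because $M$ has finite value by \Cref{as:mdp-setting}), \Cref{as:approximate-setting} is satisfied with $X=\Rp^S$. By \Cref{le:boundedness}, the sequence $(x_n^{\mathcal{F}})$ produced by the iteration $\mathcal{F}=(\mathcal{S},(f_n),v)$ is bounded.

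Next I would invoke the simplicity of $M$: since $\mathrm{MEC}(M)=\emptyset$, \Cref{pr:MDP-mdp-like} applies and yields that $f_M$ is \like. Then \Cref{pr:MDP-fixpoint}\eqref{pr:MDP-fixpoint:power} gives that $f_M$ is a power contraction, so in particular $f_M$ has a unique fixpoint, which must therefore coincide with $\mu f_M = \opts{M}$.

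Finally, since $\mathcal{S}$ is a regular Mann scheme (\Cref{de:regular-scheme}), the boundedness of $(x_n^{\mathcal{F}})$ together with the fact that the limit function $f_M$ is a power contraction gives $x_n^{\mathcal{F}} \to \mu f_M = \opts{M}$ on the chosen almost-sure event, and therefore almost surely overall. The only subtlety worth emphasising is that \Cref{le:boundedness} must be applied pathwise on the almost-sure event where the sampled approximations remain within the bounds used in its proof (cf.~\eqref{Corollary:SampledMDPs-Eq:Bounds} in that proof); outside of this routine measure-theoretic bookkeeping, the argument is a direct composition of the cited lemmas.
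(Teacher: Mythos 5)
Your proposal is correct and matches the paper's intended argument: the theorem is obtained exactly by combining \Cref{le:boundedness} (boundedness of the iterates), \Cref{pr:MDP-mdp-like} with \Cref{pr:MDP-fixpoint} (the Bellman operator of a simple MDP is \like, hence a power contraction with unique fixpoint $\opts{M}$), and the second defining property of a regular Mann scheme (\Cref{de:regular-scheme}, backed by \Cref{th:convergence-for-power-contractions}). The only cosmetic difference is your almost-sure framing: \Cref{as:mdp-setting} already fixes a realization with $T_n\to T$, so the pathwise/measure-theoretic remarks are redundant but harmless.
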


\subsection{General MDPs}
\label{ss:general-mdps}

When MDPs have end-components
\cite{bk:principles-mc,bccfkkpu:mdps-learning,hm:interval-iteration-mdps},
the iteration functions are no longer power contractions, which implies that they might not admit fixpoints. 
As anticipated in \Cref{se:MDP-basics}, we will work under the assumption that all MDPs have
finite value, i.e., the Bellman operator has a (finite) fixpoint and thus a least fixpoint.  We observe here
that MDPs with finite value can be naturally characterized as MDPs
where no reward is given inside end-components.
As observed later in \Cref{ss:non-discounted} this setting
properly
generalises MDPs with discounted return.

\begin{lemmarep}[Characterising MDPs with finite value]
  An MDP $M$ has a finite value, i.e., its Bellman operator $f_M$ has a (least) fixpoint, if and
  only if for all maximal end-components $(S',A')$ and all $s\in S',a\in A'(s),$ and $s'\in S'$ with
  $\Tr[]{s}{a}{s'}>0$, we have $R(s,a,s')=0$.
\end{lemmarep}
\begin{proof}
  By existence of an optimal stationary policy, the value $\opts{M}$
  is finite (i.e., $f_M$ has a fixpoint) if and only if
  $v^\pi = v^\pi_M$ is finite for all policies $\pi\in\Pi(M)$ (i.e.,
  $f_M^\pi$ has a fixpoint).

  Now, fix a policy $\pi\in\Pi(M)$.
  Note that end-components of $M^\pi$ are also end-components of $M$.

  Assume that $R^\pi(s,s')=0$ for all $s,s'\in S'$ with $\MTr[^\pi]{s}{s'}>0$ in every end-component
  given by $S'$ of $M^\pi$.\footnote{Here, we abbreviate $R^\pi(s,s')=R(s,\pi(s),s')$ and $\MTr[^\pi]{s}{s'}=\Tr[]{s}{\pi(s)}{s'}$.}

  Then, we have for all $v\in\Rp^S$ with $\restr{v}{S'}=0$ that
  \begin{align*}
    f_M^\pi(v)(s)&=\sum_{s'\in S}\MTr[^\pi]{s}{s'}(R^\pi(s,s')+v(s'))\\
    &=\sum_{s'\in S'}\MTr[^\pi]{s}{s'}R^\pi(s,s')\\
    &=0
  \end{align*}
  for all $s\in S'$ whence
  \[\restr{v^\pi}{S'}=\lim_{n\to\infty}\restr{(f_M^\pi)^n(0)}{S'}=0.\]
  On the other hand, let $s\in S$ be such that $v^\pi(s)>0$.
  By the above, we have that $s\notin S'$ for all end-components $S'$.\footnote{For the sake of
  readability, we assume that $F=\emptyset$ (the argument is the same with final states treated
  analogously to end-component states).}
  Then, there exists a path of length $|S|$ to a state $s'\in S'$ for some end-component $S'$ of
  $M^\pi$ as otherwise, $s$ together with the set of states reachable from $s$ would form an end
  component.
  Let us denote the probability of this path by $p(s)$.
  But then
  \begin{align*}
    \max_{s\in S}v^{\pi}(s)&\leq\max_{s\in S,v^\pi(s)>0}|S|\cdot R+\sum_{s'\in S}\P^{\pi}(\mathbf{s}_n=s'\mid \mathbf{s}_0=s)v^{\pi}(s')\\
    &\leq |S|\cdot R+(1-\min_{s\in S,v^\pi(s)>0}p(s))\max_{s'\in S}v^{\pi}(s')\\
    &\leq |S|\cdot R / \min_{s\in S,v^\pi(s)>0}p(s)<\infty.
  \end{align*}

  Thus, if $R(s,a,s')=0$ for all $s,s'\in S',a\in A'(s)$ with $\Tr[]{s}{a}{s'}>0$ in end-components
  $(S',A')$ of $M$, whence $R^\pi(s,s')=0$ for all $s,s'\in S'$ with $\MTr[^\pi]{s}{s'}>0$ in end
  components $S'$ of $M^\pi$ for all $\pi\in\Pi(M)$, we have $v^\pi$ is finite for all
  $\pi\in\Pi(M)$, whence also $\opts{M}$ is finite.

  Now assume that $r=R^\pi(s,s')>0$ for some $s,s'\in S'$ with $\MTr[^\pi]{s}{s'}>0$ in some end
  component of $M^\pi$ given by $S'$.
  Then, we have
  \begin{align*}
    v^\pi(s)&=\E^\pi[\sum_{n\in\N_0}\mathbf{r}_n\mid \mathbf{s}_0=s]\\
    &\geq r\sum_{n\in\N_0}\P^\pi(\mathbf{s}_n=s,\mathbf{s}_{n+1}=s'\mid\mathbf{s}_0=s)\\
    &=r\MTr[^\pi]{s}{s'}\sum_{n\in\N_0}\P^\pi(\mathbf{s}_n=s\mid\mathbf{s}_0=s)\\
    &=\infty
  \end{align*}
  since the state $s$ is essential (see, e.g.,~\cite{f:markov-chains}).

  Thus, if $\opts{M}$ is finite, whence all $v^\pi$ are finite, also $R^\pi(s,s')=0$ for all $s,s'\in S'$
  with $\MTr[^\pi]{s}{s'}>0$ in end-components $S'$ of $M^\pi$ and $\pi\in\Pi(M)$.
  By definition of end-components, this gives $R(s,a,s')=0$ for all $s,s'\in S',a\in A'(s)$ with
  $\Tr[]{s}{a}{s'}>0$ in end-components $(S',A')$ of $M$.
  \qed
\end{proof}

Still, the existence of multiple fixpoints, with the possibility of
getting stuck at a fixpoint that is not the least, brings additional
complications. An idea explored in the literature
\cite{bccfkkpu:mdps-learning,hm:interval-iteration-mdps} is to work on
an MDP obtained from the original one by quotienting each MEC into a
single node.  In order to deal with this quotient we need some
notation.  Given a function $\merge\colon S \to Q$, we define:
\begin{itemize}
\item \emph{Reindexing}:  $\reindex{\merge}\colon \Rp^Q \to \Rp^S$  defined, for all $w \in \Rp^Q$, by $\reindex{\merge}(w) = w \circ \merge$
\item \emph{Maximum}: $\maxf{\merge}\colon \Rp^S \to \Rp^Q$ defined, for all $v \in \Rp^Q$ and $e \in Q$ by $\maxf{\merge}(v)(e) = \max \{ v(s) \mid s \in S \land \merge(s)=e\}$.
\end{itemize}

When dealing with quotients, it is convenient to work with
MDPs $M = (S,A,T,R)$ where the action sets for different states are
disjoint, i.e., for $s, s' \in S$ if $s \neq s'$ then
$A(s) \cap A(s') = \emptyset$.
It is clear that any MDP can be transformed into an equivalent MDP
with disjoint action sets simply by relabelling the actions, without
changing the Bellman functions, whence we will, from now on,
w.l.o.g., work under the disjointness assumption.

\begin{definition}[Quotient MDP]
  \label{def:quotient-mdp}
  Let $M = (S,A,T,R)$ be an MDP with finite value.  Let $\equiv$
  denote the equivalence identifying states in the same MEC.  Consider
  $\merge\colon S \to S/_{\equiv}$ defined by mapping each state to
  its equivalence class: $\merge(s) = [s]$.  The quotient MDP is
  $M_\merge = (S_\merge, A_\merge, T_\merge, R_\merge)$ where
  $S_\merge =S/\!\equiv$,
  $A_\merge([s]) = \bigcup_{s' \in [s]} A(s')$.  Moreover, for
  $s, s' \in S$, $a \in A(s)$:  
  {\small
  \begin{align*}
    \Tr[_\merge]{[s]}{a}{[s']})=\sum_{s''\in[s']}\Tr[]{s}{a}{s''}
    &
    \qquad
    R_\merge([s], a,[s']) = \frac{\sum_{s''\in [s']} \Tr[]{s}{a}{s'}
    R(s,a,s')}{\Tr[_\merge]{[s]}{a}{[s']}}.  
  \end{align*}
  }

  The \emph{reduced quotient} $\hat{M}_\merge$ is defined as the MDP
  derived from $M_\merge$ by removing self-loops, i.e., modifying the
  definition of the set of enabled actions as follows:
  \begin{center}
    $\hat{A}_\merge([s]) = A_\merge([s]) \setminus \{ a \in
    A_\merge([s])\mid \Tr[_\merge]{[s]}{a}{[s]}=1\}$.
  \end{center}
  and adapting $\hat{T}_\merge$ accordingly.
\end{definition}

The quotient is an MDP with only singleton end-components, i.e.,
states with actions producing self-loops.  Since the reduced quotient
is obtained from the quotient by removing self-loops, it clearly has
no end-components, allowing to reuse results from
\Cref{ss:simple}. 

\begin{toappendix}
  We need another technical lemma.

  \begin{lemma}[Relating MDPs and quotients]
    \label{le:MDP-lift}
    Let $M = (S,A,T,R)$ be an MDP with finite value, let $\equiv$ by
    the equivalence identifying states in the same MEC, with
    $\merge\colon S \to S/_{\equiv}$ sending each state to its
    equivalence class, and let
    $M_\merge = (S_\merge, A_\merge, T_\merge,R_\merge)$ be its
    quotient as in \Cref{def:quotient-mdp}. Let $f_r = f_{M_\merge}$
    and $\hat{f}_\merge = f_{\hat{M}_\merge}$,
    then $f_\merge= \maxf{\merge} \circ f \circ \reindex{\merge}$.
    Moreover,
    $\hat{f}_\merge \leq f_\merge \leq \hat{f}_\merge \sqcup id$.
  \end{lemma}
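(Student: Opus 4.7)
The plan is to prove both claims by direct unfolding of definitions, with the second claim requiring a case split that invokes the characterisation of MDPs with finite value.

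For the equality $f_\merge = \maxf{\merge} \circ f \circ \reindex{\merge}$, I would fix $w \in \Rp^{S_\merge}$ and $[s] \in S_\merge$ and unfold:
\[
(\maxf{\merge} \circ f \circ \reindex{\merge})(w)([s]) = \max_{s' \in [s]} \, \max_{a \in A(s')} \sum_{s'' \in S} T(s',a,s'')\bigl(R(s',a,s'') + w([s''])\bigr).
\]
By the disjoint-action-sets convention, each $a \in A_\merge([s]) = \bigcup_{s' \in [s]} A(s')$ has a unique originating state $s_a \in [s]$, so the nested maxima collapse to a single $\max_{a \in A_\merge([s])}$. Grouping the inner sum by equivalence classes of target states, the quantity $\sum_{s''' \in [s'']} T(s_a, a, s''')$ is exactly $T_\merge([s], a, [s''])$, and the corresponding weighted reward sum divided by this probability is $R_\merge([s], a, [s''])$ by Definition~\ref{def:quotient-mdp}. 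Reassembling gives $f_\merge(w)([s])$.

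For the inequality chain, the lower bound $\hat{f}_\merge \leq f_\merge$ is immediate since $\hat{A}_\merge([s]) \subseteq A_\merge([s])$, so the $\max$ defining $\hat{f}_\merge$ ranges over a subset of the actions used for $f_\merge$. For the upper bound $f_\merge \leq \hat{f}_\merge \sqcup \mathrm{id}$, I would fix $[s]$ and case split on an action $a^\ast$ attaining the maximum in $f_\merge(w)([s])$. If $a^\ast \in \hat{A}_\merge([s])$, then $f_\merge(w)([s]) = \hat{f}_\merge(w)([s])$ and we are done. Otherwise $a^\ast$ has been removed, i.e., $T_\merge([s], a^\ast, [s]) = 1$, so the underlying action $a^\ast \in A(s_{a^\ast})$ in $M$ assigns probability one to successors lying in $[s]$. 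Consequently $s_{a^\ast}$ together with $a^\ast$ witnesses an end-component contained in the MEC that defines the class $[s]$, and by the characterisation of MDPs with finite value every reward $R(s_{a^\ast}, a^\ast, s'')$ with $s'' \in [s]$ and $T(s_{a^\ast}, a^\ast, s'') > 0$ must vanish. Hence $R_\merge([s], a^\ast, [s]) = 0$, which yields $f_\merge(w)([s]) = 0 + w([s]) = \mathrm{id}(w)([s]) \leq (\hat{f}_\merge \sqcup \mathrm{id})(w)([s])$.

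The main subtlety lies in this last case: connecting a probability-one self-loop in the quotient back to a reward-zero transition inside an MEC of $M$, which requires appealing to the finite-value characterisation rather than any purely structural property of the quotient. The rest of the argument is essentially bookkeeping that juggles the definitions of $T_\merge$, $R_\merge$, $\reindex{\merge}$, and $\maxf{\merge}$, plus the disjointness convention on action sets.
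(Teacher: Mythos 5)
Your proof is correct and follows essentially the same route as the paper's: a direct unfolding of definitions (collapsing the nested maxima via the disjoint-action convention and regrouping the sums by equivalence classes) for the equality $f_\merge = \maxf{\merge} \circ f \circ \reindex{\merge}$, and the observation that $\hat{M}_\merge$ differs from $M_\merge$ only by removing probability-one self-loop actions for the inequality chain. You are in fact slightly more explicit than the paper, which leaves implicit the point you spell out — that such a self-loop action stays inside a MEC and hence, by the finite-value characterisation, carries zero reward, so its contribution is exactly $\mathrm{id}(w)([s])$.
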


\begin{proof}
  Let $w \in \Rp^Q$.  Then for $e \in S/_{\equiv} = Q$ we have

  \begin{align*}
    & \maxf{\merge} \circ f \circ \reindex{\merge}(w)(e)  = \\
    & = \max_{s \in e} f(w \circ \merge)(s)\\
    & = \max_{s \in e} \max_{a \in A(s)} \sum_{s' \in S} \Tr[]{s}{a}{s'} (R(s,a,s')+ w(\merge(s')))\\
    & = \max_{s \in e} \max_{a \in A(s)} \sum_{e' \in S} (\sum_{s' \in e'} \Tr[]{s}{a}{s'} (R(s,a,s')+ w(e')))\\
    & = \max_{s \in e} \max_{a \in A(s)} \sum_{e' \in S} (\Tr[_\merge]{e}{a}{e'} w(e') + \sum_{s' \in e'} \Tr[]{s}{a}{s'} R(s,a,s'))\\
    & = \max_{s \in e} \max_{a \in A(s)} \sum_{e' \in S} (\Tr[_\merge]{e}{a}{e'} w(e') + \Tr[_\merge]{e}{a}{e'} R_\merge(e,a,e'))\\
    & = \max_{a \in A_\merge(e)} \sum_{e' \in S} (\Tr[_\merge]{e}{a}{e'} (w(e') + R_\merge(e,a,e'))\\
    & = f_\merge(w)(e)
  \end{align*}

  The second part,
  $\hat{f}_\merge \leq f_\merge \leq f_\merge \sqcup id$ is immediate
  from the fact that $M_\merge$ and $\hat{M}_\merge$ only differ in
  the enabled actions. The first inequality
  $\hat{f}_\merge \leq f_\merge$ is due to the fact that
  $\hat{M}_\merge$ has less enabled actions. The second
  $f_\merge \leq f_\merge \sqcup id$ derives from the fact that the
  only additional actions in $M_\merge$ are self-loops with
  probability $1$.
  \qed
\end{proof}
\end{toappendix}

The following proposition restates results from
\cite{bccfkkpu:mdps-learning,hm:interval-iteration-mdps} on collapsing
MECs.

\begin{propositionrep}[Fixpoints of non-discounted MDPs]
  \label{pr:MDP-has-fixpoint}
  Let $M$ be an MDP with finite value.  Let
  $\hat{M}_\merge$ be the reduced quotient
  (\Cref{def:quotient-mdp}).
  Then
  $\opts{M} = \reindex{\merge}(\opts{\hat{M}_\merge})$.
\end{propositionrep}

\begin{proof}
  First of all observe that, since $\hat{M}_\merge = (\hat{S}_\merge, \hat{A}_\merge, \hat{T}_\merge,
  \hat{R}_\merge)$ is a simple MDP, by
  \Cref{pr:MDP-mdp-like}, $\hat{f}_\merge$ is a \like~function
  and thus, by \Cref{pr:MDP-fixpoint}\eqref{pr:MDP-fixpoint:fix},
  it admits a least fixpoint $\mu \hat{f}_\merge$.

  By \Cref{le:MDP-lift}, $\hat{f}_\merge \leq f_\merge \leq \hat{f}_\merge \sqcup id$
  and thus $\mu \hat{f}_\merge \leq \mu f_\merge \leq \mu(\hat{f}_\merge \sqcup id)$. It is easy
  to see that $\mu (\hat{f}_\merge \sqcup id) = \mu \hat{f}_\merge$, since they produce
  the same sequence of Kleene iterates from $0$, and thus
  $\mu f_\merge = \mu \hat{f}_\merge$

  In order to conclude we show $\mu f = \reindex{\merge}(\mu f_\merge)$.
  First note that $\reindex{\merge}(\mu f_\merge)$ is a pre-fixpoint of $f$ and thus $\mu f \leq \reindex{\merge}(\mu f_\merge)$. In fact

  \begin{align*}
    & f(\reindex{\merge}(\mu f_\merge)) \leq\\
    & \leq \reindex{\merge}(\maxf{\merge}(f(\reindex{\merge}(\mu f_\merge))))\\
    &  = \reindex{\merge}(f_\merge(\mu f_\merge)) & \mbox{[by \Cref{le:MDP-lift}, $f_\merge = \maxf{\merge} \circ f \circ \reindex{\merge}$ ]}\\
    & = \reindex{\merge} (\mu f_\merge)
  \end{align*}

  For the converse inequality, first note that if
  $v \in \Rp^S$ is a fixpoint of $f$ then
  \begin{center}
    $v = \reindex{\merge}(\maxf{\merge}(v))$,
  \end{center}
  i.e., for all MECs $E$, for any two states
  $s, s' \in E$, $v(s)=v(s')$. In fact, assume that
  $v \neq \reindex{\merge}(\maxf{\merge}(v))$ and let $E$ be a MEC where $v$ has not the
  same value on all states. Let $\pi$ be a policy which ``cycles''
  in $E$, i.e., such that for all $s \in E$ it holds
  $\Tr[]{s}{\pi(s)}{s'} > 0$ implies $s' \in E$. Let $s \in E$ be
  such that $v(s) = \min \{ v(s') \mid s' \in E \}$ and
  $\Tr[]{s}{\pi(s)}{s'} > 0$ for some $s'$ such that
  $v(s) < v(s')$. A simple calculation shows that
  $f(v)(s) \geq f^\pi(v)(s) > v(s)$, hence $v$ cannot be a
  fixpoint for $f$.

  Using the above, we show that $\maxf{\merge}(\mu f)$ is a
  pre-fixpoint of $f_\merge$:
  \begin{align*}
    & f_\merge(\maxf{\merge}(\mu f)) =\\
    & = \maxf{\merge}(f(\reindex{\merge}(\maxf{\merge}(\mu f)))) & \mbox{[by \Cref{le:MDP-lift}, $f_\merge = \maxf{\merge} \circ f \circ \reindex{\merge}$ ]}\\
    &  = \maxf{\merge}(f(\mu f)) & \mbox{[since $\reindex{\merge}(\maxf{\merge}(\mu f))=\mu f$]}\\
    & = \maxf{\merge}(\mu f)
  \end{align*}
  Therefore $\mu f_\merge \leq \maxf{\merge}(\mu f)$ and thus
  \begin{center}
    $\reindex{\merge}(\mu f_\merge) \leq \mu \reindex{\merge}(\maxf{\merge}(\mu f))) = \mu f$,
  \end{center}
  as desired.
  \qed
\end{proof}

We show that for the successive approximation of (the value iteration
function of) an MDP obtained by sampling as described above, the least
fixpoints of the approximations always converge to the least fixpoint
of the underlying MDP.

\begin{toappendix}
  \begin{lemma}[\cite{MR0240802}]
  Let $(f_n)$ be a sequence of contractions $f_n\colon [0,1]^S \to [0,1]^S$ with respective fixpoints
  $a_n$ and $f\colon [0,1]^S \to [0,1]^S$ be a contraction with fixpoint $a$.
  If $(f_n)$ converges pointwise to $f$, then $(a_n)$ converges to $a$.
  \end{lemma}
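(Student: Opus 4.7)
My plan is to exploit compactness of $[0,1]^S$ combined with the fact that every $f_n$ is in particular $1$-Lipschitz (since it is a contraction), and to use the standard subsequence argument: show that every convergent subsequence of $(a_n)$ must have limit $a$, and then conclude by compactness.

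Concretely, I would first note that $[0,1]^S$ is compact (it is a product of finitely many compact intervals), so by Bolzano–Weierstrass any subsequence of $(a_n)$ has a further subsequence $(a_{n_k})$ converging to some $a^* \in [0,1]^S$. The key step is then to pass to the limit in the fixpoint equation $f_{n_k}(a_{n_k}) = a_{n_k}$. To identify $\lim_k f_{n_k}(a_{n_k})$ with $f(a^*)$, I would use the triangle inequality
\[
\|f_{n_k}(a_{n_k}) - f(a^*)\| \leq \|f_{n_k}(a_{n_k}) - f_{n_k}(a^*)\| + \|f_{n_k}(a^*) - f(a^*)\|.
\]
The first term is bounded by $\|a_{n_k} - a^*\|$ because $f_{n_k}$ is a contraction, hence $1$-Lipschitz, and so tends to $0$; the second term tends to $0$ by pointwise convergence of $(f_n)$ to $f$ at the point $a^*$. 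Therefore $f(a^*) = \lim_k f_{n_k}(a_{n_k}) = \lim_k a_{n_k} = a^*$, so $a^*$ is a fixpoint of $f$, and by uniqueness (since $f$ is a contraction) $a^* = a$.

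Having shown that every convergent subsequence of $(a_n)$ converges to $a$, I conclude $a_n \to a$ by a standard compactness argument: if $(a_n)$ did not converge to $a$, there would exist $\varepsilon > 0$ and a subsequence with $\|a_{n_k} - a\| \geq \varepsilon$ for all $k$; by compactness this subsequence would itself contain a convergent sub-subsequence, whose limit would by the previous paragraph have to be $a$, contradicting $\|a_{n_k} - a\| \geq \varepsilon$.

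I do not expect any real obstacle here: the only mild subtlety is that the contraction factors of $f_n$ are allowed to vary, so one cannot directly bound $\|a_n - a\|$ by a single geometric argument such as $\|a_n - a\| \leq \|f_n - f\|_\infty/(1-q)$ without knowing a uniform contraction factor. The subsequence argument sidesteps this entirely, using only that each $f_n$ is $1$-Lipschitz and that $f$ has a unique fixpoint.
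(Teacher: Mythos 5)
Your argument is correct. Note that the paper does not prove this lemma at all: it simply imports it from the cited reference (Nadler's result on sequences of contractions), so your self-contained proof is a genuine addition rather than a rederivation of the paper's argument. What you do --- compactness of $[0,1]^S$, the estimate $\|f_{n_k}(a_{n_k}) - f(a^*)\| \leq \|a_{n_k} - a^*\| + \|f_{n_k}(a^*) - f(a^*)\|$ using only $1$-Lipschitzness of the $f_{n_k}$ and pointwise convergence at the single point $a^*$, uniqueness of the fixpoint of $f$, and the standard sub-subsequence trick --- is exactly the compactness-based proof that makes pointwise convergence suffice, and it is essentially the argument behind the cited theorem specialized to the compact cube. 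One small correction to your closing remark: the direct estimate $\|a_n - a\| \leq \|f_n - f\|_\infty/(1-q)$ does not require a uniform contraction factor for the $f_n$; writing $\|a_n - a\| \leq \|f_n(a_n) - f(a_n)\| + \|f(a_n) - f(a)\|$ only uses the contraction constant $q$ of the limit $f$. What it does require is \emph{uniform} convergence, which here would in fact also be available, since a pointwise convergent sequence of $1$-Lipschitz maps on the compact set $[0,1]^S$ converges uniformly (a fact the paper itself invokes elsewhere). So that route works too; your subsequence argument simply bypasses the uniform-convergence step and is, if anything, cleaner.
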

\end{toappendix}

\begin{theoremrep}
  Under \Cref{as:mdp-setting}, we have
  $\lim_{n\to\infty} \mu f_n = \mu f$.
\end{theoremrep}
\begin{proof}
  Fix a policy $\pi\colon S \to A$ and consider the
  ``Markov chain'' $M^\pi$ together with its value iteration
  function $f^\pi\colon \Rp^S \to \Rp^S$.

  Let $S'$ be the set of states $s \in S$ with
  $\mu f^\pi(s) = 0$ and consider
  \[
    \hat f^\pi(v)(s) =
    \begin{cases}
      f^\pi(v)(s) & \mbox{if  $s\notin S'$}\\
      0 & \text{if } s\in S'
    \end{cases}
  \]
  Note that, if $\equiv$ is the function identifying states in the
  same MEC of $M^\pi$ and $\merge\colon S \to S/_{\equiv}$ is the
  corresponding quotient function then $\hat f^\pi$ is
  $\reindex{\merge} \circ \hat{f}^\pi$. Thus, by
  \Cref{pr:MDP-mdp-like}, it is a \emph{power contraction}
  and, by \Cref{pr:MDP-has-fixpoint}, its unique
  fixpoint is equal to $\mu f^\pi$.

  We can analogously consider for $n\in\N$ the Markov Chain
  $M_n^\pi$, the set $S_n'$ and the maps
  $f_n^\pi$ and $\hat f_n^\pi$. Now note that $(f_n)$ converges
  to $f$ almost surely and moreover if $\Tr[]{s}{a}{t } = 0$ for some
  $a\in A$ and $s,t\in S$ then also $\Tr[_n]{s}{a}{t}  = 0$ for all
  $n\in\N$. It follows that almost surely $S_n' = S'$
  for all $n$ above a fixed threshold $N$. But this implies that
  $\hat f_n^\pi$ converges to $\hat f^\pi$ and the same is
  true for the $k$-fold iterations. Thus, in this case we have
\[\lim_{n\to\infty} \mu f_n^\pi = \lim_{n\to\infty} \mu \hat f_n^\pi = \lim_{n\to\infty} \mu (\hat f_n^\pi)^{k} = \mu (\hat f^\pi)^{k} = \mu \hat f^\pi = \mu f^\pi\]
Since $\pi$ is arbitrary and there are only finitely many
policies, we obtain
\[\lim_{n\to\infty} \mu f_n = \lim_{n\to\infty} \max_{\pi} \mu f_n^\pi = \max_{\pi} \lim_{n\to\infty} \mu f_n^\pi = \max_{\pi} \mu f^\pi = \mu f\]
almost surely.
  \qed
\end{proof}

Now, we can show that, also for general MDPs, a dampened Mann
iteration based on a regular Mann scheme converges to the optimal
value function. 
We first prove a preliminary result concerning
\like~functions.

\begin{propositionrep}[Mann iteration for \like~functions]
  \label{pr:mann-mdp-like}
  Let $h\colon \Rp^S\rightarrow \Rp^S$ be \like.  Let
  $(h_n)$ be a sequence of functions converging
  to $h$.  Given a regular Mann scheme
  $\mathcal{S} = ((\alpha_n),(\beta_n))$ and arbitrary $v\in \Rp^S$,
  consider the iteration
  $\mathcal{F}^{\app}=(\mathcal{S},(h_n\sqcup\id),v)$.

  If $(x_n^{\mathcal{F}^{\app}})$ is bounded, then $x_n^{\mathcal{F}^{\app}} \to \mu h$.
\end{propositionrep}
\begin{proof}
  Let us write $v_n^{\app}$ instead of $v_n^{\mathcal{F}^{\app}}$. Consider
  the iteration $\mathcal{F}=(\mathcal{S},h \sqcup id,v_0)$ and let us
  write $v_n$ for $v_n^{\mathcal{F}}$.

  Since $h \sqcup id$ is non-expansive, monotone and has at least one fixpoint, we know
  that $v_n \to \mu(h \sqcup id) = \mu h$ by regularity.

  In order to conclude we show that $v_n^{\app}$ converges to the same
  limit of $v_n$. Consider
  \begin{align*}
    & |v_{n+1}^{\app} - v_{n+1}| = \\
    & = | (1 -\beta_n)(\alpha_n (v_n^{\app} -v_n) + (1-\alpha_n) ((h_n \sqcup id)(v_n^{\app}) - (h \sqcup id)(v_n)))|\\
    &\leq (1 -\beta_n)(\alpha_n |v_n^{\app} -v_n|  + (1-\alpha_n) | (h_n \sqcup id)(v_n^{\app}) - (h \sqcup id)(v_n)|\\
    & \leq (1 -\beta_n)(\alpha_n |v_n^{\app} -v_n|  +
           (1-\alpha_n) (| h_n(v_n^{\app}) -h(v_n)| \sqcup  |id(v_n^{\app}) - id(v_n)|)
  \end{align*}

  Now, if $u$ is the function witnessing that $h$ is \like~we have
  \begin{align*}
    & | h_n(v_n^{\app}) -h(v_n)| \leq\\
    & \leq | h_n(v_n^{\app}) -h(v_n^{\app})| + | h(v_n^{\app}) -h(v_n)| \\
    & \leq \epsilon_n + u(| v_n^{\app} - v_n|)
  \end{align*}
  where $\epsilon_n = \sup_{i\geq n} \|h_n(v_n^{\app}) -h(v_n^{\app})\|$ ($\to 0$ since $h_n\to h$ and $(v_n^{\app})$ bounded).
  We can use this in the analysis of $|v_{n+1}^{\app} - v_{n+1}|$ and deduce that
  \begin{align*}
    & |v_{n+1}^{\app} - v_{n+1}| \leq \\
    & \leq (1 -\beta_n)(\alpha_n |v_n^{\app} -v_n|  +
           (1-\alpha_n) \max\{ u(|v_n^{\app} -v_n|) + \epsilon_n, |v_n^{\app}-v_n|\}\\
    & = (1 -\beta_n)(\alpha_n |v_n^{\app} -v_n|  +  (1-\alpha_n) (h'_n \sqcup id)(|v_n^{\app} -v_n|))
  \end{align*}
  where $h'_n(v) = u(v) + \epsilon_n$.

  If we denote by $(v_n')$ the sequence
  \begin{quote}
    $v_0' = | v_0^{\app}-v_0|$ \\
    $v_{n+1}' = (1-\beta_n) \left(\alpha_n v_n' + (1-\alpha_n)((h_n' \sqcup id)(v_n'))\right)$
  \end{quote}
  then $|v_n^{\app} - v_n| \leq v_n'$. Observe that for $v'_n$ we
  again obtain a dampened Mann iteration.

  Now, observe that $\epsilon_n$ is a decreasing sequence, converging
  to $0$ since $h_n \to h$ (and $(v_n^{\app})$ bounded). Therefore, we have that $h_n' \sqcup id \to u \sqcup id$
  monotonically from above. Moreover $h_n'$ is \like, in fact
  $|h_n'(v) - h_n'(v')| = |u(v) - u(v')| \leq u(|v-v'|)$.

  Hence by \Cref{pr:MDP-fixpoint} there exists $c$ such that:
  \begin{center}
    $|\mu h_n'| \leq \frac{1}{1-c} \|(h_n')^k(0)\| \leq \frac{k}{1-c} \epsilon_n$
  \end{center}
  since $(h_n')^k(0) \leq k \epsilon_n$.
  Hence $\mu h_n' \to 0 = \mu h'$ where $h' = u\sqcup \id$.

  Thus, by \Cref{thm:approximate-convergence}\eqref{it:thm:mt-mon}, we
  have that $v_n \to \mu(h' \sqcup id) = \mu h' = 0$. Since $v'_n$ is
  an upper bound for $|v_n^{\app} - v_n|$ we conclude that
  $v_n^{\app}$ converges to the same limit of $v_n$, i.e.,
  $\mu(h \sqcup id) = \mu h$, as desired.
  \qed
\end{proof}

\begin{toappendix}
\begin{lemma}
  \label{le:merge-upper-bound}
  Let $(h_n)$ and $(h'_n)$ be sequences of functions with
  $h_n\colon \Rp^S \to \Rp^S$,
  $h_n'\colon \Rp^{S'} \to \Rp^{S'}$ and
  $\merge\colon S \to S'$ be a function such that
  $h_n\circ \reindex{\merge} \le \reindex{\merge}\circ h'_n$.
  
  Let $x^\app_n, (x')^\app_n$ be the corresponding dampened Mann
  iterations, for which we assume that $x^\app_0\le
  \reindex{\merge}((x')^\app_0)$. Then $x^\app_n \le \reindex{\merge}((x')^\app_n)$ holds for
  all $n$.
\end{lemma}

\begin{proof}
  Let $w\in[0,1]^{S'}$ and $s\in S$. We first observe that
  $(c\cdot \reindex{\merge}(w))(s) = c\cdot w(\merge(s)) = (c\cdot w)(\merge(s)) = \reindex{\merge}(c\cdot
  w)(s)$, hence $c\cdot \reindex{\merge}(w) = \reindex{\merge}(c\cdot w)$. Additionally for
  $w_1,w_2\in[0,1]^{S'}$ we obtain
  $(\reindex{\merge}(w_1)+\reindex{\merge}(w_2))(s) = \reindex{\merge}(w_1)(s)+\reindex{\merge}(w_2)(s) =
  w_1(\merge(s))+w_2(\merge(s)) = (w_1+w_2)(\merge(s)) = \reindex{\merge}(w_1+w_2)(s)$, hence
  $\reindex{\merge}(w_1)+\reindex{\merge}(w_2) = \reindex{\merge}(w_1+w_2)$.
  
  Using these facts we can show by induction that
  \begin{eqnarray*}
    x^\app_{n+1} & = & (1-\beta_n)\cdot (\alpha_n\cdot x_n +
                       (1-\alpha_n)\cdot
                       h_n(x^\app_n)) \\
                 & \le & (1-\beta_n)\cdot (\alpha_n\cdot \reindex{\merge}((x')^\app_n) + (1-\alpha_n)\cdot
                         h_n(\reindex{\merge}((x')^\app_n))) \\
                 & \le & (1-\beta_n)\cdot (\alpha_n\cdot \reindex{\merge}((x')^\app_n) + (1-\alpha_n)\cdot
                         \reindex{\merge}(h'_n((x')^\app_n))) \\
                 & = & (1-\beta_n))\cdot \reindex{\merge}((\alpha_n\cdot (x')^\app_n + (1-\alpha_n)\cdot
                       h'_n((x')^\app_n)) \\
                 & = & \reindex{\merge}((x')^\app_{n+1}),
  \end{eqnarray*}
  \qed
\end{proof}
\end{toappendix}

With this, we can now observe that the dampened Mann iteration for an
arbitrary sampled MDP indeed converges to the optimal value function.

\begin{maintheoremrep}[Mann iteration over MDPs]%
  \label{th:MDP-bound}
  Let $M, M_n$ be MDPs as in \Cref{as:mdp-setting}. Given a regular Mann scheme
  $\mathcal{S}$ and any $v\in\Rp^S$, consider the iteration
  $\mathcal{F}^{\app}=(\mathcal{S},(f_n),v)$, generating a sequence
  $(v_n^{\mathcal{F}^{\app}})$.
  Then  $\lim_{n\to \infty} v_n^{\mathcal{F}^{\app}} = \opts{M}=\mu f_M$.
\end{maintheoremrep}

\begin{proofsketch}
  The full proof can be found in the appendix.
  
  \begin{itemize}
  \item Upper bound ($\lim_{n\to\infty}v_n^{\mathcal{F}^{\app}}\le \mu f_M$):
    Eventually, the approximating MDPs have the same maximal end
    components as the exact MDP, and we observe that each such
    maximal end-component can be merged to a singleton end-component
    (with loop) to obtain an over-approximation
    (\Cref{def:quotient-mdp} and \Cref{pr:MDP-has-fixpoint}). Now, an
    MDP with only singleton end-components can be over-approximated by
    $f_M\sqcup \id$, where $M$ is a simple MDP. We know that simple
    MDPs are \like~and that the sequence $(v_n^{\mathcal{F}^{\app}})$ is bounded
    whence we can conclude with \Cref{pr:mann-mdp-like}.

  \item Lower bound
   ($\mu f_M \le \lim_{n\to\infty}x_n^{\mathcal{F}}$): The expected
   reward for an MDP $M$ is the maximum of the expected rewards for
   Markov chains $M^\pi$ over all policies $\pi$.  In a Markov
   chain all states in an end-component have reward $0$. Fixing the
   value to $0$ in the states of an end-component gives us a power
   contraction with the same fixpoint.  From this one can deduce that
   the true expected reward ($\mu f_M$) is always a lower bound for
   the result of the dampened Mann iteration.\qedhere
  \end{itemize}  
\end{proofsketch}

\begin{proof}
  Let us write $v_n^{\app}$ instead of $v_n^{\mathcal{F}}$. We show that
  the cluster points of the sequence can be bounded from above and
  from below by $\mu f_M$, and then we conclude.

  First of all, observe that we can assume that all approximations
  $M_n$ have the same MECs which are in turn the MECs of $M$. In fact,
  by \Cref{as:mdp-setting}, the approximations $M_n$ are
  obtained by sampling and $T_n \to T$ (almost surely). Hence we can
  find an index $n_0$ such that for all $n \geq n_0$
  $\Tr[]{s}{a}{s'} > 0$ if and only if $\Tr[_n]{s}{a}{s'} > 0$ and
  consider the iteration starting from $v_{n_0}^{\app}$.

  Hereafter we work under this assumption, indicate by $\equiv$ is the
  function identifying states in the same MEC of $M^\pi$ and by
  $\merge\colon S \to S/_{\equiv}$ the corresponding quotient
  function.

  \paragraph{Upper bound}  Let $(f_\merge)$ be the function associated
  to the quotient $(M_n)_\merge$ defined as in
  \Cref{def:quotient-mdp}.
  Let $\mathcal{S} = ((\alpha_n),(\beta_n))$.
  By \Cref{le:merge-upper-bound}, the sequence
  $(w_n^{\app})$ generated by the iteration
  $\mathcal{F}=(\mathcal{S},(f_\merge),w_0^{\app})$ with
  $w_0^{\app}$ such that $v_0^{\app} \leq \reindex{\merge}(w_0^{\app})$ is such that
  that
  \begin{center}
    $v_n^{\app} \leq \reindex{\merge}(w_n^{\app})$
  \end{center}
  If we denote $(\hat{f}_\merge)$ be the function associated with the reduced quotient
  $(\hat{M}_n)_\merge$ then, by \Cref{le:MDP-lift},
  $(f_\merge) \leq (\hat{f}_\merge) \sqcup id$. Hence, if we consider the
  iteration
  $\mathcal{F}=(\mathcal{S},(\hat{f}_\merge) \sqcup id,z_0^{\app})$,
  with $z_0^{\app}= w_0^{\app}$, and call $(z_n^{\app})$ the induced sequence then

  \begin{center}
    $w_n^{\app} \leq z_n^{\app}$
  \end{center}

  Moreover, it is clear that $(\hat{f}_\merge)$ converges to $\hat{f}_\merge$, the
  function associated to the reduced quotient $\hat{M}_\merge$ of $M$.
  Since $\hat{f}_\merge$ is \like~and $(z_n^{\app})$ is bounded by \Cref{le:boundedness}, by
  \Cref{pr:mann-mdp-like}, $z_n^{\app}$ converges to $\mu \hat{f}_\merge$.

  Putting things together and using monotonicity and continuity of
  $\reindex{\merge}$, we have that
  $v_n^{\app} \leq \reindex{\merge}(w_n^{\app}) \leq \reindex{\merge}(z^{\app}_n)$. Moreover
  $\reindex{\merge}(z_n^{\app})$ converges to $\reindex{\merge}(\mu \hat{f}_\merge) = \mu f_M$, with
  the last equality given by \Cref{pr:MDP-has-fixpoint}.

  We thus conclude that $\mu f_M$ is an upper bound for the
  cluster points of $x_n^{\app}$.

  \paragraph{Lower bound:}
  Let $f=f_M$. Consider some fixed policy $\pi\colon S \to A$ and let
  $M_n^\pi$ be the Markov chain obtained by the $n$-th
  approximation of $M$. Clearly the corresponding iteration function
  satisfies $f_n^\pi \leq f_n$. Moreover, if we indicate by
  $E = \bigcup_n MEC(M_n)$, the set of states in some end-component of
  $M_n^\pi$ and, given
  $h\colon \Rp^S \to \Rp^S$ we write $h^E$ for the function

  \[h^E(v)(s)=\begin{cases}
      0&\text{if $s \in E$}\\
      h(v)(s)&\text{otherwise}.
    \end{cases}\]

  Then, if $\hat{f}^\pi_\merge$ is the function associated to the
  reduced quotient (see \Cref{def:quotient-mdp}), it holds
  $(f^\pi)^E = \reindex{\merge} \circ \hat{f}^\pi_\merge\circ\max_{\merge}$. Since the
  reduced quotient is simple, we infer that $(f^\pi)^E$ is a
  power contraction (note that $\max_\merge\circ\reindex{\merge}=id$ whence $((f^\pi)^E)^n = \reindex{\merge} \circ (\hat{f}^\pi_\merge)^n\circ\max_{\merge}$) and, by
  \Cref{pr:MDP-has-fixpoint},
  $\mu (f^\pi)^E = \reindex{\merge}(\mu\hat{f}^\pi_\merge) = \mu f^\pi$.

  Consider the sequence $y_n^\pi$
  generated by the iteration
  $\mathcal{F}=(\mathcal{S},(f_n^\pi)^E,y_0^\pi)$
  with $y_0^\pi \leq x_0^{\app}$. Remembering that
  $(f_n^E)^\pi \leq f_n^\pi \leq f_n$ we have that
  \begin{center}
    $y_n^\pi \leq x_n^{\app}$
  \end{center}

  Since $(f_n^\pi)^E$ converges to $(f^\pi)^E$ and
  $(f^\pi)^E$ is a power contraction, by boundedness of
  $y_n^\pi$ and regularity of the
  scheme, $y_n^\pi$ converges to $\mu(f^\pi)^E=\mu f^\pi$, which is
  therefore a lower bound for the cluster points of $x_n^{\app}$.

  The above applies to all possible policies $\pi\colon S \to A$,
  hence also $\max_{\pi} \mu f^\pi = \mu f$ is a lower
  bound.

  \bigskip Given that the upper and lower bound coincide with $\mu f$
  this is necessarily the limit of $x_n^{\app}$.
  \qed
\end{proof}

The convergence result for the sequence generated by the state-value
operator can be used to derive a similar result for the
state-action-value operator.

\begin{corollaryrep}[Mann iteration over MDPs -- state-action-value operator]
  Under \Cref{as:mdp-setting}, given a Mann-Kleene scheme
  $\mathcal{S}$ and any initial vector $q\in\Rp^{S \times A}$, consider the iteration
  $\mathcal{G}^\app=(\mathcal{S},(g_n),q)$, producing a sequence
  $(q_n^{\mathcal{F}^{\app}})$.
  Then,  $\lim_{n\to \infty} q_n^{\mathcal{G}^{\app}} = \optsa{M} = \mu g_M$.
\end{corollaryrep}

\begin{proof}

  Let us write $q_n$ instead of
  $q_n^{\mathcal{G}^\app}$.  Define $v \in \Rp^S$ by
  $v(s) = \max_{a \in A(s)} q(s,a)$ and consider the sequence $(v_n)$
  generated by the iteration
  $\mathcal{F}^{\app}=(\mathcal{S},(f_n),v)$ as in
  \Cref{th:MDP-bound}, for which it is known that
  $\lim_{n\to \infty} v_n^{\app} = v_M$.

  As a first step, we can show, by an inductive argument, that for
  all $n \in \N$ and $s \in S$
  \begin{equation}
    \label{eq:vn-qn-bound}
    \max_{a \in A(s)} q_n(a,s) \leq v_n(s)
  \end{equation}
  The base case holds by the choice of $v$.
  For the inductive step, observe that
  \begin{align*}
    v_{n+1}(s)
    & = (1-\beta_n)(\alpha_n v_n(s) + (1-\alpha_n) f_n(v_n)(s))\\
    & =(1-\beta_n)(\alpha_n v_n(s) + (1-\alpha_n) \max_{a\in A(s)} \sum_{s'\in S} \Tr[_n]{s}{a}{s'} (R(s,a,s') + v_n(s')))\\
    & \geq (1-\beta_n)(\alpha_n \max_{a\in A(s)} q_n(s,a) +
    (1-\alpha_n) \max_{a\in A(s)} \sum_{s'\in S} \Tr[_n]{s}{a}{s'}
    (R(s,a,s') \\
    & \qquad + \max_{a'\in A(s')} q_n(s',a')))\\
    & \quad \mbox{[by inductive hypothesis $v_n(s') \geq \max_{a\in A(s')} q_n(s',a')$]}\\
    & \geq \max_{a\in A(s)} (1-\beta_n)(\alpha_n q_n(s,a) + (1-\alpha_n) \sum_{s'\in S} \Tr[_n]{s}{a}{s'} (R(s,a,s')\\
    & \qquad  + \max_{a\in A(s')} q_n(s',a'))) \\
    & = \max_{a\in A(s)} (1-\beta_n)(\alpha_n q_n(s,a) + (1-\alpha_n) g_n(q_n)(s,a))\\
    & =\max_{a\in A(s)} q_{n+1}(s,a)\\
  \end{align*}

  We can now use the above to get an upper bound for $q_{n+1}(s,a)$.
  In fact
  \begin{align*}
    & q_{n+1}(s,a) \\
    &=(1-\beta_n)(\alpha_n q_n(s,a) + (1-\alpha_n) g_n(q_n)(s,a))\\
    &=(1-\beta_n)(\alpha_n q_n(s,a) + (1-\alpha_n) (\sum_{s'\in S}\Tr[_n]{s}{a}{s'}(R(s,a,s') + \max_{a'\in A(s')} q_n(s',a'))))\\
    & \leq (1-\beta_n)(\alpha_n q_n(s,a) + (1-\alpha_n) (\sum_{s'\in S}\Tr[_n]{s}{a}{s'}(R(s,a,s') + v_n(s'))))
  \end{align*}
  where the last inequality is justified by~\eqref{eq:vn-qn-bound}.

  Recalling that
  $\beta_n \to 0$ and $\alpha_n \to 0$, we have
  \begin{align*}
    & \lim_{n\to\infty} (1-\beta_n)(\alpha_n q_n(s,a) + (1-\alpha_n) (\sum_{s'\in S}\Tr[_n]{s}{a}{s'}(R(s,a,s') + v_n(s'))))\\
    & = \lim_{n\to\infty}  \sum_{s'\in S}\Tr[_n]{s}{a}{s'}(R(s,a,s') + v_n(s'))\\
    &
      = \sum_{s'\in S}\Tr[]{s}{a}{s'}(R(s,a,s') + \lim_{n\to\infty} v_n(s'))\\
    & = \sum_{s'\in S}\Tr[]{s}{a}{s'}(R(s,a,s') +  \opts{M}(s'))\\
    & = \optsa{M}(s,a)
  \end{align*}
  and thus
  \begin{equation}
    \label{eq:q-sup}
    \limsup_{n\to\infty} q_n(s,a) \leq \optsa{M}(s,a).
  \end{equation}

  Together with \Cref{lm:lower}, this implies
  $\lim_{n \to\infty} q_n = \optsa{M}$.
  \qed
\end{proof}

\subsection{About the (Non-)Discounted Reward Setting}
\label{ss:non-discounted}

In the context of (model-based) reinforcement learning, usually the
discounted total reward criterion is used, that is the aim of the
agent is to optimize the discounted total reward for some discount
factor $\gamma<1$, as described in \Cref{se:intro}.
This setting yields some desirable properties, mainly that the
resulting discounted Bellman operator is contractive, making the
optimal value its unique fixpoint and always well-defined.  It can
also be motivated by interpreting it as a scenario in which the agent
prefers an immediate reward over long-term rewards or where the
system terminates 
(or the agent stops its interaction)
at any step
with probability $1-\gamma$.

However,
we argue that the non-discounted total reward criterion
(i.e., $\gamma=1$)  not only can be handled with model-based
reinforcement learning approaches such as the iteration scheme treated
in this section, but is also more general and strictly more
expressive.

The total discounted reward setting can be interpreted as a
special case of the non-discounted one by adding a sink state which is
reached by every state with probability $1-\gamma$ (akin to its
interpretation as a system that stops with probability $1-\gamma$ in
each step).  Then the resulting MDP is not only simple in the
sense of \Cref{de:MEC} but its Bellman operators are, with slight
modifications, also contractions whence we can apply \Cref{th:MDP-bound}
to derive convergence results.
In particular, this gives a convergence result of simple versions of
model-based algorithms such as the Dyna-Q
algorithm~\cite{s:dyna-integrated-architecture,s:planning-incremental-dyn-prog}
where we consider only full-sweep updates and no direct update step.
In addition, the non-discounted case is often the preferable setting:
If, as suggested above, we interpret the discounted setting as an
instance of the non-discounted setting, in the former it is impossible
to have states where the system may \emph{not} terminate. This is a
restriction that, in practical applications, one might want to lift,
i.e., there might be system states in which it should be impossible
for the system to terminate. Even more importantly, adding discounts
does not always faithfully model the payoff or concrete benefit for
the agent. It might not be relevant if rewards are obtained now or in
the future, for instance if we simply want to compute the probability
of eventually reaching some
goal.%

\section{Reaching the Least Fixpoint by Fast Iteration}

\subsection{A Generic Approximation Algorithm}
\label{se:mann-error-algo}

In the previous section we proved a convergence result for MDPs, approximated by sampling. It would be desirable to extend such results
to more general settings involving probabilistic systems which are explored by sampling.
Indeed, our results allow the formulation of a
generic algorithm that can compute a sequence of values
converging to the least fixpoint of $f$ almost surely.  This can be
done in all cases where we can estimate the error $\|f_n - f\|$ at
least with a certain probability. The procedure, reported in~\Cref{Algorithm:GenericApproximationAlgorithm}, relies on a very simple idea:
make sure algorithmically that the sequence $(f_n)$ converges normally
to $f$ almost surely.
The correctness of
the procedure follows by the Borel-Cantelli Lemma~\cite{alma991015056379706467}.

\begin{algorithm}[t]
\noindent\KwIn{A map $n \mapsto f_n$ such that $(f_n)$ uniformly converges to $f$ almost
  surely. \\ Parameters $\gamma_i,\delta_i\in \Rp$ such that
  $\gamma_i>0$ and $\sum_{i\in\N} \gamma_i
    < \infty$ (analogously for $\delta_i$)}
\noindent\KwOut{A sequence $(x_i)$ in $\Rp^d$}
  
$x_0 := 0$\;
$i := 0$\;
\While{true} {
  Choose $n_i$ such that $\P\big[\|f_{n_i} - f\| \ge
  \gamma_i \big] \le \delta_i$\;
  $x_{i+1} := (1 - \beta_i)\cdot \big(\alpha_i \cdot x_i + (1 - \alpha_i) \cdot f_{n_i}(x_i)\big)$\;
  $i := i+1$\;
}
\caption{Dampened Mann Iteration with Probabilistic Guarantees}
\label{Algorithm:GenericApproximationAlgorithm}
\end{algorithm}

\begin{theoremrep}
For exact Mann schemes the sequence $(x_i)$ produced by \Cref{Algorithm:GenericApproximationAlgorithm} converges to the least fixpoint of $f$ almost surely.
\end{theoremrep}
\begin{proof}
By construction we have
\[\sum_{i=1}^\infty \P\big[\|f_{n_i} - f\| > \gamma_i \big]
  \le \sum_{i=1}^\infty \delta_i < \infty\]

The Borel-Cantelli Lemma \cite{alma991015056379706467} states:

Let $(A_n)_{n\in \N}$ be a sequence of events in a probability space
and assume that $\sum_{n=1}^\infty \P[A_n] < \infty$, i.e., the
sum of probabilities is finite. Then

\[ \P[\bigcap_{n=1}^\infty \bigcup_{k=n}^\infty A_k] =
  \P[\{\omega\in \Omega \mid \text{there are infinitely many
    $n\in\N$ with $\omega\in A_n$}\}] = 0 \]

So by applying this lemma we get that
\[\P\big[\|f_{n_i} - f\| > \gamma_i \text{ for infinitely many } i\big] = 0\]
Hence, almost surely we have $\|f_{n_i} - f\| \le \gamma_i$ eventually
and by \Cref{thm:approximate-convergence}\eqref{it:thm:mt-norm} the
sequence produced by the algorithm converges to the solution vector of
its input in that case.
 \qed
\end{proof}

The only precondition for the algorithm is that the indices $n_i$ in
line~$6$ can be chosen effectively. When the sequence $(f_n)$ is
generated by a sampling process as described above, this can often be
done by using Bernstein's
inequality~\cite{409cf137-dbb5-3eb1-8cfe-0743c3dc925f} -- a
quantitative version of the law of large numbers.

\begin{theorem}[Bernstein's inequality ]
  Let $(A_n)$ be a sequence of independent events on a probability
  space $(\Omega, \mathcal{A}, \P)$ with fixed probability
  $\P(A_i) = p$ for all $i$.  For $n\in\N$ let
  $S_n(\omega)=|\{1\le i\le n \mid \omega\in A_i\}|$. Then for all
  $\epsilon>0$ we have
  \[\P\bigg[\bigg|\frac{S_n}n - p\bigg|\ge\epsilon\bigg]\le
    2e^{-2\epsilon^2n}\]
\end{theorem}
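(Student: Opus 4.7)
\medskip
\noindent\textbf{Proof plan.} The statement is the classical two-sided Chernoff/Hoeffding bound for i.i.d.\ Bernoulli variables, so the plan is to follow the standard exponential-moment method. Set $X_i = \mathbf{1}_{A_i}$, so $S_n = X_1+\cdots+X_n$ with the $X_i$ independent and Bernoulli$(p)$. By symmetry it suffices to bound $\P[S_n/n - p \ge \epsilon]$ (the bound on $\P[p - S_n/n \ge \epsilon]$ is obtained by replacing $X_i$ with $1-X_i$, which is Bernoulli$(1-p)$), and then sum the two contributions to obtain the factor $2$ in the conclusion.

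\medskip
\noindent For the one-sided bound, the plan is to apply Markov's inequality to the exponential transform: for any $\lambda > 0$,
\[
\P[S_n - np \ge n\epsilon] = \P\bigl[e^{\lambda(S_n-np)} \ge e^{\lambda n\epsilon}\bigr] \le e^{-\lambda n \epsilon}\,\E\bigl[e^{\lambda(S_n-np)}\bigr].
\]
Independence of the $X_i$ then factors the right-hand expectation as $\prod_{i=1}^n \E[e^{\lambda(X_i - p)}]$, reducing the problem to bounding the moment generating function of a single centred Bernoulli variable.

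\medskip
\noindent The key analytic step — and the main technical obstacle — is Hoeffding's lemma: since $X_i - p$ is supported in the interval $[-p, 1-p]$ of length $1$, one has
\[
\E\bigl[e^{\lambda(X_i - p)}\bigr] \le e^{\lambda^2/8}.
\]
I would prove this by writing $\varphi(\lambda) = \log \E[e^{\lambda(X_i-p)}]$, noting $\varphi(0) = \varphi'(0) = 0$, and bounding $\varphi''(\lambda) \le 1/4$ using the fact that the variance of any random variable on an interval of length $1$ is at most $1/4$; a Taylor expansion then yields the claimed bound. Combining, we obtain
\[
\P[S_n - np \ge n\epsilon] \le \exp\!\bigl(-\lambda n \epsilon + n\lambda^2/8\bigr).
\]

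\medskip
\noindent Finally, optimize the right-hand side in $\lambda > 0$: the exponent is minimized at $\lambda = 4\epsilon$, giving $\exp(-2n\epsilon^2)$. The symmetric argument yields the same bound for the lower tail, and the union bound combines them into $2 e^{-2\epsilon^2 n}$, which is precisely the stated inequality. The only non-routine ingredient is Hoeffding's lemma; everything else is direct manipulation of the exponential Markov bound and independence.
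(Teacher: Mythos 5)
Your proposal is correct: the exponential Markov bound, Hoeffding's lemma via $\varphi''\le 1/4$ for a centred variable supported in an interval of length $1$, optimization at $\lambda=4\epsilon$, and the union bound over the two tails together give exactly the stated estimate $2e^{-2\epsilon^2 n}$. The paper does not prove this statement at all — it imports it from the literature with a citation — so there is no in-paper argument to compare against; your write-up is the standard Chernoff--Hoeffding derivation, which is precisely how the cited result is established (note that the bound as stated, with exponent $-2\epsilon^2 n$ independent of $p$, is really Hoeffding's inequality; a Bernstein-type bound would instead involve the variance $p(1-p)$, but your proof matches the inequality actually claimed).
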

If we sample the transition probabilities
$\Tr[_n]{s}{a}{s '}$ as detailed in \Cref{sec:mdp-sampling}, we can
estimate the error as below, which allows us to computationally
determine the indices needed in line $4$ of the algorithm:
\[ \P\big[|\Tr[_n]{s}{a}{s'} - \Tr[]{s}{a}{s'}| \ge \epsilon
  \big] \leq 2 e^{- 2 \epsilon^2n} \]

\subsection{Application: Simple Stochastic Games}
\label{se:SSGs}

As an example application of the technique in
Section~\ref{se:mann-error-algo} we consider (simple) stochastic games
or SSGs. An SSG is given by a finite set of nodes partitioned into
$V_{\mathrm{min}}$, $V_{\mathrm{max}}$, $V_{\mathrm{av}}$ and
$V_{\mathrm{sink}}$, and the following data:
$\eta\colon V_{\mathrm{min}}\cup V_{\mathrm{max}}\to \mathcal{P}(V)$
(successor function for Min and Max nodes),
$\eta_{\mathrm{av}}\colon V_{\mathrm{av}} \to \mathcal{D}(V)$ (where
$\mathcal{D}(V)$ is the set of probability distributions over $V$)
and a map $w\colon V_{\mathrm{sink}} \to [0,1]$ that assigns to each
sink state a payoff. The value vector of an SSG is the least fixpoint
of the monotone and non-expansive map $f\colon [0,1]^V \to [0,1]^V$
given by
\[
  f(p)(v) =
  \begin{cases}
    \min_{u\in\eta(v)} p(u) & \mathrm{if}\  v\in V_{\mathrm{min}} \\
    \max_{u\in\eta(v)} p(u) & \mathrm{if}\  v\in V_{\mathrm{max}} \\
    \sum_{u\in V} \eta_{\mathrm{av}}(v)(u) \cdot p(u) &
    \mathrm{if}\ v\in V_{\mathrm{av}} \\
    w(v) & \mathrm{if}\ v\in V_{\mathrm{sink}}
  \end{cases}
\]
As above we assume that the transition probabilities of the average
nodes are not known precisely but can only be approximated by
sampling. Formally, for each average node
$v\in V_{\mathrm{av}}$  we have  a collection of independent and identically
distributed random variables $(X^v_n)_{n\in\N}$ with values in $V$
and distribution $\P[X^v_n = u] = \eta_{\mathrm{av}}(v)(u)$
for all $u\in V$, $n\in\N$. For all $n$ we can estimate the true
transition probabilities by the relative frequency
\[(\eta_{\mathrm{av}})_n(v)(u) = \frac{|\{1\le k \le n \mid X^v_k =
    u\}|}{n}\] If we denote by $f_n$ the fixpoint function of the
SSG based on $(\eta_{\mathrm{av}})_n$, we have
\[\|f - f_n\| = \max_{v\in V_{\mathrm{av}}} \max_{u\in V}
  |\eta_{\mathrm{av}}(v)(u) - (\eta_{\mathrm{av}})_n(v)(u)|\] In
particular, by the law of large numbers the sequence of maps $(f_n)$
converges to the correct map $f$ with probability $1$. But more
importantly -- using Bernstein's inequality or similar estimates -- we
can, for every $\delta, \epsilon>0$ effectively produce an index
$n\in\N$ such that $\P[\|f - f_n\| \ge \delta] \le \epsilon$.
Using \Cref{Algorithm:GenericApproximationAlgorithm} we can therefore
iterate to the value of an SSG without knowing the true transition
probabilities by simply using approximations obtained from sampling
for each value update provided that sampling and value updates are
interleaved in a suitable way.

In \Cref{se:mdp} we have seen by a careful analysis that for the
special case of Bellman objective functions we do not have to pay
attention to the way in which sampling and value updates are
interleaved.
Here, on the other hand, we showed that our methods
are flexible enough to be applicable to
many other problems which can be phrased as computing
least fixpoints of monotone and non-expansive functions. Along the lines of the
example presented here, it is not strictly necessary
to redo an analysis of the new objective function as we did in
\Cref{se:mdp}. Rather, it is sufficient to be able to effectively
compute probabilistic error bounds for
\Cref{Algorithm:GenericApproximationAlgorithm} to work.

\section{Numerical Experiments}
\label{se:experiments}

While our main goal was to merge the existing theory behind the
dampened Mann iteration and approximative settings such as model-based
reinforcement learning, with a focus on formal proofs of convergence,
in this section, we want to gain some practice-based intuition about
how a dampened Mann iteration approximates the least fixpoints and how
it compares to classical Kleene or (undampened) Mann iteration.

Here, an iteration scheme $\mathcal{S}=((\alpha_n),(\beta_n))$
with $\alpha_n = 0$ will be called a \emph{Kleene} scheme,
otherwise (mostly, choosing $\alpha_n = 1/n$) a \emph{Mann} scheme.  It
is called \emph{undampened} if $\beta_n\equiv0$ and \emph{dampened}
otherwise (again, choosing $\beta_n =  1/n$ in general).  Furthermore, we
will consider schemes \emph{with resetting}, which iterate a single approximation from an initial point, i.e., 
$x_n^{\mathcal{F}}=T_n(f_n,T_{n-1}(f_n,\dots T_0(f_n,x)\dots))$
without reusing previous results. We will typically start with $x=0$
in a resetting iteration.

The experiments were performed on an Intel i7-11850H processor with 16GB of RAM.

\paragraph{Known function.} First of all, the dampened Mann iteration
does not guarantee, in any way, faster convergence when compared to classical Kleene iteration.
Its benefits lie in a more general applicability not in a better efficiency.
In fact, as to be expected when the exact function is known and the iteration starts at $0$, we can
see both in a simple example using the function
\[f(x)=\min(\max(\nicefrac{1}{2}x + \nicefrac{1}{2}, x), \nicefrac{1}{2}x + 1)\]
depicted in \Cref{fig:simple-exact}, as well as a more complex example using the (state-value)
Bellman operator of the Markov decision process depicted in \Cref{fig:mdp-num-example}, that
Kleene iteration converges faster than iteration schemes using positive parameters
$\alpha_n,\beta_n$ (\Cref{fig:simple-exact-results,fig:mdp-example-exact-results}).

\begin{figure}
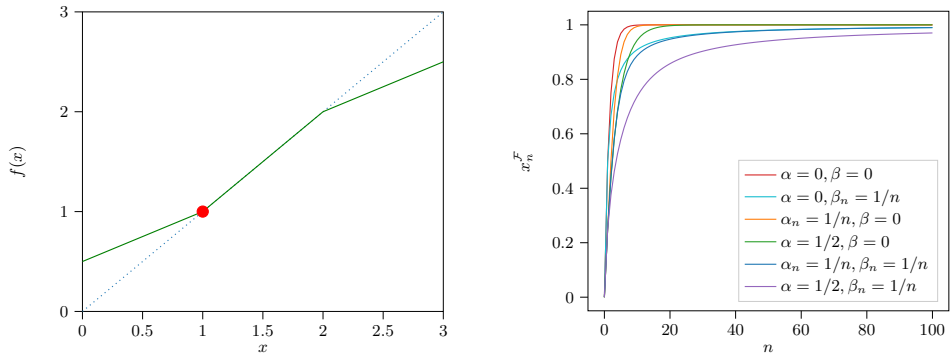

  \centering
  \begin{subfigure}[b]{.45\textwidth}
    \centering
\clearpage{}%

\clearpage{}%

    \caption{Generated sequences $(x_n^{\mathcal{F}})$ of the
      different iteration schemes (starting from $0$)}
    \label{fig:simple-exact-results}
  \end{subfigure}
  \caption{Results of different iteration schemes for an exact
    function $f$.}
\end{figure}

On the other hand, as we already discussed in \Cref{se:mann}, positive dampening factors
allow the iteration to converge to the least fixpoint for \emph{any} initial point whereas it is
easy to check that both (undampened) Kleene and undampened Mann iterations, in general, do not
converge or get stuck at another fixpoint (e.g., take the flip map $f(x,y)=(y,x)$ with initial
point $x_0=(1,0)$).

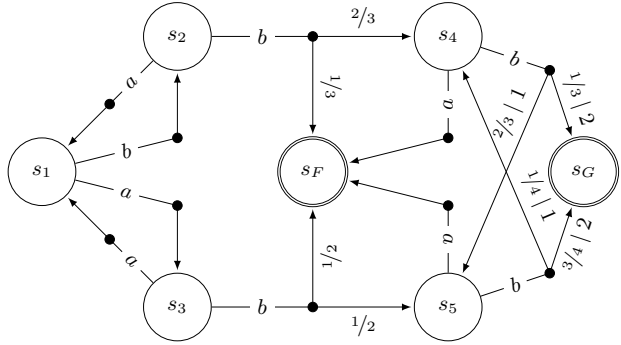
\begin{figure}
\clearpage{}%
\scalebox{0.9}{
\begin{tikzpicture}[every node/.style={draw}]
  \tikzstyle{action}=[>=latex, nodes={font=\footnotesize,sloped,draw=none},fill,draw,black]
  \path[shape=circle, minimum size=1cm]
    (0,0) node(s1)[]{$s_1$}
    (2,2) node(s2)[]{$s_2$}
    (2,-2) node(s3)[]{$s_3$}
    (6,2) node(s4)[]{$s_4$}
    (6,-2) node(s5)[]{$s_5$}
    (4,0) node(sF)[accepting]{$s_F$}
    (8,0) node(sG)[accepting]{$s_G$};

  \path[action] (1,1) circle (2pt)
    edge node[fill=white]{$a$}  (s2)
    edge [post] node[above]{} (s1);
  \path[action] (1,-1) circle (2pt)
    edge node[fill=white]{$a$}  (s3)
    edge [post] node[below]{} (s1);
  \path[action] (2,0.5) circle (2pt)
    edge node[fill=white]{$b$}  (s1)
    edge [post] node[below]{} (s2);
  \path[action] (2,-0.5) circle (2pt)
    edge node[fill=white]{$a$}  (s1)
    edge [post] node[above]{} (s3);
  \path[action] (4,2) circle (2pt)
    edge node[fill=white]{$b$} (s2)
    edge [post] node[above]{$\nicefrac{1}{3}$} (sF)
    edge [post] node[above]{$\nicefrac{2}{3}$} (s4);
  \path[action] (4,-2) circle (2pt)
    edge node[fill=white]{$b$} (s3)
    edge [post] node[below]{$\nicefrac{1}{2}$} (sF)
    edge [post] node[below]{$\nicefrac{1}{2}$} (s5);
  \path[action] (6,0.5) circle (2pt)
    edge node[fill=white]{$a$}  (s4)
    edge [post] node[above]{} (sF);
  \path[action] (6,-0.5) circle (2pt)
    edge node[fill=white]{$a$}  (s5)
    edge [post] node[below]{} (sF);
  \path[action] (7.5,1.5) circle (2pt)
    edge node[fill=white]{$b$} (s4)
    edge [post] node[above]{$\nicefrac{1}{3}\mid 2$} (sG)
    edge [post] node[above, near start]{$\nicefrac{2}{3}\mid 1$} (s5);
  \path[action] (7.5,-1.5) circle (2pt)
    edge node[fill=white]{$b$} (s5)
    edge [post] node[below]{$\nicefrac{3}{4}\mid 2$} (sG)
    edge [post] node[above right=1pt]{$\nicefrac{1}{4}\mid 1$} (s4);
\end{tikzpicture}
}

\clearpage{}%

  \centering
  \caption{An MDP with one maximal end-component ($s_1,s_2,$ and $s_3$) where unspecified transition
  probabilities are $1$ and unspecified rewards are $0$.}
  \label{fig:mdp-num-example}
\end{figure}

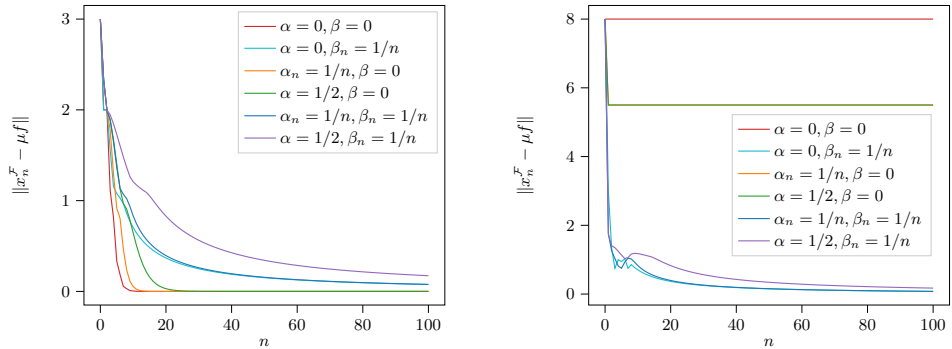
\begin{figure}
  \centering
  \begin{subfigure}[b]{.45\textwidth}
    \centering
\clearpage{}%
\begin{tikzpicture}[scale=0.7]

\definecolor{crimson2143940}{RGB}{214,39,40}
\definecolor{darkgray176}{RGB}{176,176,176}
\definecolor{darkorange25512714}{RGB}{255,127,14}
\definecolor{darkturquoise23190207}{RGB}{23,190,207}
\definecolor{forestgreen4416044}{RGB}{44,160,44}
\definecolor{lightgray204}{RGB}{204,204,204}
\definecolor{mediumpurple148103189}{RGB}{148,103,189}
\definecolor{steelblue31119180}{RGB}{31,119,180}

\begin{axis}[
legend cell align={left},
legend style={fill opacity=0.8, draw opacity=1, text opacity=1, draw=lightgray204},
tick align=outside,
tick pos=left,
x grid style={darkgray176},
xlabel={\(\displaystyle n\)},
xmin=-5, xmax=105,
xtick style={color=black},
y grid style={darkgray176},
ylabel={\(\displaystyle \|x_n^{\mathcal{F}}-\mu f\|\)},
ymin=-0.15, ymax=3.15,
ytick style={color=black}
]
\addplot [semithick, crimson2143940]
table {%
0 3
1 2
2 2
3 1.11111111111111
4 0.8125
5 0.333333333333333
6 0.185185185185186
7 0.0555555555555558
8 0.0308641975308646
9 0.00925925925925952
10 0.00514403292181109
11 0.00154320987654355
12 0.000857338820302145
13 0.000257201646090888
14 0.00014288980338395
15 4.2866941015296e-05
16 2.38149672309174e-05
17 7.14449016969709e-06
18 3.96916120548596e-06
19 1.19074836191224e-06
20 6.61526867951068e-07
21 1.98458060429729e-07
22 1.10254478213889e-07
23 3.30763438860515e-08
24 1.83757467020484e-08
25 5.51272427706806e-09
26 3.06262459837114e-09
27 9.18787712578251e-10
28 5.10437692113896e-10
29 1.53131507474313e-10
30 8.50730597079519e-11
31 2.55222509792929e-11
32 1.41793243813026e-11
33 4.25415258575867e-12
34 2.36344277482203e-12
35 7.0921046813055e-13
36 3.94129173741931e-13
37 1.18571819029967e-13
38 6.61692922676593e-14
39 1.99840144432528e-14
40 1.13242748511766e-14
41 3.5527136788005e-15
42 2.22044604925031e-15
43 8.88178419700125e-16
44 6.66133814775094e-16
45 4.44089209850063e-16
46 4.44089209850063e-16
47 4.44089209850063e-16
48 4.44089209850063e-16
49 4.44089209850063e-16
50 4.44089209850063e-16
51 4.44089209850063e-16
52 4.44089209850063e-16
53 4.44089209850063e-16
54 4.44089209850063e-16
55 4.44089209850063e-16
56 4.44089209850063e-16
57 4.44089209850063e-16
58 4.44089209850063e-16
59 4.44089209850063e-16
60 4.44089209850063e-16
61 4.44089209850063e-16
62 4.44089209850063e-16
63 4.44089209850063e-16
64 4.44089209850063e-16
65 4.44089209850063e-16
66 4.44089209850063e-16
67 4.44089209850063e-16
68 4.44089209850063e-16
69 4.44089209850063e-16
70 4.44089209850063e-16
71 4.44089209850063e-16
72 4.44089209850063e-16
73 4.44089209850063e-16
74 4.44089209850063e-16
75 4.44089209850063e-16
76 4.44089209850063e-16
77 4.44089209850063e-16
78 4.44089209850063e-16
79 4.44089209850063e-16
80 4.44089209850063e-16
81 4.44089209850063e-16
82 4.44089209850063e-16
83 4.44089209850063e-16
84 4.44089209850063e-16
85 4.44089209850063e-16
86 4.44089209850063e-16
87 4.44089209850063e-16
88 4.44089209850063e-16
89 4.44089209850063e-16
90 4.44089209850063e-16
91 4.44089209850063e-16
92 4.44089209850063e-16
93 4.44089209850063e-16
94 4.44089209850063e-16
95 4.44089209850063e-16
96 4.44089209850063e-16
97 4.44089209850063e-16
98 4.44089209850063e-16
99 4.44089209850063e-16
100 4.44089209850063e-16
};
\addlegendentry{$\alpha=0,\beta=0$}
\addplot [semithick, darkturquoise23190207]
table {%
0 3
1 2
2 2
3 1.55555555555556
4 1.15555555555556
5 1.0775462962963
6 1.02876984126984
7 0.967592592592593
8 0.860768175582991
9 0.773456790123457
10 0.703236064347176
11 0.644461591220851
12 0.594900812493405
13 0.552383401920439
14 0.515559746989788
15 0.483333690557842
16 0.454902577171701
17 0.429629682551779
18 0.407017635777067
19 0.386666674604989
20 0.368253982114532
21 0.351515152717928
22 0.336231886167188
23 0.32222222240598
24 0.309333333656747
25 0.297435897464168
26 0.286419753136329
27 0.276190476194851
28 0.266666666674411
29 0.257777777778458
30 0.249462365592605
31 0.241666666666773
32 0.234343434343624
33 0.227450980392174
34 0.220952380952411
35 0.214814814814818
36 0.209009009009014
37 0.203508771929825
38 0.198290598290599
39 0.193333333333334
40 0.188617886178862
41 0.184126984126984
42 0.179844961240311
43 0.175757575757576
44 0.171851851851852
45 0.168115942028986
46 0.164539007092199
47 0.161111111111112
48 0.157823129251701
49 0.154666666666667
50 0.151633986928105
51 0.148717948717949
52 0.145911949685535
53 0.14320987654321
54 0.140606060606061
55 0.138095238095238
56 0.135672514619884
57 0.133333333333334
58 0.131073446327684
59 0.128888888888889
60 0.126775956284153
61 0.124731182795699
62 0.122751322751323
63 0.120833333333334
64 0.118974358974359
65 0.117171717171717
66 0.11542288557214
67 0.113725490196079
68 0.11207729468599
69 0.110476190476191
70 0.108920187793427
71 0.107407407407407
72 0.105936073059361
73 0.104504504504505
74 0.103111111111112
75 0.101754385964912
76 0.100432900432901
77 0.0991452991452992
78 0.0978902953586498
79 0.0966666666666669
80 0.095473251028807
81 0.0943089430894311
82 0.0931726907630526
83 0.0920634920634922
84 0.0909803921568628
85 0.0899224806201553
86 0.0888888888888888
87 0.0878787878787881
88 0.0868913857677907
89 0.0859259259259262
90 0.0849816849816851
91 0.0840579710144933
92 0.0831541218637997
93 0.0822695035460996
94 0.0814035087719305
95 0.0805555555555559
96 0.0797250859106531
97 0.0789115646258507
98 0.0781144781144782
99 0.0773333333333335
100 0.0765676567656768
};
\addlegendentry{$\alpha=0,\beta_n=1/n$}
\addplot [semithick, darkorange25512714]
table {%
0 3
1 2.33333333333333
2 2
3 1.77777777777778
4 1.2962962962963
5 0.916724537037037
6 0.804241071428571
7 0.468354208002645
8 0.238316219727775
9 0.114275570385476
10 0.0532450516750769
11 0.0244230688457252
12 0.0110896906214162
13 0.00499686533863519
14 0.00223716789815986
15 0.000996066665669204
16 0.000441315312780688
17 0.000194674475104017
18 8.55386516538381e-05
19 3.74518825925207e-05
20 1.63450133912058e-05
21 7.11246782181263e-06
22 3.08664510884071e-06
23 1.3362227895275e-06
24 5.77139193991272e-07
25 2.48751579556838e-07
26 1.07004233473873e-07
27 4.59457198953572e-08
28 1.96947953501336e-08
29 8.42884695373414e-09
30 3.60195873128077e-09
31 1.53710177919208e-09
32 6.5508287683258e-10
33 2.78837619660521e-10
34 1.18549614569474e-10
35 5.03468378099114e-11
36 2.13593587261585e-11
37 9.0525364981886e-12
38 3.83337805942574e-12
39 1.62225788358228e-12
40 6.85451695403572e-13
41 2.89990254032091e-13
42 1.22790666523542e-13
43 5.19584375524573e-14
44 2.19824158875781e-14
45 9.54791801177635e-15
46 4.66293670342566e-15
47 2.66453525910038e-15
48 1.11022302462516e-15
49 6.66133814775094e-16
50 8.88178419700125e-16
51 8.88178419700125e-16
52 4.44089209850063e-16
53 4.44089209850063e-16
54 4.44089209850063e-16
55 8.88178419700125e-16
56 6.66133814775094e-16
57 8.88178419700125e-16
58 6.66133814775094e-16
59 8.88178419700125e-16
60 8.88178419700125e-16
61 6.66133814775094e-16
62 6.66133814775094e-16
63 4.44089209850063e-16
64 4.44089209850063e-16
65 4.44089209850063e-16
66 4.44089209850063e-16
67 4.44089209850063e-16
68 4.44089209850063e-16
69 4.44089209850063e-16
70 4.44089209850063e-16
71 4.44089209850063e-16
72 4.44089209850063e-16
73 4.44089209850063e-16
74 4.44089209850063e-16
75 8.88178419700125e-16
76 4.44089209850063e-16
77 4.44089209850063e-16
78 4.44089209850063e-16
79 4.44089209850063e-16
80 4.44089209850063e-16
81 4.44089209850063e-16
82 4.44089209850063e-16
83 4.44089209850063e-16
84 4.44089209850063e-16
85 4.44089209850063e-16
86 4.44089209850063e-16
87 4.44089209850063e-16
88 4.44089209850063e-16
89 4.44089209850063e-16
90 4.44089209850063e-16
91 4.44089209850063e-16
92 4.44089209850063e-16
93 4.44089209850063e-16
94 4.44089209850063e-16
95 4.44089209850063e-16
96 4.44089209850063e-16
97 4.44089209850063e-16
98 4.44089209850063e-16
99 4.44089209850063e-16
100 4.44089209850063e-16
};
\addlegendentry{$\alpha_n=1/n,\beta=0$}
\addplot [semithick, forestgreen4416044]
table {%
0 3
1 2.33333333333333
2 2
3 1.88888888888889
4 1.67361111111111
5 1.40509259259259
6 1.13107638888889
7 0.970015914351852
8 0.90654613353588
9 0.788698172863619
10 0.644869517887571
11 0.506926956490725
12 0.387879804655685
13 0.291073240533942
14 0.215290472068141
15 0.157498669068346
16 0.114250429736607
17 0.0823359694754893
18 0.0590335404746316
19 0.0421570002395244
20 0.0300109384943859
21 0.0213119968201685
22 0.0151055219827583
23 0.0106904777018921
24 0.00755704436742466
25 0.00533719634022112
26 0.00376677091117195
27 0.00265698590918007
28 0.00187338606548049
29 0.00132046015369802
30 0.000930498909682997
31 0.000655577721312861
32 0.000461816759962819
33 0.000325287467489499
34 0.000229101842694757
35 0.000161347493268238
36 0.000113625276942919
37 8.00150874986638e-05
38 5.63451957260597e-05
39 3.96764602557109e-05
40 2.79384427144258e-05
41 1.96728101435895e-05
42 1.38524568453757e-05
43 9.75403625891325e-06
44 6.86815042039157e-06
45 4.83608184431716e-06
46 3.40522882802397e-06
47 2.39771785937037e-06
48 1.68829835844875e-06
49 1.18877545673968e-06
50 8.37047492208143e-07
51 5.89386386140589e-07
52 4.15001724540787e-07
53 2.92213014674303e-07
54 2.05754384863255e-07
55 1.44876705965302e-07
56 1.02011226132959e-07
57 7.18285877443492e-08
58 5.05762534075416e-08
59 3.56119667088706e-08
60 2.50752485442263e-08
61 1.76560894793454e-08
62 1.24320798100541e-08
63 8.75372796471652e-09
64 6.1637115411628e-09
65 4.34001834470621e-09
66 3.05591196791966e-09
67 2.15174145168362e-09
68 1.51509316204113e-09
69 1.06681374845152e-09
70 7.51169348944813e-10
71 5.28916466180362e-10
72 3.72422981342879e-10
73 2.62232013881203e-10
74 1.84643855760669e-10
75 1.30012223209519e-10
76 9.15449938077018e-11
77 6.44591047205267e-11
78 4.53872495143059e-11
79 3.19584358976499e-11
80 2.25028884415224e-11
81 1.58448809628453e-11
82 1.11568532190631e-11
83 7.85593812224761e-12
84 5.53157519789238e-12
85 3.89488441498997e-12
86 2.74247291542906e-12
87 1.93089988442807e-12
88 1.35980116056089e-12
89 9.57456336436735e-13
90 6.7434946515732e-13
91 4.74953409934642e-13
92 3.34399175017097e-13
93 2.35589325825458e-13
94 1.66089364483923e-13
95 1.17017506795491e-13
96 8.26005930321116e-14
97 5.81756864903582e-14
98 4.10782519111308e-14
99 2.90878432451791e-14
100 2.04281036531029e-14
};
\addlegendentry{$\alpha=1/2,\beta=0$}
\addplot [semithick, steelblue31119180]
table {%
0 3
1 2.33333333333333
2 2
3 1.88888888888889
4 1.64740740740741
5 1.36512345679012
6 1.13527966742252
7 1.07019955804949
8 1.02344505699942
9 0.936709830997085
10 0.831917024732572
11 0.746314515854769
12 0.677311391553754
13 0.620598303618043
14 0.573062470630432
15 0.532556123562535
16 0.497567656521945
17 0.467003065203881
18 0.44004898108099
19 0.416086019907455
20 0.394632655171388
21 0.375307854015194
22 0.357805508015107
23 0.341876474956562
24 0.327315665837985
25 0.31395256413419
26 0.301644136648554
27 0.290269447171065
28 0.27972550599323
29 0.269924031697842
30 0.260788896524312
31 0.252254090802737
32 0.244262086257009
33 0.236762509115125
34 0.229711056203755
35 0.223068603319141
36 0.216800466997022
37 0.210875789592738
38 0.205267024181741
39 0.199949500794894
40 0.194901059332276
41 0.190101737454184
42 0.185533504046277
43 0.181180030656294
44 0.177026494720093
45 0.173059409522329
46 0.169266476737727
47 0.165636458122446
48 0.162159063509407
49 0.158824852735899
50 0.155625149518889
51 0.152551965610686
52 0.149597933828883
53 0.146756248770392
54 0.144020614198755
55 0.141385196243259
56 0.138844581673396
57 0.136393740617115
58 0.134027993179659
59 0.131742979494489
60 0.129534632801086
61 0.127399155198247
62 0.125332995767415
63 0.123332830799844
64 0.121395545895038
65 0.119518219726902
66 0.117698109298933
67 0.115932636531387
68 0.11421937604202
69 0.112556043998181
70 0.110940487932183
71 0.109370677424143
72 0.107844695567246
73 0.106360731139768
74 0.104917071416478
75 0.103512095559271
76 0.102144268533246
77 0.100812135500139
78 0.0995143166459298
79 0.0982495024038694
80 0.0970164490381111
81 0.0958139745565403
82 0.094640954924528
83 0.0934963205540436
84 0.0923790530450563
85 0.0912881821582887
86 0.0902227830003899
87 0.0891819734043364
88 0.0881649114894201
89 0.0871707933866344
90 0.0861988511165086
91 0.0852483506075989
92 0.0843185898448762
93 0.083408897138179
94 0.0825186295017439
95 0.0816471711365792
96 0.0807939320081543
97 0.0799583465124831
98 0.0791398722242516
99 0.078337988721179
100 0.0775521964792263
};
\addlegendentry{$\alpha_n=1/n,\beta_n=1/n$}
\addplot [semithick, mediumpurple148103189]
table {%
0 3
1 2.33333333333333
2 2
3 1.94444444444444
4 1.84722222222222
5 1.72878086419753
6 1.60424933862434
7 1.48277150848765
8 1.36912454989712
9 1.26533524627058
10 1.20820168988606
11 1.17272506263848
12 1.14196706761586
13 1.11511970636262
14 1.09153433873003
15 1.05301037002534
16 1.00628765279432
17 0.95839847778136
18 0.912307063590082
19 0.869128995827632
20 0.829151305593819
21 0.792302581122742
22 0.758369086429902
23 0.727092958305409
24 0.698215515974932
25 0.671495072776599
26 0.646713020738012
27 0.623674657927137
28 0.602207794439475
29 0.582160549608942
30 0.56339898334112
31 0.545804839654002
32 0.529273508761723
33 0.513712234691676
34 0.499038560338212
35 0.485178987862015
36 0.472067828246454
37 0.459646214198149
38 0.447861252810001
39 0.436665297242326
40 0.426015319544081
41 0.415872369382844
42 0.406201105790239
43 0.396969391045331
44 0.38814793753241
45 0.379709999854813
46 0.371631105699405
47 0.363888819961799
48 0.356462537491611
49 0.349333300527267
50 0.342483637484192
51 0.335897420258164
52 0.329559737623478
53 0.323456782656852
54 0.317575752413935
55 0.311904758335104
56 0.306432746068627
57 0.301149423578582
58 0.296045196557317
59 0.291111110292157
60 0.286338797247016
61 0.281720429714595
62 0.277248676976396
63 0.272916666477943
64 0.268717948587108
65 0.264646464555732
66 0.260696517350002
67 0.256862745054378
68 0.25314009658806
69 0.249523809502781
70 0.246009389656764
71 0.242592592582457
72 0.239269406385655
73 0.236036036031147
74 0.232888888885492
75 0.229824561401149
76 0.226839826838187
77 0.223931623930484
78 0.22109704641271
79 0.218333333332782
80 0.215637860081921
81 0.213008130081034
82 0.210441767068088
83 0.207936507936379
84 0.205490196078342
85 0.203100775193736
86 0.200766283524861
87 0.198484848484819
88 0.19625468164792
89 0.19407407407406
90 0.191941391941382
91 0.189855072463761
92 0.18781362007168
93 0.185815602836876
94 0.183859649122805
95 0.181944444444443
96 0.180068728522336
97 0.178231292517006
98 0.176430976430976
99 0.174666666666667
100 0.172937293729373
};
\addlegendentry{$\alpha=1/2,\beta_n=1/n$}
\end{axis}

\end{tikzpicture}
\clearpage{}%

    \caption{Generated sequences $(x_n^{\mathcal{F}})$ with
    initial value $v_0=0$.}
    \label{fig:mdp-example-exact-results}
  \end{subfigure}
  \hfill
  \begin{subfigure}[b]{.45\textwidth}
    \centering
\clearpage{}%
\begin{tikzpicture}[scale=0.7]

\definecolor{crimson2143940}{RGB}{214,39,40}
\definecolor{darkgray176}{RGB}{176,176,176}
\definecolor{darkorange25512714}{RGB}{255,127,14}
\definecolor{darkturquoise23190207}{RGB}{23,190,207}
\definecolor{forestgreen4416044}{RGB}{44,160,44}
\definecolor{lightgray204}{RGB}{204,204,204}
\definecolor{mediumpurple148103189}{RGB}{148,103,189}
\definecolor{steelblue31119180}{RGB}{31,119,180}

\begin{axis}[
legend cell align={left},
legend style={
  fill opacity=0.8,
  draw opacity=1,
  text opacity=1,
  at={(0.95,0.4)},
  anchor=east,
  draw=lightgray204
},
tick align=outside,
tick pos=left,
x grid style={darkgray176},
xlabel={\(\displaystyle n\)},
xmin=-5, xmax=105,
xtick style={color=black},
y grid style={darkgray176},
ylabel={\(\displaystyle \|x_n^{\mathcal{F}}-\mu f\|\)},
ymin=-0.319603960396039, ymax=8.39617161716172,
ytick style={color=black}
]
\addplot [semithick, crimson2143940]
table {%
0 8
1 8
2 8
3 8
4 8
5 8
6 8
7 8
8 8
9 8
10 8
11 8
12 8
13 8
14 8
15 8
16 8
17 8
18 8
19 8
20 8
21 8
22 8
23 8
24 8
25 8
26 8
27 8
28 8
29 8
30 8
31 8
32 8
33 8
34 8
35 8
36 8
37 8
38 8
39 8
40 8
41 8
42 8
43 8
44 8
45 8
46 8
47 8
48 8
49 8
50 8
51 8
52 8
53 8
54 8
55 8
56 8
57 8
58 8
59 8
60 8
61 8
62 8
63 8
64 8
65 8
66 8
67 8
68 8
69 8
70 8
71 8
72 8
73 8
74 8
75 8
76 8
77 8
78 8
79 8
80 8
81 8
82 8
83 8
84 8
85 8
86 8
87 8
88 8
89 8
90 8
91 8
92 8
93 8
94 8
95 8
96 8
97 8
98 8
99 8
100 8
};
\addlegendentry{$\alpha=0,\beta=0$}
\addplot [semithick, darkturquoise23190207]
table {%
0 8
1 3
2 1.33333333333333
3 0.75
4 1
5 0.944444444444444
6 1.02728174603175
7 0.75
8 0.858024691358025
9 0.772530864197531
10 0.702861952861953
11 0.644332990397806
12 0.594848053181387
13 0.55236503037429
14 0.515552126200274
15 0.483331011374028
16 0.454901456467361
17 0.429629285635659
18 0.40701746865449
19 0.386666615067571
20 0.368253956913508
21 0.351515143697107
22 0.336231882332249
23 0.322222221027799
24 0.309333333068723
25 0.29743589725214
26 0.286419753045585
27 0.276190476162038
28 0.26666666666033
29 0.257777777773354
30 0.24946236559041
31 0.241666666665976
32 0.23434343434328
33 0.227450980392049
34 0.220952380952357
35 0.214814814814798
36 0.209009009009005
37 0.203508771929822
38 0.198290598290598
39 0.193333333333333
40 0.188617886178862
41 0.184126984126984
42 0.179844961240311
43 0.175757575757576
44 0.171851851851852
45 0.168115942028986
46 0.164539007092199
47 0.161111111111112
48 0.157823129251701
49 0.154666666666667
50 0.151633986928105
51 0.148717948717949
52 0.145911949685535
53 0.14320987654321
54 0.140606060606061
55 0.138095238095238
56 0.135672514619884
57 0.133333333333334
58 0.131073446327684
59 0.128888888888889
60 0.126775956284153
61 0.124731182795699
62 0.122751322751323
63 0.120833333333334
64 0.118974358974359
65 0.117171717171717
66 0.11542288557214
67 0.113725490196079
68 0.11207729468599
69 0.110476190476191
70 0.108920187793427
71 0.107407407407407
72 0.105936073059361
73 0.104504504504505
74 0.103111111111112
75 0.101754385964912
76 0.100432900432901
77 0.0991452991452992
78 0.0978902953586498
79 0.0966666666666669
80 0.095473251028807
81 0.0943089430894311
82 0.0931726907630526
83 0.0920634920634922
84 0.0909803921568628
85 0.0899224806201553
86 0.0888888888888888
87 0.0878787878787881
88 0.0868913857677907
89 0.0859259259259262
90 0.0849816849816851
91 0.0840579710144933
92 0.0831541218637997
93 0.0822695035460996
94 0.0814035087719305
95 0.0805555555555559
96 0.0797250859106531
97 0.0789115646258507
98 0.0781144781144782
99 0.0773333333333335
100 0.0765676567656768
};
\addlegendentry{$\alpha=0,\beta_n=1/n$}
\addplot [semithick, darkorange25512714]
table {%
0 8
1 5.5
2 5.5
3 5.5
4 5.5
5 5.5
6 5.5
7 5.5
8 5.5
9 5.5
10 5.5
11 5.5
12 5.5
13 5.5
14 5.5
15 5.5
16 5.5
17 5.5
18 5.5
19 5.5
20 5.5
21 5.5
22 5.5
23 5.5
24 5.5
25 5.5
26 5.5
27 5.5
28 5.5
29 5.5
30 5.5
31 5.5
32 5.5
33 5.5
34 5.5
35 5.5
36 5.5
37 5.5
38 5.5
39 5.5
40 5.5
41 5.5
42 5.5
43 5.5
44 5.5
45 5.5
46 5.5
47 5.5
48 5.5
49 5.5
50 5.5
51 5.5
52 5.5
53 5.5
54 5.5
55 5.5
56 5.5
57 5.5
58 5.5
59 5.5
60 5.5
61 5.5
62 5.5
63 5.5
64 5.5
65 5.5
66 5.5
67 5.5
68 5.5
69 5.5
70 5.5
71 5.5
72 5.5
73 5.5
74 5.5
75 5.5
76 5.5
77 5.5
78 5.5
79 5.5
80 5.5
81 5.5
82 5.5
83 5.5
84 5.5
85 5.5
86 5.5
87 5.5
88 5.5
89 5.5
90 5.5
91 5.5
92 5.5
93 5.5
94 5.5
95 5.5
96 5.5
97 5.5
98 5.5
99 5.5
100 5.5
};
\addlegendentry{$\alpha_n=1/n,\beta=0$}
\addplot [semithick, forestgreen4416044]
table {%
0 8
1 5.5
2 5.5
3 5.5
4 5.5
5 5.5
6 5.5
7 5.5
8 5.5
9 5.5
10 5.5
11 5.5
12 5.5
13 5.5
14 5.5
15 5.5
16 5.5
17 5.5
18 5.5
19 5.5
20 5.5
21 5.5
22 5.5
23 5.5
24 5.5
25 5.5
26 5.5
27 5.5
28 5.5
29 5.5
30 5.5
31 5.5
32 5.5
33 5.5
34 5.5
35 5.5
36 5.5
37 5.5
38 5.5
39 5.5
40 5.5
41 5.5
42 5.5
43 5.5
44 5.5
45 5.5
46 5.5
47 5.5
48 5.5
49 5.5
50 5.5
51 5.5
52 5.5
53 5.5
54 5.5
55 5.5
56 5.5
57 5.5
58 5.5
59 5.5
60 5.5
61 5.5
62 5.5
63 5.5
64 5.5
65 5.5
66 5.5
67 5.5
68 5.5
69 5.5
70 5.5
71 5.5
72 5.5
73 5.5
74 5.5
75 5.5
76 5.5
77 5.5
78 5.5
79 5.5
80 5.5
81 5.5
82 5.5
83 5.5
84 5.5
85 5.5
86 5.5
87 5.5
88 5.5
89 5.5
90 5.5
91 5.5
92 5.5
93 5.5
94 5.5
95 5.5
96 5.5
97 5.5
98 5.5
99 5.5
100 5.5
};
\addlegendentry{$\alpha=1/2,\beta=0$}
\addplot [semithick, steelblue31119180]
table {%
0 8
1 1.75
2 1.25925925925926
3 1.03298611111111
4 0.8175
5 0.750231481481481
6 0.928599773242629
7 1.0443105546255
8 1.01946190804497
9 0.925426303499473
10 0.827174238066483
11 0.744439267092074
12 0.676547561412094
13 0.620281591488251
14 0.572930434734912
15 0.532501039297934
16 0.497544689016753
17 0.466993496742425
18 0.440044998056986
19 0.416084363193113
20 0.394631966566858
21 0.375307567991797
22 0.357805389284481
23 0.341876425699154
24 0.327315645413813
25 0.313952555669785
26 0.301644133142318
27 0.290269445719317
28 0.279725505392392
29 0.269924031449271
30 0.260788896421516
31 0.25225409076024
32 0.244262086239446
33 0.23676250910787
34 0.229711056200758
35 0.223068603317903
36 0.21680046699651
37 0.210875789592528
38 0.205267024181654
39 0.199949500794858
40 0.19490105933226
41 0.190101737454178
42 0.185533504046275
43 0.181180030656293
44 0.177026494720093
45 0.173059409522328
46 0.169266476737727
47 0.165636458122446
48 0.162159063509407
49 0.158824852735899
50 0.155625149518889
51 0.152551965610686
52 0.149597933828883
53 0.146756248770392
54 0.144020614198755
55 0.141385196243259
56 0.138844581673396
57 0.136393740617115
58 0.134027993179659
59 0.131742979494489
60 0.129534632801086
61 0.127399155198247
62 0.125332995767415
63 0.123332830799844
64 0.121395545895038
65 0.119518219726902
66 0.117698109298933
67 0.115932636531387
68 0.11421937604202
69 0.112556043998181
70 0.110940487932183
71 0.109370677424143
72 0.107844695567246
73 0.106360731139768
74 0.104917071416478
75 0.103512095559271
76 0.102144268533246
77 0.100812135500139
78 0.0995143166459298
79 0.0982495024038694
80 0.0970164490381111
81 0.0958139745565403
82 0.094640954924528
83 0.0934963205540436
84 0.0923790530450563
85 0.0912881821582887
86 0.0902227830003899
87 0.0891819734043364
88 0.0881649114894201
89 0.0871707933866344
90 0.0861988511165086
91 0.0852483506075989
92 0.0843185898448762
93 0.083408897138179
94 0.0825186295017439
95 0.0816471711365792
96 0.0807939320081543
97 0.0799583465124831
98 0.0791398722242516
99 0.078337988721179
100 0.0775521964792263
};
\addlegendentry{$\alpha_n=1/n,\beta_n=1/n$}
\addplot [semithick, mediumpurple148103189]
table {%
0 8
1 1.75
2 1.40277777777778
3 1.34895833333333
4 1.24548611111111
5 1.13527199074074
6 1.03179770171958
7 1.0634765625
8 1.16351217889982
9 1.18606510868779
10 1.17750969746565
11 1.15744018162229
12 1.1341248751064
13 1.11096549202315
14 1.08616734931171
15 1.04586324609124
16 1.00010967657197
17 0.953745085587005
18 0.909016453377644
19 0.866876816224951
20 0.827636268159221
21 0.791292176053741
22 0.757697573263796
23 0.726646840449096
24 0.697918706084675
25 0.671297110428722
26 0.646580594936979
27 0.623585797061148
28 0.602147985319519
29 0.582120179286715
30 0.563371662868678
31 0.545786307326993
32 0.52926091156617
33 0.513703656136408
34 0.499032708993163
35 0.485174991037207
36 0.472065094738204
37 0.459644342610624
38 0.447859970088497
39 0.436664417322242
40 0.426014715448369
41 0.415871954343972
42 0.406200820448094
43 0.396969194747829
44 0.388147802413496
45 0.379709906796856
46 0.371631041576601
47 0.363888775755867
48 0.356462507002376
49 0.349333279489458
50 0.342483622961953
51 0.335897410229625
52 0.329559730695516
53 0.323456777869116
54 0.317575749104107
55 0.31190475604621
56 0.306432744485246
57 0.301149422482913
58 0.296045195798909
59 0.291111109767045
60 0.286338796883334
61 0.281720429462649
62 0.277248676801811
63 0.272916666356934
64 0.268717948503214
65 0.264646464497555
66 0.26069651730965
67 0.256862745026383
68 0.253140096568634
69 0.249523809489298
70 0.246009389647404
71 0.242592592575957
72 0.239269406381141
73 0.236036036028012
74 0.232888888883314
75 0.229824561399635
76 0.226839826837135
77 0.223931623929753
78 0.221097046412201
79 0.218333333332429
80 0.215637860081676
81 0.213008130080863
82 0.210441767067969
83 0.207936507936296
84 0.205490196078284
85 0.203100775193696
86 0.200766283524833
87 0.198484848484799
88 0.196254681647906
89 0.19407407407405
90 0.191941391941375
91 0.189855072463756
92 0.187813620071676
93 0.185815602836874
94 0.183859649122803
95 0.181944444444442
96 0.180068728522335
97 0.178231292517006
98 0.176430976430976
99 0.174666666666667
100 0.172937293729373
};
\addlegendentry{$\alpha=1/2,\beta_n=1/n$}
\end{axis}

\end{tikzpicture}
\clearpage{}%

    \caption{Generated sequences $(x_n^{\mathcal{F}})$ with
    initial value $v_0=(10,5,4,3,2,1,0)$.}
    \label{fig:mdp-example-exact-results-overapproximation}
  \end{subfigure}
  \caption{Results of different iteration schemes for the exact Bellman operator
  of the MDP in \Cref{fig:mdp-num-example}.}
\end{figure}

Unsurprisingly, this happens also when considering more
complex functions.  Note that the Markov decision process in
\Cref{fig:mdp-num-example} is not simple whence its Bellman operator
is not a power contraction.  Here, starting from an initial point that
overestimates the state-values in end-components, no undampened
iteration converges to the least fixpoint.  Indeed, the Kleene
iteration gets stuck in a loop while the undampened Mann iterations
converge to an overapproximation of the true optimal value
function.  The dampened
versions, however, do converge to the least fixpoint (see
\Cref{fig:mdp-example-exact-results-overapproximation}).

\paragraph{Unknown function.} Now, the main focus of this paper
was a setting in which we do not have access to the exact function.
In this approximative setting, we already mentioned in
\Cref{se:approximation} that the considered undampened iterations can only
guarantee convergence to the least fixpoint with resetting, i.e.,
starting the iteration anew from the bottom $0$ with each new
approximation $f_n$ (with carefully chosen number of iterations if
necessary).
We see this in \Cref{fig:dim2-values} for the
relatively simple monotone non-expansive function sequence
\begin{equation}
  \label{eq:dim2-function-sequence}
  f_n\colon [0,1]^2\to [0,1]^2, x\mapsto\max(x, (1-a)\cdot x^n + a)
\end{equation}
where the least fixpoint $\mu f=a$ of the limit function
$f(x)=\max(x, a)$ was chosen to be $(3/4,1/2)$.  It is clearly visible
that undampened iteration without resetting cannot reach the least
fixpoint, even when starting from $0$ (red and orange lines in
\Cref{fig:dim2-values-undamp} with Kleene iteration converging to
$\sim(0.95,0.5)$ and Mann iteration to $\sim(0.9,0.35)$).  Versions
with resetting as well as dampened versions, on the other hand, do
converge to the least fixpoint, the latter even when starting from an
overapproximation such as $(7/8,7/8)$
(cf. \Cref{fig:dim2-values-damp}).  This also serves as a visual
representation of how the dampened versions approach the fixpoint.

\begin{figure}
  \centering
  \begin{subfigure}[t]{.45\textwidth}
    \centering
\clearpage{}%
\begin{tikzpicture}[scale=0.7]

\definecolor{crimson2143940}{RGB}{214,39,40}
\definecolor{darkgray176}{RGB}{176,176,176}
\definecolor{darkorange25512714}{RGB}{255,127,14}
\definecolor{darkturquoise23190207}{RGB}{23,190,207}
\definecolor{forestgreen4416044}{RGB}{44,160,44}
\definecolor{lightgray204}{RGB}{204,204,204}

\begin{axis}[
legend cell align={left},
legend style={fill opacity=0.8, draw opacity=1, text opacity=1, draw=lightgray204},
tick align=outside,
tick pos=left,
x grid style={darkgray176},
xlabel={\(\displaystyle x\)},
xmin=0, xmax=1,
xtick style={color=black},
y grid style={darkgray176},
ylabel={\(\displaystyle y\)},
ymin=0, ymax=1,
ytick style={color=black}
]
\addplot [semithick, crimson2143940, mark=x, mark size=3, mark options={solid}]
table {%
0 0
0.75 0.25
0.912379763209582 0.4375
0.953392627901275 0.467034287235767
0.966648345593802 0.467034287235767
0.970138969236968 0.467034287235767
0.970138969236968 0.467034287235767
0.970138969236968 0.467034287235767
0.970138969236968 0.467034287235767
0.970138969236968 0.467034287235767
0.970138969236968 0.467034287235767
0.970138969236968 0.467034287235767
0.970138969236968 0.467034287235767
0.970138969236968 0.467034287235767
0.970138969236968 0.467034287235767
0.970138969236968 0.467034287235767
0.970138969236968 0.467034287235767
0.970138969236968 0.467034287235767
0.970138969236968 0.467034287235767
0.970138969236968 0.467034287235767
0.970138969236968 0.467034287235767
0.970138969236968 0.467034287235767
0.970138969236968 0.467034287235767
0.970138969236968 0.467034287235767
0.970138969236968 0.467034287235767
0.970138969236968 0.467034287235767
0.970138969236968 0.467034287235767
0.970138969236968 0.467034287235767
0.970138969236968 0.467034287235767
0.970138969236968 0.467034287235767
0.970138969236968 0.467034287235767
0.970138969236968 0.467034287235767
0.970138969236968 0.467034287235767
0.970138969236968 0.467034287235767
0.970138969236968 0.467034287235767
0.970138969236968 0.467034287235767
0.970138969236968 0.467034287235767
0.970138969236968 0.467034287235767
0.970138969236968 0.467034287235767
0.970138969236968 0.467034287235767
0.970138969236968 0.467034287235767
0.970138969236968 0.467034287235767
0.970138969236968 0.467034287235767
0.970138969236968 0.467034287235767
0.970138969236968 0.467034287235767
0.970138969236968 0.467034287235767
0.970138969236968 0.467034287235767
0.970138969236968 0.467034287235767
0.970138969236968 0.467034287235767
0.970138969236968 0.467034287235767
0.970138969236968 0.467034287235767
0.970138969236968 0.467034287235767
0.970138969236968 0.467034287235767
0.970138969236968 0.467034287235767
0.970138969236968 0.467034287235767
0.970138969236968 0.467034287235767
0.970138969236968 0.467034287235767
0.970138969236968 0.467034287235767
0.970138969236968 0.467034287235767
0.970138969236968 0.467034287235767
0.970138969236968 0.467034287235767
0.970138969236968 0.467034287235767
0.970138969236968 0.467034287235767
0.970138969236968 0.467034287235767
0.970138969236968 0.467034287235767
0.970138969236968 0.467034287235767
0.970138969236968 0.467034287235767
0.970138969236968 0.467034287235767
0.970138969236968 0.467034287235767
0.970138969236968 0.467034287235767
0.970138969236968 0.467034287235767
0.970138969236968 0.467034287235767
0.970138969236968 0.467034287235767
0.970138969236968 0.467034287235767
0.970138969236968 0.467034287235767
0.970138969236968 0.467034287235767
0.970138969236968 0.467034287235767
0.970138969236968 0.467034287235767
0.970138969236968 0.467034287235767
0.970138969236968 0.467034287235767
0.970138969236968 0.467034287235767
0.970138969236968 0.467034287235767
0.970138969236968 0.467034287235767
0.970138969236968 0.467034287235767
0.970138969236968 0.467034287235767
0.970138969236968 0.467034287235767
0.970138969236968 0.467034287235767
0.970138969236968 0.467034287235767
0.970138969236968 0.467034287235767
0.970138969236968 0.467034287235767
0.970138969236968 0.467034287235767
0.970138969236968 0.467034287235767
0.970138969236968 0.467034287235767
0.970138969236968 0.467034287235767
0.970138969236968 0.467034287235767
0.970138969236968 0.467034287235767
0.970138969236968 0.467034287235767
0.970138969236968 0.467034287235767
0.970138969236968 0.467034287235767
0.970138969236968 0.467034287235767
0.970138969236968 0.467034287235767
};
\addlegendentry{Kleene}
\addplot [semithick, darkturquoise23190207, mark=x, mark size=3, mark options={solid}]
table {%
0 0
0.75 0.25
0.912379763209582 0.4375
0.93792627567541 0.401155976634935
0.936235802908688 0.324939916841686
0.920454567676672 0.281371994677297
0.893892587631446 0.263761399916039
0.86207324671662 0.2564015211342
0.83279721833544 0.253076586539659
0.810401130812052 0.251504945045798
0.794536515424212 0.250743376125315
0.78334686061948 0.250369195311949
0.775303891935865 0.250183915177645
0.769407230358992 0.250091771405018
0.765012785952897 0.250045835148174
0.76169425821729 0.250022903915148
0.759161093629534 0.250011448284956
0.7572102473291 0.250005723159432
0.755696820578188 0.250002861317669
0.754515562815584 0.250001430589243
0.753588864798075 0.250000715276202
0.752858760447447 0.25000035763324
0.752281476972655 0.250000178815341
0.751823646610724 0.250000089407335
0.751459626537308 0.25000004470358
0.751169572861194 0.250000022351767
0.750938037108693 0.250000011175877
0.750752930676517 0.250000005587937
0.750604752277758 0.250000002793968
0.750486005882145 0.250000001396984
0.750390758320805 0.250000000698492
0.750314300421356 0.250000000349246
0.750252885656258 0.250000000174623
0.750203527350241 0.250000000087311
0.750163840487536 0.250000000043656
0.750131917735145 0.250000000021828
0.750106231886977 0.250000000010914
0.750085558827679 0.250000000005457
0.750068916514635 0.250000000002728
0.750055516520426 0.250000000001364
0.750044725463313 0.250000000000682
0.750036034248652 0.250000000000341
0.750029033498904 0.250000000000171
0.750023393899477 0.250000000000085
0.750018850458094 0.250000000000043
0.750015189885664 0.250000000000021
0.75001224047097 0.250000000000011
0.75000986395048 0.250000000000005
0.750007948976176 0.250000000000003
0.750006405864637 0.250000000000001
0.75000516237424 0.250000000000001
0.750004160307961 0.25
0.75000335277954 0.25
0.750002702012926 0.25
0.750002177570599 0.25
0.750001754926945 0.25
0.750001414319301 0.25
0.750001139822524 0.25
0.750000918603519 0.25
0.750000740320684 0.25
0.750000596640058 0.25
0.75000048084549 0.25
0.750000387524524 0.25
0.750000312315292 0.25
0.750000251702569 0.25
0.750000202853416 0.25
0.750000163484744 0.25
0.750000131756584 0.25
0.750000106186077 0.25
0.750000085578162 0.25
0.750000068969715 0.25
0.750000055584538 0.25
0.750000044797072 0.25
0.750000036103169 0.25
0.750000029096521 0.25
0.750000023449676 0.25
0.750000018898731 0.25
0.750000015231002 0.25
0.75000001227508 0.25
0.750000009892822 0.25
0.750000007972896 0.25
0.750000006425575 0.25
0.750000005178547 0.25
0.750000004173533 0.25
0.750000003363564 0.25
0.750000002710789 0.25
0.750000002184699 0.25
0.750000001760709 0.25
0.750000001419003 0.25
0.750000001143614 0.25
0.75000000092167 0.25
0.750000000742799 0.25
0.750000000598642 0.25
0.750000000482462 0.25
0.750000000388829 0.25
0.750000000313368 0.25
0.750000000252552 0.25
0.750000000203539 0.25
0.750000000164037 0.25
0.750000000132202 0.25
0.750000000106545 0.25
};
\addlegendentry{Kleene (reset.)}
\addplot [semithick, darkorange25512714, mark=x, mark size=3, mark options={solid}]
table {%
0 0
0.375 0.125
0.663273277230987 0.270833333333333
0.802758839212155 0.334490534484876
0.864014819672534 0.334490534484876
0.889426520557958 0.334490534484876
0.896388376532584 0.334490534484876
0.896388376532584 0.334490534484876
0.896388376532584 0.334490534484876
0.896388376532584 0.334490534484876
0.896388376532584 0.334490534484876
0.896388376532584 0.334490534484876
0.896388376532584 0.334490534484876
0.896388376532584 0.334490534484876
0.896388376532584 0.334490534484876
0.896388376532584 0.334490534484876
0.896388376532584 0.334490534484876
0.896388376532584 0.334490534484876
0.896388376532584 0.334490534484876
0.896388376532584 0.334490534484876
0.896388376532584 0.334490534484876
0.896388376532584 0.334490534484876
0.896388376532584 0.334490534484876
0.896388376532584 0.334490534484876
0.896388376532584 0.334490534484876
0.896388376532584 0.334490534484876
0.896388376532584 0.334490534484876
0.896388376532584 0.334490534484876
0.896388376532584 0.334490534484876
0.896388376532584 0.334490534484876
0.896388376532584 0.334490534484876
0.896388376532584 0.334490534484876
0.896388376532584 0.334490534484876
0.896388376532584 0.334490534484876
0.896388376532584 0.334490534484876
0.896388376532584 0.334490534484876
0.896388376532584 0.334490534484876
0.896388376532584 0.334490534484876
0.896388376532584 0.334490534484876
0.896388376532584 0.334490534484876
0.896388376532584 0.334490534484876
0.896388376532584 0.334490534484876
0.896388376532584 0.334490534484876
0.896388376532584 0.334490534484876
0.896388376532584 0.334490534484876
0.896388376532585 0.334490534484876
0.896388376532584 0.334490534484876
0.896388376532584 0.334490534484876
0.896388376532584 0.334490534484876
0.896388376532584 0.334490534484876
0.896388376532584 0.334490534484876
0.896388376532584 0.334490534484876
0.896388376532584 0.334490534484876
0.896388376532584 0.334490534484876
0.896388376532584 0.334490534484876
0.896388376532584 0.334490534484876
0.896388376532584 0.334490534484876
0.896388376532584 0.334490534484876
0.896388376532584 0.334490534484876
0.896388376532584 0.334490534484876
0.896388376532584 0.334490534484876
0.896388376532584 0.334490534484876
0.896388376532584 0.334490534484876
0.896388376532584 0.334490534484876
0.896388376532584 0.334490534484876
0.896388376532584 0.334490534484876
0.896388376532584 0.334490534484876
0.896388376532584 0.334490534484876
0.896388376532584 0.334490534484876
0.896388376532584 0.334490534484876
0.896388376532585 0.334490534484876
0.896388376532585 0.334490534484876
0.896388376532584 0.334490534484876
0.896388376532584 0.334490534484876
0.896388376532584 0.334490534484876
0.896388376532584 0.334490534484876
0.896388376532584 0.334490534484876
0.896388376532584 0.334490534484876
0.896388376532584 0.334490534484876
0.896388376532585 0.334490534484876
0.896388376532584 0.334490534484876
0.896388376532584 0.334490534484876
0.896388376532584 0.334490534484876
0.896388376532584 0.334490534484876
0.896388376532584 0.334490534484876
0.896388376532584 0.334490534484876
0.896388376532584 0.334490534484876
0.896388376532584 0.334490534484876
0.896388376532584 0.334490534484876
0.896388376532584 0.334490534484876
0.896388376532584 0.334490534484876
0.896388376532584 0.334490534484876
0.896388376532584 0.334490534484876
0.896388376532585 0.334490534484876
0.896388376532585 0.334490534484876
0.896388376532585 0.334490534484876
0.896388376532585 0.334490534484876
0.896388376532585 0.334490534484876
0.896388376532585 0.334490534484876
0.896388376532585 0.334490534484876
0.896388376532585 0.334490534484876
};
\addlegendentry{Mann}
\addplot [semithick, forestgreen4416044, mark=x, mark size=3, mark options={solid}]
table {%
0 0
0.375 0.125
0.663273277230987 0.270833333333333
0.792836646413868 0.307327925667403
0.844574996645066 0.296598397840474
0.860351109905479 0.276794545004682
0.857390747843526 0.263279779482592
0.843966797159169 0.256361720856604
0.826078284045052 0.253073704124903
0.808583424889604 0.251504749039092
0.794171625417433 0.250743363117507
0.783290816111822 0.25036919445775
0.775297121233116 0.250183915122226
0.76940657314485 0.250091771401495
0.765012733835067 0.250045835147957
0.761694254791917 0.250022903915135
0.759161093440392 0.250011448284955
0.757210247320205 0.250005723159432
0.755696820577827 0.250002861317669
0.754515562815571 0.250001430589243
0.753588864798075 0.250000715276202
0.752858760447447 0.25000035763324
0.752281476972655 0.250000178815341
0.751823646610724 0.250000089407335
0.751459626537308 0.25000004470358
0.751169572861194 0.250000022351767
0.750938037108693 0.250000011175877
0.750752930676517 0.250000005587937
0.750604752277759 0.250000002793968
0.750486005882145 0.250000001396984
0.750390758320805 0.250000000698492
0.750314300421356 0.250000000349246
0.750252885656258 0.250000000174623
0.750203527350242 0.250000000087311
0.750163840487537 0.250000000043656
0.750131917735146 0.250000000021828
0.750106231886977 0.250000000010914
0.750085558827679 0.250000000005457
0.750068916514635 0.250000000002728
0.750055516520427 0.250000000001364
0.750044725463313 0.250000000000682
0.750036034248652 0.250000000000341
0.750029033498905 0.250000000000171
0.750023393899477 0.250000000000085
0.750018850458094 0.250000000000043
0.750015189885664 0.250000000000021
0.750012240470971 0.250000000000011
0.75000986395048 0.250000000000005
0.750007948976176 0.250000000000003
0.750006405864637 0.250000000000001
0.75000516237424 0.250000000000001
0.750004160307961 0.25
0.75000335277954 0.25
0.750002702012926 0.25
0.750002177570599 0.25
0.750001754926945 0.25
0.750001414319301 0.25
0.750001139822524 0.25
0.750000918603519 0.25
0.750000740320684 0.25
0.750000596640058 0.25
0.750000480845491 0.25
0.750000387524524 0.25
0.750000312315292 0.25
0.750000251702569 0.25
0.750000202853416 0.25
0.750000163484744 0.25
0.750000131756584 0.25
0.750000106186078 0.25
0.750000085578162 0.25
0.750000068969716 0.25
0.750000055584538 0.25
0.750000044797073 0.25
0.750000036103169 0.25
0.750000029096521 0.25
0.750000023449676 0.25
0.750000018898731 0.25
0.750000015231002 0.25
0.75000001227508 0.25
0.750000009892822 0.25
0.750000007972897 0.25
0.750000006425575 0.25
0.750000005178547 0.25
0.750000004173533 0.25
0.750000003363565 0.25
0.750000002710789 0.25
0.750000002184699 0.25
0.750000001760709 0.25
0.750000001419004 0.25
0.750000001143614 0.25
0.75000000092167 0.25
0.7500000007428 0.25
0.750000000598643 0.25
0.750000000482462 0.25
0.750000000388829 0.25
0.750000000313369 0.25
0.750000000252553 0.25
0.750000000203539 0.25
0.750000000164038 0.25
0.750000000132203 0.25
0.750000000106546 0.25
};
\addlegendentry{Mann (reset)}
\addplot [semithick, black, mark=x, mark size=3, mark options={solid}, only marks, forget plot]
table {%
0.75 0.25
};
\end{axis}

\end{tikzpicture}
\clearpage{}%

    \caption{Generated sequences $(x_n^{\mathcal{F}})$ on the function
    sequence~\eqref{eq:dim2-function-sequence} without dampening.}
    \label{fig:dim2-values-undamp}
  \end{subfigure}
  \hfill
  \begin{subfigure}[t]{.45\textwidth}
    \centering
\clearpage{}%
\begin{tikzpicture}[scale=0.7]

\definecolor{crimson2143940}{RGB}{214,39,40}
\definecolor{violet28210083}{RGB}{148,0,211}
\definecolor{darkgray176}{RGB}{176,176,176}
\definecolor{darkorange25512714}{RGB}{255,127,14}
\definecolor{darkturquoise23190207}{RGB}{23,190,207}
\definecolor{forestgreen4416044}{RGB}{44,160,44}
\definecolor{lightgray204}{RGB}{204,204,204}

\begin{axis}[
legend cell align={left},
legend style={
  fill opacity=0.8,
  draw opacity=1,
  text opacity=1,
  at={(0.03,0.97)},
  anchor=north west,
  draw=lightgray204
},
tick align=outside,
tick pos=left,
x grid style={darkgray176},
xlabel={\(\displaystyle x\)},
xmin=0, xmax=1,
xtick style={color=black},
y grid style={darkgray176},
ylabel={\(\displaystyle y\)},
ymin=0, ymax=1,
ytick style={color=black}
]
\addplot [semithick, crimson2143940, mark=x, mark size=3, mark options={solid}]
table {%
0 0
0.75 0.25
0.769111493137114 0.295030808940326
0.758740009391953 0.269051373602029
0.753969399023836 0.255957566324907
0.751900750539202 0.251947243683122
0.750946914353436 0.250685628723688
0.750485577051012 0.250250953964589
0.75025461130967 0.250093861311242
0.750135913769046 0.250035615155214
0.750073635619525 0.250013660590296
0.750040400289315 0.250005285507392
0.75002240901281 0.250002060172609
0.750012549710712 0.250000808179501
0.750007088635035 0.250000318850066
0.750004034948061 0.250000126441755
0.750002312867694 0.250000050374971
0.750001334259586 0.250000020155187
0.75000077425417 0.250000008095786
0.750000451739074 0.250000003263651
0.750000264901497 0.250000001320107
0.750000156072297 0.250000000535645
0.750000092359328 0.250000000217982
0.750000054882302 0.250000000088953
0.750000032739774 0.250000000036394
0.750000019602722 0.250000000014927
0.750000011777928 0.250000000006136
0.750000007099953 0.250000000002528
0.750000004293438 0.250000000001044
0.750000002604073 0.250000000000432
0.750000001583943 0.250000000000179
0.750000000966072 0.250000000000074
0.750000000590761 0.250000000000031
0.750000000362159 0.250000000000013
0.750000000222551 0.250000000000005
0.750000000137076 0.250000000000002
0.750000000084616 0.250000000000001
0.750000000052345 0.25
0.750000000032448 0.25
0.750000000020154 0.25
0.750000000012543 0.25
0.75000000000782 0.25
0.750000000004884 0.25
0.750000000003056 0.25
0.750000000001915 0.25
0.750000000001202 0.25
0.750000000000756 0.25
0.750000000000476 0.25
0.7500000000003 0.25
0.75000000000019 0.25
0.75000000000012 0.25
0.750000000000076 0.25
0.750000000000048 0.25
0.750000000000031 0.25
0.750000000000019 0.25
0.750000000000012 0.25
0.750000000000008 0.25
0.750000000000005 0.25
0.750000000000003 0.25
0.750000000000002 0.25
0.750000000000001 0.25
0.750000000000001 0.25
0.750000000000001 0.25
0.75 0.25
0.75 0.25
0.75 0.25
0.75 0.25
0.75 0.25
0.75 0.25
0.75 0.25
0.75 0.25
0.75 0.25
0.75 0.25
0.75 0.25
0.75 0.25
0.75 0.25
0.75 0.25
0.75 0.25
0.75 0.25
0.75 0.25
0.75 0.25
0.75 0.25
0.75 0.25
0.75 0.25
0.75 0.25
0.75 0.25
0.75 0.25
0.75 0.25
0.75 0.25
0.75 0.25
0.75 0.25
0.75 0.25
0.75 0.25
0.75 0.25
0.75 0.25
0.75 0.25
0.75 0.25
0.75 0.25
0.75 0.25
0.75 0.25
0.75 0.25
};
\addlegendentry{Damp. Kleene ($x_1$)}
\addplot [semithick, darkturquoise23190207, mark=x, mark size=3, mark options={solid}]
table {%
0.875 0.875
0.806977845339618 0.529837087637276
0.771330210504779 0.345435970651573
0.758796840956141 0.274136626909536
0.753970291045016 0.256184898662225
0.751900758972133 0.251951570243785
0.750946914401226 0.250685664046144
0.750485577051174 0.25025095408835
0.750254611309671 0.250093861311427
0.750135913769046 0.250035615155214
0.750073635619525 0.250013660590296
0.750040400289315 0.250005285507392
0.75002240901281 0.250002060172609
0.750012549710712 0.250000808179501
0.750007088635035 0.250000318850066
0.750004034948061 0.250000126441755
0.750002312867694 0.250000050374971
0.750001334259586 0.250000020155187
0.75000077425417 0.250000008095786
0.750000451739074 0.250000003263651
0.750000264901497 0.250000001320107
0.750000156072297 0.250000000535645
0.750000092359328 0.250000000217982
0.750000054882302 0.250000000088953
0.750000032739774 0.250000000036394
0.750000019602722 0.250000000014927
0.750000011777928 0.250000000006136
0.750000007099953 0.250000000002528
0.750000004293438 0.250000000001044
0.750000002604073 0.250000000000432
0.750000001583943 0.250000000000179
0.750000000966072 0.250000000000074
0.750000000590761 0.250000000000031
0.750000000362159 0.250000000000013
0.750000000222551 0.250000000000005
0.750000000137076 0.250000000000002
0.750000000084616 0.250000000000001
0.750000000052345 0.25
0.750000000032448 0.25
0.750000000020154 0.25
0.750000000012543 0.25
0.75000000000782 0.25
0.750000000004884 0.25
0.750000000003056 0.25
0.750000000001915 0.25
0.750000000001202 0.25
0.750000000000756 0.25
0.750000000000476 0.25
0.7500000000003 0.25
0.75000000000019 0.25
0.75000000000012 0.25
0.750000000000076 0.25
0.750000000000048 0.25
0.750000000000031 0.25
0.750000000000019 0.25
0.750000000000012 0.25
0.750000000000008 0.25
0.750000000000005 0.25
0.750000000000003 0.25
0.750000000000002 0.25
0.750000000000001 0.25
0.750000000000001 0.25
0.750000000000001 0.25
0.75 0.25
0.75 0.25
0.75 0.25
0.75 0.25
0.75 0.25
0.75 0.25
0.75 0.25
0.75 0.25
0.75 0.25
0.75 0.25
0.75 0.25
0.75 0.25
0.75 0.25
0.75 0.25
0.75 0.25
0.75 0.25
0.75 0.25
0.75 0.25
0.75 0.25
0.75 0.25
0.75 0.25
0.75 0.25
0.75 0.25
0.75 0.25
0.75 0.25
0.75 0.25
0.75 0.25
0.75 0.25
0.75 0.25
0.75 0.25
0.75 0.25
0.75 0.25
0.75 0.25
0.75 0.25
0.75 0.25
0.75 0.25
0.75 0.25
0.75 0.25
};
\addlegendentry{Damp. Kleene ($x_2$)}
\addplot [semithick, darkorange25512714, mark=x, mark size=3, mark options={solid}]
table {%
0 0
0.375 0.125
0.534525161425502 0.191683782744626
0.60453054891518 0.209018527291101
0.641657487484532 0.216724340558741
0.664477553583759 0.222445473749409
0.679888892903718 0.226931415337955
0.690955192900905 0.230404630201095
0.699252508035723 0.233102223701986
0.705679447492157 0.23522527564004
0.710786927867125 0.236923966148938
0.714931409392261 0.238305961272659
0.718353409686661 0.239447920093794
0.721220793510869 0.240404836517531
0.723654045591727 0.24121671069248
0.725741718499509 0.241913112858137
0.727550238740263 0.242516269992986
0.72913033221339 0.243043160057682
0.73052135024727 0.243506947872784
0.731754251534199 0.243917983491177
0.732853700565422 0.244284507170953
0.733839571764697 0.244613155030061
0.734728045591232 0.2449093272974
0.735532419322048 0.245177460395331
0.736263714993648 0.245421230699511
0.736931140954886 0.245643709056864
0.737542446308484 0.245847479328598
0.738104195982513 0.246034730314423
0.738621986297299 0.246207327745902
0.739100615435686 0.246366871191019
0.739544219390406 0.24651473941784
0.739956381234213 0.246652126846183
0.740340219594089 0.246780073055792
0.740698460780845 0.246899486839497
0.741033497973477 0.24701116593727
0.741347440076336 0.247115813325499
0.741642152281502 0.247214050739905
0.741919289925936 0.24730642996254
0.742180326895512 0.247393442290568
0.742426579568765 0.247475526517982
0.742659227092633 0.247553075694463
0.742879328626217 0.247626442873481
0.74308783806607 0.247695946020819
0.74328561666983 0.247761872222521
0.743473443918308 0.247824481305628
0.743652026894867 0.247884008964657
0.743822008411819 0.247940669470418
0.743983974073883 0.247994658024509
0.744138458436664 0.248046152812146
0.74428595039191 0.248095316797256
0.744426897889928 0.248142299296613
0.744561712091955 0.248187237363966
0.744690771030773 0.248230257010245
0.744814422845855 0.248271474281944
0.74493298864936 0.248310996216449
0.745046765070937 0.248348921690309
0.745156026522365 0.24838534217412
0.745261027217175 0.248420342405724
0.745362002975487 0.248454000991828
0.745459172840091 0.248486390946696
0.745552740526313 0.248517580175437
0.745642895725165 0.248547631908388
0.74572981527671 0.248576605092236
0.745813664228424 0.248604554742808
0.745894596791408 0.248631532263803
0.74597275720571 0.248657585735237
0.746048280524643 0.248682760174881
0.746121293326746 0.248707097775582
0.74619191436303 0.24873063812101
0.746260255146228 0.248753418382076
0.746326420487992 0.248775473495997
0.746390508989299 0.248796836329766
0.746452613488719 0.248817537829573
0.746512821472704 0.248837607157568
0.746571215451562 0.248857071817187
0.746627873304412 0.248875957768137
0.746682868596046 0.248894289532015
0.74673627086832 0.24891209028944
0.746788145908411 0.24892938196947
0.746838555996058 0.248946185332019
0.746887560131655 0.248962520043885
0.74693521424691 0.24897840474897
0.746981571399592 0.248993857133197
0.747026681953732 0.249008893984577
0.747070593746551 0.24902353124885
0.747113352243201 0.249037784081067
0.74715500068037 0.249051666893457
0.747195580199645 0.249065193399882
0.747235129971494 0.249078376657164
0.747273687310599 0.249091229103533
0.747311287783258 0.249103762594419
0.747347965307458 0.249115988435819
0.747383752246209 0.249127917415403
0.747418679494657 0.249139559831552
0.747452776561438 0.249150925520479
0.747486071644735 0.249162023881579
0.747518591703412 0.249172863901137
0.747550362523597 0.249183454174532
0.747581408781059 0.24919380292702
0.747611754099665 0.249203918033222
0.747641421106218 0.249213807035406
};
\addlegendentry{Damp. Mann ($x_1$)}
\addplot [semithick, forestgreen4416044, mark=x, mark size=3, mark options={solid}]
table {%
0.875 0.875
0.473096740996309 0.334526362145138
0.544256821301222 0.233617745330045
0.605362903232776 0.21467431058337
0.641719276112866 0.217256820039673
0.664481584212609 0.22248410385644
0.679889127413472 0.226933751920473
0.690955205241144 0.230404754191597
0.699252508630004 0.233102229653762
0.705679447518603 0.235225275903208
0.710786927868221 0.236923966159776
0.714931409392303 0.238305961273078
0.718353409686663 0.239447920093809
0.721220793510869 0.240404836517532
0.723654045591727 0.24121671069248
0.725741718499509 0.241913112858137
0.727550238740263 0.242516269992986
0.72913033221339 0.243043160057682
0.73052135024727 0.243506947872784
0.731754251534199 0.243917983491177
0.732853700565422 0.244284507170953
0.733839571764697 0.244613155030061
0.734728045591232 0.2449093272974
0.735532419322048 0.245177460395331
0.736263714993648 0.245421230699511
0.736931140954886 0.245643709056864
0.737542446308484 0.245847479328598
0.738104195982513 0.246034730314423
0.738621986297299 0.246207327745902
0.739100615435686 0.246366871191019
0.739544219390406 0.24651473941784
0.739956381234213 0.246652126846183
0.740340219594089 0.246780073055792
0.740698460780845 0.246899486839497
0.741033497973477 0.24701116593727
0.741347440076336 0.247115813325499
0.741642152281502 0.247214050739905
0.741919289925936 0.24730642996254
0.742180326895512 0.247393442290568
0.742426579568765 0.247475526517982
0.742659227092633 0.247553075694463
0.742879328626217 0.247626442873481
0.74308783806607 0.247695946020819
0.74328561666983 0.247761872222521
0.743473443918308 0.247824481305628
0.743652026894867 0.247884008964657
0.743822008411819 0.247940669470418
0.743983974073883 0.247994658024509
0.744138458436664 0.248046152812146
0.74428595039191 0.248095316797256
0.744426897889928 0.248142299296613
0.744561712091955 0.248187237363966
0.744690771030773 0.248230257010245
0.744814422845855 0.248271474281944
0.74493298864936 0.248310996216449
0.745046765070937 0.248348921690309
0.745156026522365 0.24838534217412
0.745261027217175 0.248420342405724
0.745362002975487 0.248454000991828
0.745459172840091 0.248486390946696
0.745552740526313 0.248517580175437
0.745642895725165 0.248547631908388
0.74572981527671 0.248576605092236
0.745813664228424 0.248604554742808
0.745894596791408 0.248631532263803
0.74597275720571 0.248657585735237
0.746048280524643 0.248682760174881
0.746121293326746 0.248707097775582
0.74619191436303 0.24873063812101
0.746260255146228 0.248753418382076
0.746326420487992 0.248775473495997
0.746390508989299 0.248796836329766
0.746452613488719 0.248817537829573
0.746512821472704 0.248837607157568
0.746571215451562 0.248857071817187
0.746627873304412 0.248875957768137
0.746682868596046 0.248894289532015
0.74673627086832 0.24891209028944
0.746788145908411 0.24892938196947
0.746838555996058 0.248946185332019
0.746887560131655 0.248962520043885
0.74693521424691 0.24897840474897
0.746981571399592 0.248993857133197
0.747026681953732 0.249008893984577
0.747070593746551 0.24902353124885
0.747113352243201 0.249037784081067
0.74715500068037 0.249051666893457
0.747195580199645 0.249065193399882
0.747235129971494 0.249078376657164
0.747273687310599 0.249091229103533
0.747311287783258 0.249103762594419
0.747347965307458 0.249115988435819
0.747383752246209 0.249127917415403
0.747418679494657 0.249139559831552
0.747452776561438 0.249150925520479
0.747486071644735 0.249162023881579
0.747518591703412 0.249172863901137
0.747550362523597 0.249183454174532
0.747581408781059 0.24919380292702
0.747611754099665 0.249203918033222
0.747641421106218 0.249213807035406
};
\addlegendentry{Damp. Mann ($x_2$)}
\addplot [thick, black, mark=x, mark size=3, mark options={solid}, only marks, forget plot]
table {%
0.75 0.25
};
\addplot [thick, violet28210083, mark=x, mark size=3, mark options={solid}, only marks, forget plot]
table {%
0.875 0.875
};
\end{axis}

\end{tikzpicture}
\clearpage{}%

    \caption{Generated sequences $(x_n^{\mathcal{F}})$ on the function
    sequence~\eqref{eq:dim2-function-sequence} with dampening starting from $x_1=(0,0)$ and
    $x_2=(7/8,7/8)$ (marked as a violet cross), respectively.}
    \label{fig:dim2-values-damp}
  \end{subfigure}
  \caption{Sequences $(x_n^{\mathcal{F}})$ generated by different iteration iterations on the
  function sequence~\eqref{eq:dim2-function-sequence}. The fixpoint $\mu f$ of the limit function
  is marked as a black cross.}
  \label{fig:dim2-values}
\end{figure}
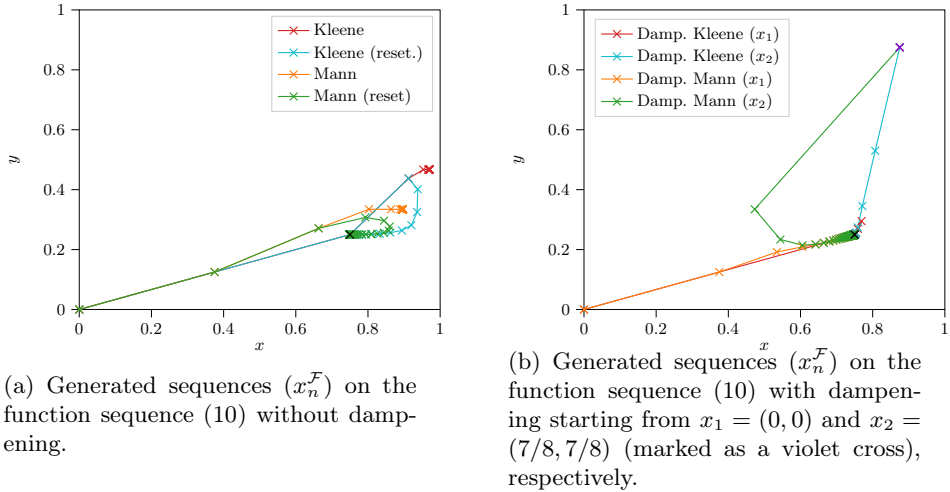

Lastly, we compare the performance of dampened iterations to resetting iterations on randomly generated
MDPs.
\Cref{fig:mdp-approx} shows the results of performing a dampened Mann iteration and an
undampened Kleene iteration with resets on sampled (state-value)-Bellman operators of one hundred
randomly generated normalised MDPs with 50 states (in equal amounts simple Markov chains, MCs with
five end-component, as well as simple MDPs and MDPs with five MECs) for which Kleene iteration on
the exact Bellman operator took less than 1000 steps to stop (we stop
when the update $<10^{-6}$).
The lines denoting the mean error of the value approximations suggest that Kleene iteration with
resets performs better overall, but not significantly better.
The areas showing the 90th percentile and maximum, on the other hand, illustrate that Kleene
iteration with resets is more dependent on the quality of the currently sampled MDP (with error
spikes at points where the MDP were badly sampled) while the dampened Mann iteration reduces the
variance introduced by random sampling.

The main advantage of dampened iterations, when compared to resetting iterations, is their ability
to reuse knowledge gained from old approximations.
In particular, one step $n\mapsto n+1$ only takes one application of $T_n^{\mathcal{S}}(f_n,\cdot)$
while it takes $n+1$ applications of $T_k^{\mathcal{S}}(f_n,\cdot)$ in
a Kleene iteration with resets.
In practice, it might be unnecessary to compute every $x_n^{\mathcal{F}}$ in
a resetting iteration.
In an offline scenario, it might be sufficient to just compute $x_n^{\mathcal{F}}$ for one $n$
(i.e., performing Kleene iteration on one approximation $f_n$), but reinforcement learning is most
often treated in an online setting where live updates to the values are required as new samples are
generated.
Furthermore, even in an offline scenario, the above results showcase that the results of resetting
schemes vary highly with the quality of the current approximation, whence relying on more than a
single approximation $f_n$ might produce better results.
In the context of resetting schemes, this, of course, comes at a high cost of required runtime as is
clearly visible in \Cref{fig:mdp-approx-runtime}, comparing the runtime of a dampened Mann
iteration over 1000 approximation steps with Kleene iteration on $f_{1000}$ (which is slightly but
not significantly faster; note that the corresponding red line is
almost covered by the green line) and with Kleene iteration resetting every 100 or every 50 steps (resulting
in 10 and 20 performed iterations), respectively, on one hundred randomly generated MDPs per number
of states (again half of which were simple and again half of simple and non-simple MDPs were
Markov chains).
The variance in runtime for each iteration scheme was mainly a result of the number of actions
in the MDP (with Markov chains resulting in the lowest runtime). 

\begin{figure}
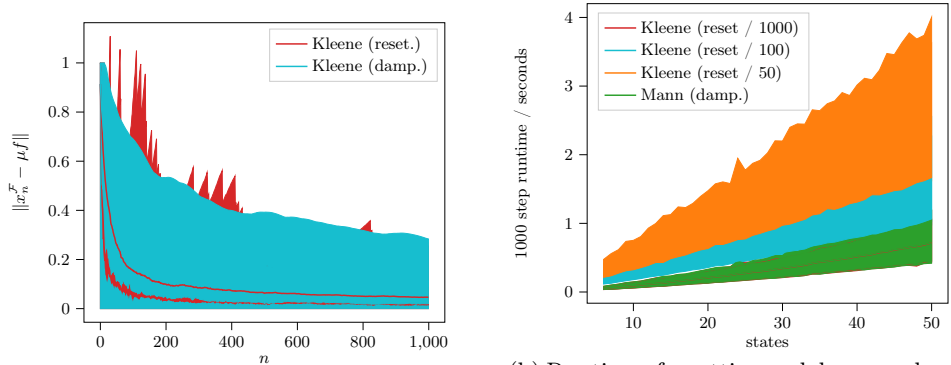

  \centering
  \begin{subfigure}[b]{.45\textwidth}
    \centering
\clearpage{}%

\clearpage{}%

    \caption{Runtime of resetting and dampened Mann scheme on 100 randomly generated MDPs per number
    of states (6 -- 50).
    The lines denote the mean runtime, the area the minimum to maximum runtime for the given scheme.
    Note that the red line is hidden beneath the green one due to overlap.}
    \label{fig:mdp-approx-runtime}
  \end{subfigure}
  \caption{Results on sampled Bellman operators of randomly generated MDPs.}
\end{figure}

\section{Conclusion}
\label{se:conclusion}

\paragraph{Related Work.}
There has been a significant amount of research on convergence properties of Mann iterations
of non-expansive maps. However, most of the existing literature deals
with the undampened case and only a single function $f$, i.e., with a
scheme defined by $x_{n+1} = \alpha_n x_n + (1-\alpha_n)f(x_n)$.

It is known
\cite{borwein_reich_shafrir_1992} that in every Banach space and for
every non-expansive $f$ with at least one fixpoint this scheme
produces a sequence which is $f$-as\-ymp\-tot\-i\-cal\-ly regular (i.e.\ $\|x_n - f(x_n)\| \to 0$)
for all parameter sequences $(\alpha_n)$ with
$\sum (1-\alpha_n) = \infty$. In general the sequence does not need to
converge. (Weak) convergence is shown under additional assumptions on
the map $f$ and/or the surrounding space in various places (see for
example \cite{REICH1979274,b:iterative-approximation-fixed-points,alma99371219572606441}).

Dampened versions similar to ours have been considered in
\cite{KIM200551} where the authors show convergence to a fixpoint
under strong additional assumptions on the parameter sequences in a uniformly
smooth Banach space and in \cite{Yao2008StrongCO} where convergence is
shown for more general parameter sequences in Hilbert spaces. 

Here, in addition,
we deal with the case where
the function $f$ can only be approximated. Motivated by the
applicability to MDPs, we consider the case of monotone non-expansive
functions in finite-dimensional Banach spaces
 equipped
with the
supremum norm, which are neither uniformly smooth nor Hilbert spaces.
They are not even uniformly convex as is assumed e.g. in \cite{REICH1979274}.
We also
introduced a probabilistic setting and devised a generic algorithm,
which applies in many situations and guarantees almost sure
convergence to the least fixpoint of $f$.

As mentioned earlier, our work is inspired by the integration of reinforcement learning algorithms with fixpoint theory
\cite{WD:QL,ma:prioritized-sweeping,r:sarsa,kv:certainty-equivalent},
and in particular by the Dyna-Q algorithm
\cite{s:dyna-integrated-architecture,s:planning-incremental-dyn-prog,kaelbling1996reinforcement}.
Reinforcement learning techniques are by now a central ingredient in
machine learning and we plan to strengthen the foundations behind this
method.

There is some similarity of our results to the theory of stochastic
approximation
\cite{bt:neuro-dynamic-programming,b:random-iterative-models,kc:stochastic-approximation-methods},
going back to \cite{rm:stochastic-approximation}, which has also been
employed in the correctness proof of Q-learning \cite{WD:QL}.  Stochastic
approximation deals with a root finding problem where the function
contains an error term with expected value zero. In that line of work,
the function itself is stochastic with known expected value, while we
separate the (approximated) fixpoint theory from the probabilistic
setting by checking that we converge almost surely via sampling. More
importantly, the focus in stochastic approximation is to consider
either contractions (with unique fixpoints) or convergence to
\emph{some} fixpoint, while the case of convergence to the
\emph{least} fixpoint is not treated. Hence there is no machinery for
leaving a fixpoint once it is reached, such as we have by employing a
dampening factor.

In order to obtain our results on MDPs, the notion of end-components
is quite fundamental. For an introduction to end-components see
\cite{bk:principles-mc}, while the connection of non-uniqueness of
fixpoints of MDP functions to end-components was exploited in
\cite{bccfkkpu:mdps-learning,hm:interval-iteration-mdps} in order to
compute the correct expected return by collapsing maximal end
components. In \cite{bccfkkpu:mdps-learning} the authors assume -- as
we do -- that MDPs are not known exactly but are sampled. They present
a different solution by collapsing maximal end-components, which is
unnecessary in our approach.

\paragraph{Numerical Experiments.}
We made several numerical experiments (see  \Cref{se:experiments}).
For unknown functions the outcome can be
summarized as follows: in terms of precision, the results are
comparable to ``resetting'', where -- for some $n$ -- we perform the
iteration from $0$ with the current approximation $f_n$. Compared to
our suggested form of iteration, resetting has the disadvantage that
the results vary highly with the quality of the current approximation
and that intermediate results are not available.

\paragraph{Future Work.}
For future work we are in particular interested in the case of
parameter sequences where $\alpha_n$ converges to one and the question
is whether we can still guarantee convergence to $\mu f$. This would
be interesting, since such parameters are for instance used in the
context of Q-learning \cite{WD:QL}. Our long-term aim is to devise a
fixpoint theory of approximated functions for which a large number of
reinforcement learning algorithms are special cases. Here we
concentrated on model-based reinforcement learning, for model-free
reinforcement learning we need a theory that allows updates with a
sampled value rather than with the current best approximation and we
plan to work with the weaker assumption that the limit-average of the
functions $f_n$ converges to $f$ (rather than the functions
themselves). This has some overlap to stochastic approximation
  theory as discussed above, even though we do not consider stochastic
  functions. Hence we plan to partially build on the results obtained
  in that area.

To properly match reinforcement learning algorithms we
will look into chaotic and asynchronous iteration~\cite{c:asynchronous-fixed-point,FROMMER2000201} where one can iterate
at different speeds at various states. This is useful when traversing
an MDP or stochastic system, making updates on-the-fly and
locally when better estimates are obtained.

We will also investigate how to iterate to the greatest fixpoint
rather than the least, either by dualizing or adapting ideas from
\cite{KIM200551,hz:modified-mann-iteration} on iterating to the
fixpoint closest to a given value. Including rewards from different
domains, such as negative rewards, is also a direction of future
work.

We plan to identify more cases where approximated fixpoint iteration
works out of the box. For instance, we mentioned that simple
stochastic games are covered by the results in
\Cref{se:mann-error-algo}, but there we needed a way to enforce fast
convergence.
We will investigate whether the case of simple stochastic games (and
related applications) works with the standard iteration scheme. Due to
\Cref{ex:cex-flip-map} it is clear that this
will not be true for all approximated functions, but we plan to find
requirements weaker than the current ones.

Moreover we will further look into error estimates, making guarantees
of the form ``with probability $p$ the error is below $\epsilon$''. In
the case of MDPs, even when they are known exactly, determining
upper/lower bounds and thus error estimates already requires some
non-trivial techniques, such as the results in
\cite{hm:interval-iteration-mdps,qk:sound-value-iteration,hk:optimistic-value-iteration}. It
is our aim to generalize these results to the case where the
probabilities associated to the MDP are not known exactly and, in this
case, we expect to obtain probabilistic error estimates.

\smallskip

\subsubsection*{Acknowledgements.}
This work is partially supported by the European Union –
NextGenerationEU under the National Recovery and Resilience Plan
(NRRP) - Call PRIN 2022 PNRR - Project P2022HXNSC "Resource Awareness
in Programming: Algebra, Rewriting, and Analysis" and by the Deutsche
Forschungsgemeinschaft (DFG, German Research Foundation) -- project
number 434050016 (SpeQt).

\subsubsection*{Disclosure of interests.}
The authors have no competing interests to declare that are relevant
to the content of this article.

\bibliographystyle{plain}
\bibliography{references}
\end{document}